\documentclass{article}

\usepackage{microtype}
\usepackage{graphicx}
\usepackage{subcaption}
\usepackage{booktabs} 

\usepackage{hyperref}

\usepackage[preprint]{icml2026}

\usepackage{amsmath}
\usepackage{amssymb}
\usepackage{mathtools}
\usepackage{amsthm}

\usepackage{graphicx} 
\usepackage{multirow}
\usepackage{makecell}
\usepackage{wrapfig}
\usepackage{enumitem}
\usepackage{listings}

\usepackage{xspace}
\usepackage{url}
\usepackage{bm}
\usepackage{bbm}
\usepackage{color}
\usepackage[dvipsnames]{xcolor}

\usepackage{pifont}

\usepackage[capitalize,noabbrev]{cleveref}

\theoremstyle{plain}
\newtheorem{theorem}{Theorem}[section]

\newtheorem{lemma}[theorem]{Lemma}
\newtheorem{corollary}[theorem]{Corollary}
\theoremstyle{definition}

\theoremstyle{remark}

\newcommand{\eg}{\textit{e.g.},~}
\newcommand{\ie}{\textit{i.e.},~}

\newcommand{\vx}{\bm{x}}

\newcommand{\vth}{\bm{\theta}}

\def\eg{\emph{e.g.,}\xspace}

\def\ie{\emph{i.e.,}\xspace}
\def\etal{\emph{et al.,}\xspace}

\definecolor{TargetColor}{RGB}{178,34,34}    
\definecolor{AdvColor}{RGB}{25,25,112}      

\newcommand{\targettask}[1]{\textcolor{TargetColor}{#1}}
\newcommand{\advtask}[1]{\textcolor{AdvColor}{#1}}

\usepackage[textsize=tiny]{todonotes}

\begin{document}

\twocolumn[
  \icmltitle{From Parameters to Feature Space: Task Arithmetic for Backdoor Mitigation in Model Merging}



  \icmlsetsymbol{equal}{*}

  \begin{icmlauthorlist}
    \icmlauthor{Zhenqian Zhu}{hitsz}
    \icmlauthor{Yamin Hu}{hitsz}
    \icmlauthor{Yiya Diao}{hitsz}
    \icmlauthor{Weixiang Li}{SZU}
    \icmlauthor{Haodong Li}{SZU}
    \icmlauthor{Wenjian Luo}{hitsz}
  \end{icmlauthorlist}

  \icmlaffiliation{hitsz}{Institute of Cyberspace Security, School of Computer Science and Technology, Harbin Institute of Technology, Shenzhen, 518055, China.}
  \icmlaffiliation{SZU}{Shenzhen Key Laboratory of Media Security, Shenzhen University, Shenzhen, 518060, China.}

  \icmlcorrespondingauthor{Wenjian Luo}{luowenjian@hit.edu.cn}

  \icmlkeywords{Machine Learning, ICML}

  \vskip 0.3in
]



\printAffiliationsAndNotice{}  


\begin{abstract}
Model merging (MM) has gained significant attention as a cost-effective approach to integrate multiple task-specific models into a unified model. However, recent 
work reveals that MM is highly susceptible to backdoor attacks. Existing defenses based on task arithmetic often fail to eliminate backdoors without substantially 
degrading clean-task performance, owing to their reliance on direct parameter-space editing. To address this gap, we propose Linear Feature Path Minimization (LFPM), 
a backdoor mitigation framework for model merging, which introduces an anti-backdoor task vector into the backdoored merged model. Unlike prior approaches, LFPM 
formulates the backdoor robustness of the merged model from a unified feature-space perspective under the Cross-Task Linearity (CTL) framework, which leverages 
the approximate linearity of features across tasks. This perspective guides the optimization of the anti-backdoor task to suppress backdoors while preserving 
clean-task performance. Furthermore, we introduce an effective optimization mechanism based on gradient accumulation and loss path-integral, ensuring robust 
backdoor suppression along the interpolation path. Extensive experiments demonstrate that LFPM consistently exhibits strong robustness against backdoor attacks 
in both full fine-tuning and Parameter-Efficient Fine-Tuning (PEFT) settings.
\end{abstract}


\section{Introduction}

Pre-trained models~\cite{dosovitskiy2020image,radford2021learning,devlin2019bert,touvron2023llama} constitute a fundamental component of modern machine learning 
systems, typically serving as backbones for downstream tasks. However, in multi-task scenarios, maintaining an individual model for each task incurs a
significant computational overhead and leads to both knowledge redundancy and inefficient storage. To address this, Model Merging (MM)~\cite{yadav2024survey} 
has emerged as a cost-effective technique that integrates multiple fine-tuned task-specific models into a single unified model, consolidating knowledge 
from foundation models across diverse domains. This approach naturally extends to a broad range of fine-tuning paradigms, including Parameter-Efficient Fine-Tuning (PEFT)~\cite{han2024parameter} 
such as Low-Rank Adaptation (LoRA)~\cite{hu2022lora,zanella2024low}, which enables effective task adaptation without updating the entire model.

Despite these advantages, recent studies~\cite{zhang2024badmerging,yin2024lobam,wang2025purity, yuan2025merge} reveal that model merging introduces a new attack surface, 
particularly for backdoor attacks. In this setting, an adversary can implant a backdoor into the merged model by contributing only a compromised task-specific 
component, without requiring access to the overall merging process. This threat is further exacerbated in open-source platforms~\cite{jain2022hugging}, where 
models are frequently downloaded, fine-tuned, and re-uploaded as reusable components. When merged-model developers incorporate modules obtained from third-party sources, 
these modules may potentially contain backdoors. As a result, such reliance on third-party model resources can facilitate the propagation of compromised components,
thereby posing significant risks to the integrity and reliability of the resulting merged models.

However, only a few backdoor defense methods have been specifically developed for the model merging setting~\cite{pawlak2025backdoor, hsu2025badtv}. 
Recent approaches~\cite{pawlak2025backdoor,hsu2025badtv} typically conceptualize backdoor behaviors as task vectors within a task arithmetic framework~\cite{ilharco2022editing}, 
where malicious backdoors are injected via vector addition, and backdoor mitigation through vector subtraction. This perspective offers a new lens for understanding and 
designing backdoor attacks and defenses in model merging. Nevertheless, direct parameter-space editing~\cite{pawlak2025backdoor} makes it challenging to achieve 
a favorable trade-off between maintaining clean performance and removing backdoors, due to the difficulty of disentangling backdoor-related parameters from those 
essential for clean task functionality.

Recent studies on Mode Connectivity (MC)~\cite{garipov2018loss,zhou2023going,zhou2024emergence} have revealed an intriguing linear phenomenon in the pretraining-finetuning 
paradigm, referred to as Cross-Task Linearity (CTL)~\cite{zhou2024emergence}. Models initialized from the same pretrained checkpoint and fine-tuned on different tasks
exhibit a near-linear behavior: when interpolating the parameters of two fine-tuned models, the features at each layer of the weight-interpolated model closely 
approximate the linear interpolation of the features from the individual models. Motivated by CTL, we investigate whether parameter-space task arithmetic can be 
approximated as controlled feature-space operations. This insight enables us to optimize an anti-backdoor task for backdoor mitigation from a unified feature-level 
perspective, thereby overcoming the limitations of existing defenses that rely directly on parameter-space interventions.

Building on this insight, we propose \textbf{Linear Feature Path Minimization (LFPM)}, a novel defense proposed for the model merging setting. 
\emph{The core idea is to construct an anti-task vector via an anti-backdoor task, and merge it with a backdoored merged model to obtain a new purified merged model}. 
Notably, unlike existing methods that directly edit models in parameter space, LFPM guides the optimization of the anti-backdoor task from a unified feature-space 
perspective, achieving a principled balance between backdoor robustness and clean-task performance.

LFPM consists of two stages. \textbf{(1) Adversarial Feature Extraction via Subspace Partitioning.} In this stage, LFPM extracts adversarial feature vectors
by partitioning the feature space, effectively disentangling adversarial perturbation features from clean semantic features. Unlike prior methods~\cite{xu2024towards,wei2023shared}, 
this approach does not require input-space trigger synthesis or adversarial perturbation generation. Instead, adversarial features are encoded via learned visual prompts~\cite{zhou2022learning,jia2022visual},
thereby preserving the semantic integrity of the original samples.
\textbf{(2) Anti-Backdoor Task Vector Generation.} We first reformulate the problem as a Sharpness-Aware Minimization (SAM)~\cite{foret2020sharpness}-like optimization 
problem under the CTL framework, aiming to minimize sharpness (curvature) along the parameter interpolation path from the original backdoored merged model to 
anti-backdoor model. Based on the CTL framework, this is equivalent to minimizing sharpness along the feature interpolation path. Furthermore, feature deviations 
induced by CTL are translated into worst-case perturbations in the prompt space, enhancing the suppression of backdoor behaviors. Moreover, we introduce a gradient 
accumulation approximation mechanism to facilitate the optimization objective. The formulation naturally supports a loss path-integral procedure.
Together, these mechanisms ensure robust backdoor mitigation throughout the entire interpolation path.
Overall, our contributions are as follows:
\vspace{-3mm}
\begin{itemize}
    \item \textbf{Feature-Space Anti-Backdoor Framework under CTL.} We propose a novel unified feature-space perspective for backdoor mitigation in model merging
    under the CTL framework, supported by both theoretical analysis and empirical validation.  
    \item \textbf{Linear Feature Path Minimization.} We design a two-stage defense that effectively mitigates backdoor effects while preserving clean-task 
    performance. The method eliminates the need for generating input-space adversarial perturbations or performing explicit parameter-space interpolation during 
    training, and naturally supports the loss path-integral mechanism. 
    \item \textbf{Extensive Empirical Validation.} We empirically demonstrate that LFPM consistently achieves strong backdoor mitigation across both full fine-tuning 
    and PEFT settings such as LoRA, with stable performance along the parameter interpolation path. 
\end{itemize}
\section{Related Work}
\label{section:related_work}

\textbf{Model Merging.} Model merging~\cite{yadav2024survey,yang2024model} constructs a unified model by combining multiple task-specific models 
with distinct capabilities, without requiring access to the original training data. This approach not only improves task-specific performance but also 
significantly reduces computational overhead and storage inefficiency through effective knowledge transfer and inheritance.
Existing approaches primarily focus on mitigating the performance degradation caused by task conflicts. Representative approaches include Simple Averaging~\cite{wortsman2022model}, 
Task Arithmetic~\cite{ilharco2022editing}, Ties-Merging~\cite{yadav2023ties}, Della-Merging~\cite{deep2024della} among others. 

\textbf{Backdoor Attacks and Defenses.} Zhang~\etal~\cite{zhang2024badmerging} are the first to investigate backdoor attacks in model merging, revealing a novel 
attack surface where an adversary injects malicious backdoor behavior by contributing a compromised model. The merged model may inherit the embedded backdoor, 
which can be persistently triggered by the inputs containing the backdoor trigger on an adversary-specified downstream task. Yin et al.~\cite{yin2024lobam} further 
extend this threat model to the Parameter-Efficient Fine-Tuning (PEFT) setting, considering adversaries with constrained computational resources. 
Subsequently, several defenses tailored to model merging have been proposed~\cite{pawlak2025backdoor}. Recent works~\cite{pawlak2025backdoor,hsu2025badtv} 
further conceptualize backdoor behaviors as task vectors within the task arithmetic framework~\cite{ilharco2022editing} and mitigate them via vector subtraction.
However, directly editing parameters in the parameter space poses a challenge: it is difficult to balance clean-task performance with effective backdoor removal 
due to the difficulty of disentangling backdoor-related parameters from those essential for clean functionality.



\textbf{Mode Connectivity.} Mode Connectivity (MC)~\cite{draxler2018essentially,freeman2016topology} is a fundamental phenomenon in deep learning, where 
different minima in the loss landscape can be connected by a path of nearly constant loss. In certain cases, particularly when models are briefly co-trained before 
independent fine-tuning, these paths are approximately linear, forming Linear Mode Connectivity (LMC)~\cite{garipov2018loss}. Extending this notion, Layerwise 
Linear Feature Connectivity (LLFC)~\cite{zhou2023going} demonstrates that the linear relationship can hold at the feature level across each layer. Cross-Task Linearity
(CTL)~\cite{zhou2024emergence} further generalizes LLFC by extending the linear relationship to models that share a common pretrained checkpoint but are independently 
fine-tuned on different downstream tasks. Formally, for downstream tasks $\mathbb{D}_{i}$ and $\mathbb{D}_{j}$ and layer $\ell$, $\forall \alpha \in \left [0,1 \right ]$,
\begin{equation}
    \label{eq:cross_task_linearity}
    \setlength{\abovedisplayskip}{5pt}
    \setlength{\belowdisplayskip}{5pt}
    f^{(\ell)}(\alpha \theta_{i} + (1-\alpha)\theta_{j}) \approx \alpha f^{(\ell)}(\theta_{i}) + (1-\alpha) f^{(\ell)}(\theta_{j}),
\end{equation}
which indicates that features of the weight-interpolated model closely match the linear combination of features from individual models.

\section{Preliminary}
\label{section:preliminary}
In this section, we first introduce the threat model for backdoor attacks in model merging from a task arithmetic perspective. 
We then present the defense objective of the defender, and reformulate it as a practical surrogate optimization.
Finally, we analyze the stability factors of Cross-Task Linearity (CTL), which serve as the theoretical foundation for our work. 
For convenience, all notations used in this paper are summarized in Appendix~\ref{subsection:summary_of_symbols}.

\subsection{Threat Model}
\label{section:threat_model}

Let $f:\mathcal{X}\times\Theta\to\mathcal{Y}$ be a neural network that maps an input $\vx\in\mathcal{X}$ to an output in $\mathcal{Y}$, parameterized by weights 
$\vth\in\Theta$, where $\mathcal{X} \subseteq \mathbb{R}^d$, $\Theta \subseteq\mathbb{R}^m$, and $\mathcal{Y}\subseteq\mathbb{R}^c$.
Given a pre-trained model with parameters $\vth_0$, we consider $k$ downstream tasks $\{\tau_i\}_{i=1}^k$, where each task $\tau_i$ is obtained by fine-tuning on 
its dataset $\mathcal{D}_i$, yielding task-specific parameters $\vth_i$.
In particular, for CLIP-like pre-trained models $\mathcal{M}=\left\{\mathcal{V},\mathcal{T}\right\}$~\cite{radford2021learning}, fine-tuning typically updates 
only the visual encoder $\mathcal{V}$, while the text encoder $\mathcal{T}$ remains fixed and defines a linear prediction head $h$ parameterized by $W$. 
Under this paradigm, each fine-tuned model can thus be decomposed as $f = h \circ g$, where $g$ denotes the task-adapted feature extractor (from $\mathcal{V}$),
and $h$ is a task-specific linear prediction head. This decomposition provides a convenient framework for analyzing and mitigating backdoors in the feature space.

Following prior work on task arithmetic~\cite{ilharco2022editing}, each task is represented by a task vector $\Delta \vth_{i} = \bm{\theta}_{i} - \bm{\theta}_{0}$. 
Model merging combines these task vectors via a weighted combination to obtain a unified model that performs well across multiple tasks:
\begin{equation}
\setlength{\abovedisplayskip}{3pt}
\setlength{\belowdisplayskip}{3pt}
\vth_m = \vth_0 + \Delta\vth_m  = \vth_0 + \sum_{i=1}^k\lambda_i\Delta\vth_i
\end{equation}
where $\Delta\vth_m = \sum_{i=1}^k\lambda_i\Delta\vth_i$ denotes the merged task vector and $\left\{\lambda_{j}\right\}$ are task-specific merging coefficients. 

\textbf{Adversary.} We consider a threat model where an adversary controls a single task $\tau_j$, referred to as the \textbf{adversary task}, by injecting 
backdoor samples into its fine-tuning dataset $\mathcal{D}_j$. As a result, the adversary task vector can be decomposed as $\Delta \bm{\theta}_{j} = \Delta \bm{\theta}_{c_j} + \Delta\bm{\theta}_{b}$,
where $\Delta \vth_{c_j}$ captures clean task-specific adaptation, $\Delta\bm{\theta}_{b}$ encodes backdoor-related behaviors. Importantly, the adversary does not 
have direct access to other tasks or the overall merging process. The adversary aims to attack a specific downstream task, termed the \textbf{target task}.
Through model merging, the backdoor component propagates into the merged task vector, resulting in $\Delta\bm{\theta}_{m} = \Delta \bm{\theta}_{c} + \Delta\bm{\theta}_{b}$,
where $\Delta \bm{\theta}_c$ denotes the clean aggregated component from all tasks. Accordingly, the resulting backdoored merged model is given by
\begin{equation}
\setlength{\abovedisplayskip}{3pt}
\setlength{\belowdisplayskip}{3pt}
\label{eq:backdoored_merged_model}
\bm{\theta}_{m} = \bm{\theta}_{0} + \Delta\bm{\theta}_{c} + \Delta\bm{\theta}_{b}.
\end{equation}
The adversary's objective is to ensure that, at inference time, triggered inputs are misclassified into an adversary-chosen target class for the target task.

\subsection{Problem Formulation}
\label{section:problem_formulation}

\textbf{Defender.} From the defender's perspective, typically the merged-model developer, a collection of task vectors $\{\Delta \vth_i\}_{i=1}^{k}$ is available, 
among which some may be contaminated by backdoor components. The defender has no prior knowledge of which task vectors are malicious. The objective is to 
mitigate backdoor effects in the merged model $\vth_m$ while preserving clean-task performance.
We assume the defender has access to a shadow dataset $\mathcal{D}_s$. An \emph{anti-backdoor task} is constructed by generating adversarial examples $x^{'} = x + \Delta x$ 
from $\mathcal{D}_s$, yielding the dataset $\mathcal{D}_{adv}$, which is then treated as the $(k\!+\!1)$-th task $\tau_{k+1}$ in the original task sequence.
The anti-backdoor task vector $\Delta \vth_{k+1}$ is obtained by fine-tuning the pre-trained model on the combined dataset $\mathcal{D}_s \cup \mathcal{D}_{adv}$.
As illustrated in Fig.~\ref{fig:overview}(a), this task vector is incorporated into the merged model via additive task arithmetic, yielding a new merged model:
\begin{equation}
\setlength{\abovedisplayskip}{5pt}
\setlength{\belowdisplayskip}{5pt}
\begin{aligned} 
    \vth_{m}^{'} &=\vth_{0} + (1-\alpha) \Delta \vth_{m} + \alpha \Delta \vth_{k+1} 
\end{aligned} 
\end{equation}
where $\alpha \in \left[0.0,1.0\right]$ controls the strength of intervention. 

\textbf{Defense Objective.} The defender aims to: (i) effectively eliminate the backdoor effect, (ii) preserve clean-task performance, and (iii) achieve 
consistent robustness along the entire parameter interpolation path. Formally, let $\mathcal{D}_c = \bigcup_{i:k}\mathcal{D}_i$ denotes the union of clean datasets, 
and $\mathcal{D}_b$ denotes the unknown backdoor dataset. Then, for any $\alpha \in \left [ 0.0, 1.0 \right ]$, the defense objective is formulated as: 
\begin{equation}
    \setlength{\abovedisplayskip}{5pt}
    \setlength{\belowdisplayskip}{5pt}
    \label{eq:defense_objective}
    \left\{\begin{matrix}
    \min_{\Delta \vth_{k+1}} \mathbb{E}_{(x,y) \sim \mathcal{D}_c} [\mathcal{L}\big ( f(x;\vth_{m}^{'}),y \big ) ], \\
    \max_{\Delta \vth_{k+1}} \mathbb{E}_{(x_b,y_t) \sim \mathcal{D}_b} [\mathcal{L}\big ( f(x_b;\vth_{m}^{'}), y_t \big)],  
    \end{matrix}\right.
\end{equation}
where $(x_b, y_t)$ denote backdoor inputs and target labels. This formulation captures the goal of minimizing clean-task risk while maximizing backdoor risk.

Directly optimizing Eq.~\ref{eq:defense_objective} is infeasible, as the defender has no access to the backdoor dataset $\mathcal{D}_{b}$. Existing post-training 
defenses for single-model~\cite{xu2024towards, wei2023shared} typically address this challenge by approximating backdoor inputs via trigger inversion or by 
replacing the backdoor risk with an adversarial risk.

\label{subsection:optimization_objective}
\textbf{Optimization Objective.}
Following prior work~\cite{wei2023shared}, we replace the backdoor risk with the adversarial risk and reformulate the defense objective as a surrogate optimization 
problem. For any $\alpha \in \left [ 0.0, 1.0 \right ]$, we have:
\begin{equation}
\label{eq:surrogate_optimization_objective}
\setlength{\abovedisplayskip}{5pt}
\setlength{\belowdisplayskip}{5pt}
\left\{\begin{matrix}
 \min_{\Delta \vth_{k+1}} \mathbb{E}_{(x,y) \sim \mathcal{D}_s} [\mathcal{L}\big ( f(x; \vth_{m}^{'}),y \big )], \\
 \min_{\Delta \vth_{k+1}} \mathbb{E}_{(x^{'},y) \sim \mathcal{D}_{adv}}[\mathcal{L}\big (f(x^{'}; \vth_{m}^{'}), y \big )].
\end{matrix}\right.
\end{equation}

However, such direct replacement can degrade clean-task performance due to the substantial gap between adversarial and backdoor samples, as reported in prior studies~\cite{wei2023shared,gao2023effectiveness}. 
To mitigate this, we require the surrogate samples $x^{'}$ to exhibit feature-level characteristics similar to those of backdoor samples: (i) activating
backdoor-related features in the backdoored merged model, and (ii) preserving clean semantic features in the purified model. 
Formally, these conditions are expressed as
\begin{equation}
\label{eq:backdoor_input_property}
\setlength{\abovedisplayskip}{5pt}
\setlength{\belowdisplayskip}{5pt}
\left\{\begin{matrix}
f(x^{'};\vth_{m}) \approx f(x_b;\vth_{m}) \\
f(x^{'};\vth_{m}^{'}) \approx f(x;\vth_{m}^{'}).
\end{matrix}\right.
\end{equation}
These conditions ensure that the surrogate optimization preserves clean-task performance.

\subsection{Deviation Bound for Cross-Task Linearity}
\label{section:stability_factors_for_ctl}

To bridge the optimization objective in Eq.~\ref{eq:surrogate_optimization_objective} from the parameter-interpolated path to the feature-interpolated path, 
we derive the CTL deviation bound and analyze its stability factors, which form the theoretical foundation of our method.
The proof of Theorem~\ref{theorem:CTL_deviation} is in Appendix~\ref{proof:deviation_of_CTL}.

\begin{theorem}[\textbf{Deviation Bound for CTL}]
\label{theorem:CTL_deviation}
Let $f:\mathbb{R}^{d} \to \mathbb{R}^{c}$ be a twice continuously differentiable function defined on an open convex set $\Theta \subseteq\mathbb{R}^m$. 
Consider two fine-tuned models $\vth_i, \vth_j \in \Theta$ originating from the same pretrained checkpoint, and define the linear interpolation path
\begin{equation}
\setlength{\abovedisplayskip}{5pt}
\setlength{\belowdisplayskip}{5pt}
\vth(t) = (1-t)\vth_i + t \vth_j = \vth_i + t\delta,
\end{equation}
where $\delta := \vth_j - \vth_i$ and $t \in \left[ 0,1 \right]$.

For any $\alpha\in(0,1)$, define the feature deviation for CTL as $\Delta(\alpha) := f(\vth(\alpha)) - (1-\alpha)f(\vth_{i}) - \alpha f(\vth_{j})$.
Then, the norm of $\Delta(\alpha)$ admits the path-integral form
\begin{equation}
\label{eq:deviation_for_CTL}
\setlength{\abovedisplayskip}{5pt}
\setlength{\belowdisplayskip}{5pt}
\left\| \Delta(\alpha) \right\| = \left\| \int_{0}^{1} k_{\alpha}(t) \delta^{T} \bigtriangledown^{2}f(\vth(t))\delta dt \right\|.
\end{equation}
where $k_{\alpha}(t) = \left\{\begin{matrix}(1-\alpha) t , 0 \le t\le \alpha, \\\alpha(1-t) , \alpha\le t\le 1. \end{matrix}\right.$

Moreover, the deviation is bounded by
\begin{equation}
\setlength{\abovedisplayskip}{8pt}
\setlength{\belowdisplayskip}{8pt}
\scalebox{0.85}{$
\left \| \Delta (\alpha) \right \|  \le \left \| \delta \right \|^2 \left[ (1-\alpha) \int_{0}^{\alpha} t H(t)dt + \alpha \int_{\alpha}^{1} (1-t) H(t) dt \right]
$}
\end{equation}
where $H(\theta(t)) = \bigtriangledown^{2}f(\theta(t))$ denotes the Hessian along the interpolation path.
\end{theorem}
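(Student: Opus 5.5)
The identity in \eqref{eq:deviation_for_CTL} is the standard Peano-kernel representation of linear-interpolation error. We apply it to the scalar-parameter function $\phi(t) := f(\vth(t))$ on $[0,1]$, where $f$ is read as a function of the parameters; the notation $f:\mathbb{R}^d\to\mathbb{R}^c$ in the statement is loose, and we work componentwise in $\mathbb{R}^c$. Since $\Theta$ is open and convex and $f$ is $C^2$, the path $\vth(t)$ stays in $\Theta$ and $\phi$ is $C^2$. The chain rule gives $\phi'(t) = \nabla f(\vth(t))\,\delta$ and $\phi''(t) = \delta^{T}\nabla^{2} f(\vth(t))\,\delta$, where the Hessian term is understood componentwise. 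The deviation is then exactly $\Delta(\alpha) = \phi(\alpha) - (1-\alpha)\phi(0) - \alpha\phi(1)$, and the task reduces to a one-dimensional calculus identity.

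The key step is to show
\[
\phi(\alpha) - (1-\alpha)\phi(0) - \alpha\phi(1) = -\int_0^1 k_\alpha(t)\,\phi''(t)\,dt .
\]
The right-hand side is the negative of the integral in \eqref{eq:deviation_for_CTL}; the sign disappears once we take norms. We verify this by splitting the integral at $t=\alpha$ and integrating by parts twice on each piece.

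On $[0,\alpha]$:
\[
(1-\alpha)\int_0^\alpha t\,\phi''(t)\,dt = (1-\alpha)\big[\alpha\phi'(\alpha) - \phi(\alpha) + \phi(0)\big].
\]
On $[\alpha,1]$:
\[
\alpha\int_\alpha^1 (1-t)\,\phi''(t)\,dt = \alpha\big[-(1-\alpha)\phi'(\alpha) + \phi(1) - \phi(\alpha)\big].
\]
When we add the two pieces, the $\phi'(\alpha)$ terms cancel. What remains is $(1-\alpha)\phi(0) + \alpha\phi(1) - \phi(\alpha) = -\Delta(\alpha)$. An equivalent route is to note that $k_\alpha$ is the Green's function of $-d^2/dt^2$ with Dirichlet boundary conditions on $[0,1]$. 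The only care needed here is getting the boundary terms and signs right; we expect this bookkeeping to be the main, if modest, obstacle.

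For the bound, we move the norm inside the integral using $\|\int v\| \le \int\|v\|$ and $k_\alpha(t)\ge 0$. We then apply $\|\delta^{T} H(t)\,\delta\| \le \|H(t)\|\,\|\delta\|^2$, where $H(t)$ in the displayed bound is interpreted as the operator norm $\|\nabla^2 f(\vth(t))\|$ (for vector-valued $f$, the norm of the symmetric bilinear map). Splitting the integral at $\alpha$ according to the two pieces of $k_\alpha$ then yields exactly the stated inequality with the factors $(1-\alpha)t$ and $\alpha(1-t)$.
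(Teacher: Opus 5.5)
Your argument is correct, and it reaches the kernel identity by a somewhat different route than the paper.

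The paper does not split at $\alpha$ until the last step. It writes second-order Taylor expansions with integral remainder for both $f(\vth(\alpha))$ and $f(\vth_j)$ about the left endpoint $\vth_i$. It notes that the value and gradient terms cancel in $\Delta(\alpha)$, since the affine combination reproduces them exactly. It then regroups $\int_0^\alpha(\alpha-t)\phi''\,dt-\alpha\int_0^1(1-t)\phi''\,dt$ into the two pieces of $k_\alpha$.

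You instead verify the Peano-kernel identity directly, integrating $k_\alpha\phi''$ by parts on $[0,\alpha]$ and $[\alpha,1]$. The derivative terms then cancel at the interior point $t=\alpha$ rather than at $t=0$. The two routes are equivalent, since the Taylor integral remainder is itself an integration by parts.

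Your version has some advantages:
\begin{itemize}
\item It is shorter.
\item It identifies $k_\alpha$ as the Dirichlet Green's function of $-d^2/dt^2$. This makes $k_\alpha\ge 0$ transparent, which is the fact that licenses pulling the norm inside the integral.
\item It keeps the sign straight: $\Delta(\alpha)=-\int_0^1 k_\alpha(t)\phi''(t)\,dt$.
\end{itemize}
The paper's written regrouping actually contains sign slips. Its final line displays the two kernel pieces with opposite signs. For $\phi(t)=t^2$ that would give $\alpha(1-\alpha)\lvert 2\alpha-1\rvert$ rather than the correct $\alpha(1-\alpha)$, even though the paper's stated conclusion in norm is right.

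For the ``moreover'' part, the paper only carries out the uniform case $\|H(t)\|\le\lambda_{\max}$. There it obtains $\alpha(1-\alpha)\lambda_{\max}\|\delta\|^2/2$ via $\int_0^1 k_\alpha=\alpha(1-\alpha)/2$. Your argument proves the pointwise bound as actually stated, with $H(t)$ made precise as the operator norm of the componentwise Hessian. The uniform bound then follows immediately.
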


\noindent\textit{Remark.} For simplicity, we treat the vector-valued output of $f$ as scalar-valued; all results hold element-wise for vector-valued outputs, 
including Corollary~\ref{corollary:gradient_approximation_along_linear_interpolation}.

Theorem~\ref{theorem:CTL_deviation} provides a path-integral formulation of the feature deviation for CTL. When the Hessian along the interpolation path is uniformly 
bounded, \ie $ \left \| H(\theta(t)) \right \| \le \lambda_{max}$, the deviation admits the simplified bound
\begin{equation}
\label{eq:deviation_bound_for_CTL}
\setlength{\abovedisplayskip}{5pt}
\setlength{\belowdisplayskip}{5pt}
\left \| \Delta (\alpha) \right \| \le \frac{\alpha (1-\alpha) \lambda_{max} \left \| \delta \right \|^2} {2}
\end{equation}
which aligns with prior results~\cite{zhou2024emergence}.

This bound highlights key factors governing CTL stability:
\begin{itemize}[leftmargin=*,topsep=0pt,itemsep=3pt,parsep=0pt]
    \item \textbf{The parameter distance $\left \| \delta \right\|$} between two finetuned models, reflecting the magnitude of parameter-space shift induced by 
    task-specific adaptation.
    \item \textbf{The landscape flatness}, captured by Hessian bound $H(t)$ along the interpolation path $\left\{\theta(t)\right\}_{t\in[0,1]}$.
    \item \textbf{The merging coefficient $\alpha$}, which modulates the contribution of curvature via the weight function $k_\alpha(t)$, peaking at $\alpha = 0.5$ 
    (\ie $\int_{0}^{1} k_{0.5}(t) = 0.25$).
\end{itemize}

In practice, task-specific fine-tuning inevitably increases $\left \| \delta \right\|$, 
which may violate CTL under large parameter shifts.
To mitigate this, Theorem~\ref{theorem:CTL_deviation} suggests two complementary strategies:
\textbf{Biased merging}, which selects a $\alpha$ closer to an endpoint to reduce $k_\alpha(t)$ and tighten the deviation bound;
\textbf{Landscape smoothing}, which reduces the effective curvature $H(\theta(t))$ along the interpolation path to preserve near-linear behavior over a larger portion of the path.

\section{Methodology}
\label{section:methodology}
As illustrated in Fig.~\ref{fig:overview}(a), we aim to achieve consistent robustness along the parameter interpolation path between $\theta_{m}$ and $\theta_{k+1}$, 
as defined by the optimization objective in Eq.~\ref{eq:surrogate_optimization_objective}. Within the CTL framework, this objective can be equivalently reformulated 
in feature space, \ie ensuring robustness along the feature interpolation path from $z_m$ to $z_{k+1}$, as shown in Fig.~\ref{fig:overview}(b). This provides 
a unified feature-space perspective for backdoor mitigation in the merged model, as supported by empirical validation in Appendix~\ref{appendix:validation_of_CTL}.
\begin{figure}[htbp]
	\graphicspath{{figures/Overview/}}
	\centering
    \includegraphics[width=0.96\linewidth]{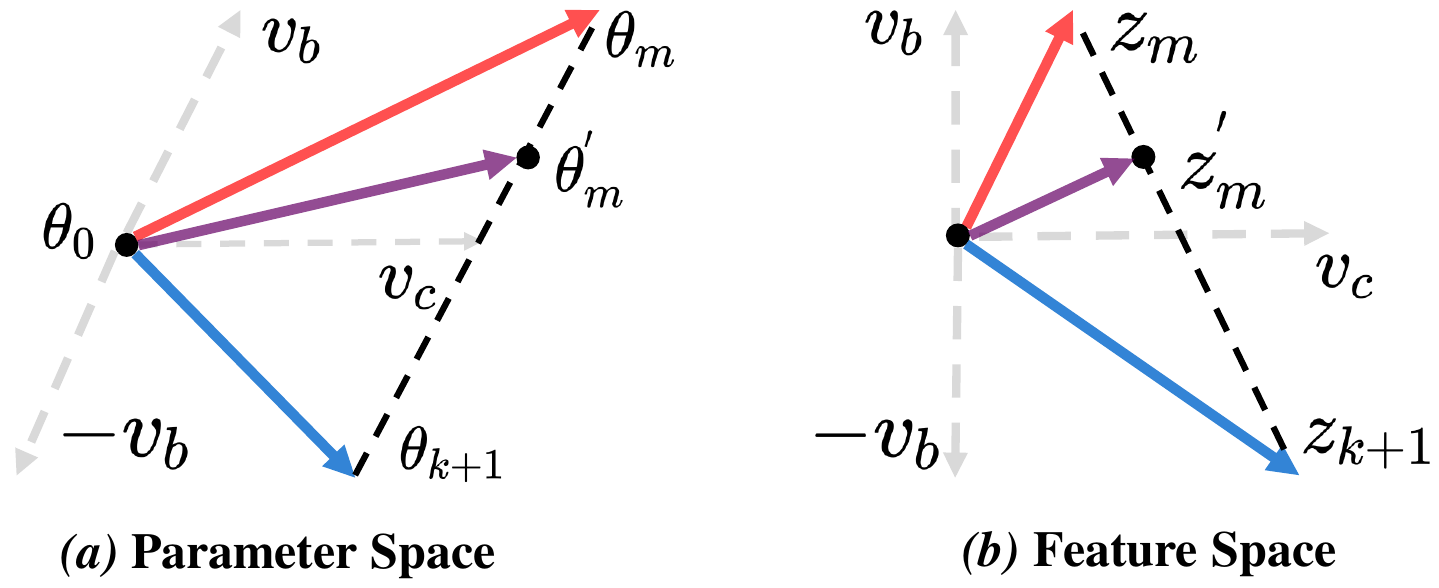}
    \caption{\small The core idea of LFPM. $v_c$ and $v_b$ denote the directions of the clean-task and backdoor-task vectors, respectively, with $-v_b$ indicating 
    the reversed backdoor direction. $\theta_m$ and $\theta_{k+1}$ denote the merged model and the anti-backdoor model, while $z_m$ and $z_{k+1}$ are their 
    respective feature representations. Under the CTL framework, robustness along the parameter interpolation path from $\theta_m$ to $\theta_{k+1}$ can be achieved 
    by performing robust optimization along the feature interpolation path from $z_m$ to $z_{k+1}$.}
    \label{fig:overview}
    \vspace{-3mm}
\end{figure}

Building on this insight, we propose \textbf{Linear Feature Path Minimization (LFPM)}, a defense framework for model merging, consisting of two stages:
\textbf{(i) Adversarial Feature Extraction via Subspace Partitioning:} this stage extracts adversarial features by disentangling adversarial 
representations from clean semantic features through feature subspace partitioning;
\textbf{(ii) Anti-backdoor Task Vector Optimization:} this stage optimizes an anti-backdoor task vector to ensure robust backdoor mitigation along 
the interpolation path. The overall LFPM algorithm is presented in Appendix~\ref{algorithm:linear_feature_path_minimization}, with detailed procedures for each 
stage provided in Appendices~\ref{algorithm:adversarial_featur_extraction} and~\ref{algorithm:anti-backdoor_task_vector_optimization}, respectively.

In addition, we conduct a curvature analysis throughout training dynamics using Hessian-Vector Products (HVPs)~\cite{dagreou2024compute} to understand 
the underlying principles of LFPM, with results reported in Appendix~\ref{appendix:curvature_analysis}.

\subsection{Adversarial Feature Extraction via Subspace Partitioning}
\label{subsection:adversarial_feature_extraction}
Based on the constraint conditions of the surrogate optimization objective in Eq.~\ref{eq:backdoor_input_property}, LFPM aims to extract adversarial features 
while preserving clean-task semantics. To this end, we first establish the following principle, which characterizes the manifestation of adversarial perturbations 
in the feature space, with the proof provided in Appendix~\ref{proof:adversarial_perturbation_orthogonality_principle}.
\begin{lemma}[\textbf{Adversarial Perturbation Orthogonality Principle}]
\label{lemma:adversarial_perturbation_orthogonality_principle}
Let $f = h \circ g$, where $g$ is a feature extractor and $h(z)=W^\top z + b$ is a linear prediction head for a downstream task.
Suppose an adversarial input $\vx'=\vx+\Delta\vx$ preserves the clean-task semantics, \ie
\begin{equation}
\setlength{\abovedisplayskip}{5pt}
\setlength{\belowdisplayskip}{5pt}
f(x^{'};\vth_{m}^{'}) = f(x;\vth_{m}^{'}).
\end{equation}
Then, under a first-order approximation of $g$, the induced feature perturbation $\Delta z = g(x{'})- g(x)$ must satisfy
\begin{equation}
\setlength{\abovedisplayskip}{5pt}
\setlength{\belowdisplayskip}{5pt}
\Delta z \perp \mathrm{span}(W),
\end{equation}
where $\mathrm{span}(W)$ denotes the feature subspace spanned by the clean-task feature vectors.
\end{lemma}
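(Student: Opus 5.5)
The argument is a direct linear-algebra computation through the linear head. Write $z = g(x)$ and $z' = g(x') = z + \Delta z$, where $g$ is the feature extractor of the purified model $\vth_m'$. Since $h(z) = W^\top z + b$ is affine, the semantic-preservation hypothesis $f(x';\vth_m') = f(x;\vth_m')$ becomes $W^\top z' + b = W^\top z + b$. The bias cancels, which leaves
\begin{equation*}
W^\top \Delta z = 0 .
\end{equation*}

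Under the first-order approximation of $g$ we have $\Delta z \approx J_g(x)\,\Delta x$, with $J_g$ the Jacobian of $g$ at $x$. Its only role is to make $\Delta z$ a well-defined small perturbation that is linear in $\Delta x$, so the condition reads $W^\top J_g(x)\Delta x = 0$. The exact computation above does not need this approximation, but I would state it to match the lemma's hypothesis: it lets one describe the admissible set of $\Delta x$ as $\ker(W^\top J_g(x))$, a linear subspace.

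Next I read the equation columnwise. Writing $W = [w_1,\dots,w_c]$, where the columns are the class embeddings produced by the text encoder $\mathcal{T}$, the condition $W^\top \Delta z = 0$ says $\langle w_k, \Delta z\rangle = 0$ for every $k$. By linearity of the inner product, $\Delta z$ is then orthogonal to every linear combination of the $w_k$. That is, $\Delta z \in \ker(W^\top) = \mathrm{span}(W)^\perp$, which is the claim.

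I expect the main obstacle to be interpretive rather than technical. Equality of outputs is a strong hypothesis. If one only assumed equal predicted labels (equal argmax), the conclusion would weaken to orthogonality to the differences $w_k - w_{y}$ along a local region, and it would not be exact. I would therefore state explicitly that the hypothesis is equality of the logit vectors, as written, and note that approximate equality $\|W^\top\Delta z\|\le\epsilon$ gives approximate orthogonality. Quantitatively, the projection satisfies $\|P_{\mathrm{span}(W)}\Delta z\|\le \epsilon/\sigma_{\min}(W)$, where $\sigma_{\min}(W)$ is the smallest nonzero singular value of $W$.
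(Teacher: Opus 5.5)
Your proof is correct, and it takes a more direct route than the paper's. You use the hypothesis exactly as stated, with the purified model $\vth_m'$ on both sides, so the affine head gives $W^\top\Delta z = 0$ exactly. As you note, the first-order approximation of $g$ is then superfluous, and your bound $\|P_{\mathrm{span}(W)}\Delta z\|\le\epsilon/\sigma_{\min}(W)$ is a quantitative refinement the paper does not give.

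The paper's appendix proof instead works with the cross-model condition $f(x';\vth_m') = f(x;\vth_m)$. It relaxes this to $\min_{\Delta x}\|f(x';\vth_m') - f(x;\vth_m)\|$ and Taylor-expands $f(x+\Delta x;\vth_m+\alpha\delta)$ to second order jointly in input and parameters. It keeps only the $\Delta x$-dependent terms (input gradient, input Hessian, mixed input--parameter term). Using $\Delta z\approx J_g(x)\Delta x$ and the chain rule, it rewrites them as $\Delta z^\top\nabla_z h + \Delta z^\top H_{zz}\Delta z + \alpha\,\Delta z^\top H_{z\theta}\delta$. Linearity of $h$ then sets $H_{zz} = H_{z\theta} = 0$, leaving $\min\|W^\top\Delta z\|$, whose nontrivial minimizers lie in $\ker W^\top$.

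That route is meant to account for the parameter shift $\alpha\delta$ between the two models, and to show that the curvature and mixed terms vanish for a linear head. This is where the first-order hypothesis on $g$ genuinely enters: it discards the second derivatives of $g$ hidden in those terms. The cost is that the argument is heuristic, because discarding the $\Delta x$-independent terms inside the norm does not preserve the minimizer in general.

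Your argument is exact. It also handles the cross-model version once you use the paper's own Step 1 assumption that the backdoored and purified models agree on clean inputs, $f(x;\vth_m) = f(x;\vth_m')$. That assumption reduces the cross-model condition to the case you proved.
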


This lemma highlights a key property: to preserve clean semantic features, adversarial perturbations must lie in directions orthogonal to the clean semantic subspace.
Based on Lemma~\ref{lemma:adversarial_perturbation_orthogonality_principle}, we extract adversarial features via subspace partitioning, which consists of two steps:
feature subspace partitioning and prompt-based adversarial feature mining.

\textbf{Feature Subspace Partitioning}. We define a rank-$r$ projection matrix $P_s \in \mathbb{R}^{n \times n}$ associated with adversarial features, projecting 
onto the adversarial subspace $S$,
\begin{equation}
\label{eq:projection_matrix}
\setlength{\abovedisplayskip}{5pt}
\setlength{\belowdisplayskip}{5pt}
P_s= UU^T
\end{equation}
where $U \in \mathbb{R}^{n \times r}$ is a semi-orthogonal matrix satisfying $U^{\top} U = I_{r \times r}$. By learning $P_s$, the feature space $V$ can be 
decomposed into two orthogonal subspaces: an adversarial feature subspace $S$ and a clean feature subspace $T$, such that $V = S \oplus T$.
Accordingly, the projection matrix for the clean feature subspace $T$ is $P_{t} = I - P_s$.

For a given input $x$, let $z \in V$ denote its feature representation. Based on the projection matrix $P_s$, $z$ can be decomposed into an adversarial 
component $P_s z$ and a clean component $P_{t} z$, where $P_s z$ captures task-irrelevant or adversarial directions, while $P_{t} z$ preserves  the discriminative 
semantic features corresponding to the ground-truth label $y$. Accordingly, for the merged model $f(\theta_m) = h_m \circ g_m$, feature subspace 
partitioning is formulated as:
\begin{equation}
\setlength{\abovedisplayskip}{5pt}
\setlength{\belowdisplayskip}{5pt}
\scalebox{0.85}{$
\displaystyle
\min_{U}\sum_{(x,y) \in \mathcal{D}_{s}}\left [\mathcal{L}\big(h_m(P_{t} z), y\big) - \mathcal{L}\big(h_m(P_{s}z), y\big) + \mathcal{R}(U) \right ]
$}
\end{equation}
where $z = g_m(x)$, and $\mathcal{R}(U) = \left \| U^TU - I_{r \times r}\right \|$ enforces the orthogonality of $U$.

\textbf{Prompt-Based Adversarial Feature Mining.} Unlike existing approaches~\cite{wei2023shared,gao2023effectiveness} that directly generate adversarial perturbations 
or reverse-engineered triggers in the input space, our method encodes adversarial perturbations using \emph{learnable visual prompts} $P$~\cite{zhou2022learning,jia2022visual}, 
thereby preserving the semantic integrity of the original input $x$. 

To satisfy the surrogate condition in Eq.~\ref{eq:backdoor_input_property}, we leverage two characteristic properties of backdoor models~\cite{wei2023lmsanitator}: 
\textbf{Backdoor-induced output deviation:} Backdoor triggers induce significant output deviations for the backdoored model, \ie $f(x;\theta_m) \neq f(x_b;\theta_m)$. 
To capture this effect, we amplify the discrepancy between the adversarial component of the perturbed input $x^{'} = P \oplus x$ and that of the original input $x$;
\textbf{Compactness of backdoor-related features:} Backdoor features tend to cluster tightly in feature space, which motivates enforcing batch-wise consistency 
among adversarial components. Simultaneously, to preserve clean-task semantics, the clean component of $x^{'}$ is aligned with that of the original input $x$. 

Accordingly, these constraints are naturally incorporated into the adversarial feature mining objective:
\begin{equation}
\label{eq:adversarial_feature_mining}
\setlength{\abovedisplayskip}{5pt}
\setlength{\belowdisplayskip}{5pt}
\scalebox{0.76}{$
\displaystyle
\min_{P}\sum_{(x,y) \in D_{s}} \mathrm{sim}(P_{t}z_{c}^{'}, P_{t}z_{c}) - \mathrm{sim}(P_s z^{'}_{m}, P_s z_m) + \left\| P_s z^{'}_{m} - \mu_s \right\|
$}
\end{equation}
where $z_c, z_{c}^{'}$ are features of a clean reference model, approximated in practice using a pre-trained model, and $z_m, z^{'}_{m}$ are from the backdoored 
merged model, with $z{'}$ corresponding to input $x^{'}$. $\mu_s = \mathbb{E}_{x \sim \mathcal{B}}(P_s z^{'}_{m})$ is the batch-wise mean adversarial feature. 
$\mathrm{sim}(\cdot,\cdot)$ measures the similarity between projected features (\eg mean squared error). 

The three terms in Eq.~\ref{eq:adversarial_feature_mining} enforce clean semantic preservation, clean-adversarial feature separation in the subspace $S$, 
and compact clustering of adversarial components, respectively. Collectively, they ensure that $x^{'}$ activates adversarial features while preserving 
the main semantic features, balancing backdoor mitigation with clean-task performance.

\subsection{Anti-backdoor Task Vector Optimization}
\label{section:anti-backdoor_task_vector_optimization}


In this subsection, we first reformulate the original robustness optimization of the interpolated model into a SAM-like optimization form within the CTL framework.
We then recast it as an equivalent optimization along the feature interpolation path, and map CTL-induced feature  deviations to worst-case perturbations in the 
prompt space. Building on this formulation, we introduce gradient accumulation optimization to efficiently solve the resulting objective, along with the concept 
of a loss path-integral, enabling robust backdoor suppression throughout the interpolation path.

\textbf{SAM-like Optimization.} Leveraging the CTL feature deviation bound from Theorem~\ref{theorem:CTL_deviation}, the adversarial optimization in Eq.~\ref{eq:surrogate_optimization_objective}
can be reformulated under the CTL constraint as:
\begin{equation}
\setlength{\abovedisplayskip}{5pt}
\setlength{\belowdisplayskip}{5pt}
\min_{\Delta \vth_{k+1}} \mathcal{L}\big(f(x^{'}, \vth_m + \delta(\alpha)); y\big)
\quad \text{s.t.} \quad \left\| \Delta(\alpha) \right\|  \le \varepsilon,
\end{equation}
where $\vth_{m}^{'} = \vth_{m} + \delta(\alpha) $, $\delta(\alpha)  = \alpha (\Delta\vth_{k+1} - \Delta\vth_{m})$ is the scaled update along the interpolated path,
and $\left\|\Delta(\alpha)\right\|$ denotes feature deviation induced by the CTL.

Since the merged model $\vth_{m}$ is fixed, $\mathcal{L}(f(x^{'};\vth_{m});y)$ is constant. To explicitly highlight the sharpness term, 
we can rewrite the constrained optimization as
\begin{equation}
\setlength{\abovedisplayskip}{5pt}
\setlength{\belowdisplayskip}{5pt}
\scalebox{0.88}{$
\displaystyle
\min_{\Delta \vth_{k+1}} \max_{|\Delta(\alpha)| \le \varepsilon} \mathcal{L}(f(x^{'};\vth_m + \delta(\alpha));y) - \mathcal{L}(f(x^{'};\vth_{m});y).
$}
\end{equation}
The term captures the sharpness of $\mathcal{L}$ along the parameter interpolation path. This objective seeks a parameter point $\vth_{k+1}$ 
(with $\left\|\Delta(\alpha)\right\| \le \varepsilon$) such that the loss decreases most rapidly along the path from $\vth_{m}$ to $\vth_{k+1}$. 
This resembles the Sharpness-Aware Minimization (SAM) framework~\cite{foret2020sharpness}, where sharpness is reduced via worst-case weight perturbations in 
the neighborhood of the current parameters. The difference lies in the fact that, here, the perturbations are restricted to the interpolation path.


\textbf{From Parameter Interpolation to Feature Interpolation.} Under the CTL framework, for any $\alpha \in [0.0, 1.0]$, we have:
$f(\vth(\alpha)) \approx  (1-\alpha)f(\vth_{i}) + \alpha f(\vth_{j})$. By re-expressing the feature deviation as $\Delta z = \Delta(\alpha)$, the original SAM-like 
optimization along the parameter interpolation path can be equivalently reformulated as robust optimization along the feature interpolation path:
\begin{equation}
\label{eq:feature_constraint_optimization}
\setlength{\abovedisplayskip}{5pt}
\setlength{\belowdisplayskip}{5pt}
\scalebox{0.80}{$
\displaystyle
\min_{\Delta \vth_{k+1}}\max_{\left | \Delta z  \right | \le \rho } \mathcal{L}\big(h\big((1-\alpha) g(x^{'};\vth_{m}) + \alpha g(x^{'}; \vth_{k+1}) + \Delta z \big); y\big)
$}
\end{equation}
where $(1-\alpha) g(x^{'};\vth_{m}) + \alpha g(x^{'}; \vth_{k+1})$ denotes the interpolation of the feature outputs from the feature extractor $g$, and 
$\Delta z \in \left\{ v:\left || v\right || \le \rho\right\}$, which represents the allowable range of feature-space perturbations.


\textbf{Mapping Feature Deviation to Prompt Perturbation.} As described in Section~\ref{subsection:adversarial_feature_extraction}, adversarial perturbations 
are encoded via learnable visual prompt $P$, where the perturbed input $x^{'} = P\oplus x$ corresponds to a feature representation $z = g(P\oplus x;\theta)$.
Building on this formulation, we further map feature deviations induced by CTL to worst-case perturbations in the prompt space, resulting in the following optimization 
problem:
\begin{equation}
\label{eq:prompte_constraint_optimization}
\setlength{\abovedisplayskip}{5pt}
\setlength{\belowdisplayskip}{5pt}
\scalebox{0.88}{$
\displaystyle
\min_{\Delta \vth_{k+1}}\max_{\left | \Delta P \right | \le r_{\rho }} \mathcal{L}[h\big( (1-\alpha) g(x^{''};\vth_{m})+ \alpha g(x^{''}; \vth_{k+1})\big);y]
$}
\end{equation}
where $x^{''} = (P+\Delta P)\oplus x$ and the prompt perturbation $\Delta P$ is bounded as $r_{\rho} \le \frac{\rho}{\left \| J_P \right \|_{2}} $, and 
$J_P = \left. \frac{\partial z}{\partial P} \right|_{P = P_0}$ denotes the Jacobian matrix of feature representation $z$ with respect to the learnable prompt $P$.
A formal proof is provided in the Appendix~\ref{proof:proof_sam_in_prompt_space}. 

In summary, the above combination transforms the problem into a tractable form, eliminating the need to generate input-space adversarial perturbations or explicit 
parameter-space interpolation during iterative optimization, while shifting the search for worst-case perturbations from the high-dimensional parameter space 
to the low-dimensional prompt space. We now introduce gradient accumulation optimization to solve the resulting objective.

\textbf{Gradient Accumulation Approximation.} To approximate the gradient of an interpolated model $\vth_{m}^{'}$ using only the endpoint $\vth_{m}$ and $\vth_{k+1}$, 
we state the following corollary.
\begin{corollary}[\textbf{Gradient Approximation Along Linear Interpolation}]
\label{corollary:gradient_approximation_along_linear_interpolation}
Under the setting of Theorem~\ref{theorem:CTL_deviation}, assume that along the
interpolation path $\vth(t)$,
\begin{equation}
\setlength{\abovedisplayskip}{5pt}
\setlength{\belowdisplayskip}{5pt}
\scalebox{0.90}{$
\left \| \nabla^{2} f(\vth(t))\right \| \le \lambda_{\max}, \quad
\left \| \nabla_{\vth} \nabla^{2} f(\vth(t))\right \| \le J_{\max}.
$}
\end{equation}
Let $\vth(\alpha) = (1-\alpha)\vth_i + \alpha \vth_j$ for $\alpha \in (0,1)$ and $\delta:= \vth_j - \vth_i$.

Then, for any $\alpha \in (0,1)$, the gradient at the interpolated point $\vth(\alpha)$ can be approximated by a convex combination of the 
endpoint gradients,
\begin{equation}
\setlength{\abovedisplayskip}{5pt}
\setlength{\belowdisplayskip}{5pt}
\scalebox{0.90}{$
\nabla f(\vth(\alpha))
= (1-\alpha)\nabla f(\vth_i)
+ \alpha \nabla f(\vth_j)
+ \xi(\alpha),
$}
\end{equation}
where the approximation error $\xi(\alpha)$ satisfies
\begin{equation}
\setlength{\abovedisplayskip}{5pt}
\setlength{\belowdisplayskip}{5pt}
\label{eq:error_bounded_gradient}
\scalebox{0.98}{$
\left \|\xi(\alpha) \right \|
\le 2\alpha(1-\alpha)\lambda_{\max} \left \|\delta \right \|
+ \frac{\alpha(1-\alpha)}{2} J_{\max} \left \|\delta \right \|^2.
$}
\end{equation}
\end{corollary}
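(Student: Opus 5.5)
The plan is to apply the path-integral identity of Theorem~\ref{theorem:CTL_deviation} to the gradient map $\vth\mapsto\nabla f(\vth)$ instead of to $f$. In the proof of Theorem~\ref{theorem:CTL_deviation}, the representation $\Delta(\alpha)=\int_0^1 k_\alpha(t)\,\frac{d^2}{dt^2}f(\vth(t))\,dt$ (up to sign) uses only that $\varphi(t):=f(\vth(t))$ is $C^2$ on $[0,1]$. That identity is the interpolation error of the linear interpolant of $\varphi$, expressed with the Green's function $k_\alpha$ of $-d^2/dt^2$ with zero boundary conditions. We therefore apply it componentwise to $\psi(t):=\nabla f(\vth(t))$, which is $C^1$ with $\psi'(t)=\nabla^2 f(\vth(t))\,\delta$. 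Where it is differentiable, $\psi''(t)=\nabla_{\vth}\nabla^2 f(\vth(t))[\delta,\delta]$. This gives
\begin{equation*}
\xi(\alpha)=\nabla f(\vth(\alpha))-(1-\alpha)\nabla f(\vth_i)-\alpha\nabla f(\vth_j)=-\int_0^1 k_\alpha(t)\,\nabla_{\vth}\nabla^2 f(\vth(t))[\delta,\delta]\,dt .
\end{equation*}

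To avoid assuming more smoothness than $f\in C^2$, I would instead run the argument as a direct first-order Taylor expansion of $\nabla f$ about the interior point $\vth(\alpha)$, using the Hessian-Lipschitz constant $J_{\max}$. Concretely, $\vth_i=\vth(\alpha)-\alpha\delta$ and $\vth_j=\vth(\alpha)+(1-\alpha)\delta$. Hence
\begin{equation*}
\nabla f(\vth_i)=\nabla f(\vth(\alpha))-\alpha\nabla^2 f(\vth(\alpha))\delta+R_i,\qquad
\nabla f(\vth_j)=\nabla f(\vth(\alpha))+(1-\alpha)\nabla^2 f(\vth(\alpha))\delta+R_j .
\end{equation*}
Here the remainders are of the form $R=\int (\nabla^2 f(\vth(s))-\nabla^2 f(\vth(\alpha)))\delta\,ds$. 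They satisfy $\|R_i\|\le \tfrac{\alpha^2}{2}J_{\max}\|\delta\|^2$ and $\|R_j\|\le\tfrac{(1-\alpha)^2}{2}J_{\max}\|\delta\|^2$, by the bound on $\nabla_{\vth}\nabla^2 f$ applied via the mean value inequality.

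The key step is the convex combination with weights $(1-\alpha)$ and $\alpha$. The first-order terms cancel exactly, since $-(1-\alpha)\alpha+\alpha(1-\alpha)=0$. What remains is $\xi(\alpha)=-(1-\alpha)R_i-\alpha R_j$, and the triangle inequality gives
\begin{equation*}
\|\xi(\alpha)\|\le \tfrac{J_{\max}\|\delta\|^2}{2}\big[(1-\alpha)\alpha^2+\alpha(1-\alpha)^2\big]=\tfrac{\alpha(1-\alpha)}{2}J_{\max}\|\delta\|^2 .
\end{equation*}
This agrees with $\int_0^1 k_\alpha=\alpha(1-\alpha)/2$ from the Theorem~\ref{theorem:CTL_deviation} route.

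The stated bound \eqref{eq:error_bounded_gradient} contains, in addition, the nonnegative term $2\alpha(1-\alpha)\lambda_{\max}\|\delta\|$. It therefore follows immediately from the sharper estimate. I would remark that this term is what one picks up if the first-order terms are expanded with the endpoint Hessians $\nabla^2 f(\vth_i)$ and $\nabla^2 f(\vth_j)$ rather than the common Hessian at $\vth(\alpha)$. The resulting mismatch, bounded crudely by $\|\nabla^2 f(\vth_i)\|+\|\nabla^2 f(\vth_j)\|\le 2\lambda_{\max}$, then appears, weighted by $\alpha(1-\alpha)\|\delta\|$.

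The main obstacle is justifying the remainder bound rigorously for vector-valued gradients. I would handle it componentwise, per the Remark after Theorem~\ref{theorem:CTL_deviation}, and interpret $J_{\max}$ as a bound on the third-derivative tensor's operator norm (equivalently, Lipschitz continuity of $\nabla^2 f$ along the segment). This keeps the argument valid using only the mean value inequality.
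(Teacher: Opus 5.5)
Your proof is correct, and it gives the sharper estimate $\|\xi(\alpha)\|\le\frac{\alpha(1-\alpha)}{2}J_{\max}\|\delta\|^2$. The stated bound follows because the extra term is nonnegative.

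The paper argues differently. It treats $\vth_i,\vth_j$ as independent variables and differentiates the identity $f(\vth(\alpha))=(1-\alpha)f(\vth_i)+\alpha f(\vth_j)+\Delta Z(\alpha)$ of Theorem~\ref{theorem:CTL_deviation} with respect to each endpoint. Adding the two results, using $\nabla_{\vth_i}f(\vth(\alpha))=(1-\alpha)\nabla f(\vth(\alpha))$ and $\nabla_{\vth_j}f(\vth(\alpha))=\alpha\nabla f(\vth(\alpha))$, gives $\xi(\alpha)=\nabla_{\vth_i}\Delta Z(\alpha)+\nabla_{\vth_j}\Delta Z(\alpha)$. It then differentiates under the integral, obtaining integrands $k_\alpha(t)\bigl[\mp 2H(t)\delta+\ldots\bigr]$, and bounds the two endpoint gradients by separate triangle inequalities.

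That step is where $2\alpha(1-\alpha)\lambda_{\max}\|\delta\|$ comes from, not an expansion with endpoint Hessians as you suggest. The $\mp 2H(t)\delta$ pieces cancel exactly in the sum, and the third-derivative weights $(1-t)+t$ add to $1$. So the paper's decomposition collapses, up to its sign convention for $\Delta Z$, to your first-paragraph formula $\xi(\alpha)=-\int_0^1 k_\alpha(t)\,\nabla_{\vth}\nabla^2 f(\vth(t))[\delta,\delta]\,dt$.

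Your Taylor expansion about the interior point $\vth(\alpha)$ has three advantages:
\begin{itemize}
\item It makes that cancellation automatic, since the convex weights annihilate the common term $\nabla^2 f(\vth(\alpha))\delta$.
\item It uses only $f\in C^2$ plus a Lipschitz bound on $\nabla^2 f$ along the segment. Differentiating $\Delta Z$ in the endpoints, as the paper does, tacitly needs third derivatives of $f$ in a neighbourhood of the segment.
\item It shows the hypothesis involving $\lambda_{\max}$ is superfluous.
\end{itemize}
The paper's route only buys a formal link to the kernel $k_\alpha$ of the Theorem, which your first paragraph supplies anyway.
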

The proof is provided in Appendix~\ref{proof:gradient_approximation_along_linear_interpolation}.

Corollary~\ref{corollary:gradient_approximation_along_linear_interpolation} provides a theoretical guarantee that gradients along the interpolation path can be 
approximated by the endpoint gradients with a bounded error. In practice, rather than computing the exact interpolated gradient, we update $\theta_{k+1}$ using 
a gradient accumulation strategy:
\begin{equation}
\label{eq:gradient_accumulation}
\setlength{\abovedisplayskip}{5pt}
\setlength{\belowdisplayskip}{5pt}
\vth_{k+1} \longleftarrow \vth_{k+1}  - \eta \big(\nabla_{\vth_{k+1}}\mathcal{L} + \lambda \nabla_{\vth_{m}}\mathcal{L}\big)
\end{equation}
where $\eta$ is the learning rate, and $\lambda$ is the scaling factor controlling the contribution of gradient from $\vth_{m}$ in the accumulated update of $\vth_{k+1}$.

\textbf{Loss Path-Integral.} The feature interpolation optimization form in Eq.~\ref{eq:feature_constraint_optimization} naturally admits the path-integral loss:
\label{section:loss_path_integral}
\begin{equation}
\label{eq:loss_path_integral}
\setlength{\abovedisplayskip}{5pt}
\setlength{\belowdisplayskip}{5pt}
\int _{\alpha_2}^{1} \mathcal{L}\big(f(x, \vth_{m}^{'}),y\big) dt.
\end{equation}
where $\alpha_2$ defines the interpolation intervals of interest along the path. This formulation is independent of the specific input $x$ 
(whether perturbed or unperturbed).

By leveraging feature-space interpolation, this integral can be approximated with a discrete sum:
\begin{equation}
\setlength{\abovedisplayskip}{5pt}
\setlength{\belowdisplayskip}{5pt}
\displaystyle
\frac{1}{N}\sum_{i=i_2}^{N} \mathcal{L}\big(h\big((1-t_i)g(x^{'};\theta_{m}) + t_i g(x^{'};\theta_{k+1})\big);y\big)
\end{equation}
where the the interval $\left[0.0, 1.0 \right]$ is uniformly divided into $N$ sub-intervals of step size $\Delta t = 1/N$, giving discrete points $t_i = i \Delta t$ 
for $i=0,1,\dots,N$. The index corresponding to the bound $\alpha_2$ is set as $i_2 = \lfloor \alpha_2 N \rfloor$.

\section{Experiments}
\renewcommand{\arraystretch}{1.2}
\begin{table*}[!ht]
	\centering
	\caption{Backdoor Defense Performance under BadMerging (ACC$\uparrow$ / ASR $\downarrow$).}
	\label{table:backdoor_defense_against_BadMerging}
	\resizebox{0.98\textwidth}{!}{
        \begin{tabular}{c|cc|cc|cc|cc|cc|cc|ccc}
		\toprule
		Task $\to$ & \multicolumn{2}{c|}{\advtask{CIFAR100}} & \multicolumn{2}{c|}{\targettask{Cars196}} & \multicolumn{2}{c|}{SUN397} & \multicolumn{2}{c|}{EuroSAT} & \multicolumn{2}{c|}{GTSRB} & \multicolumn{2}{c|}{Pets} & \multicolumn{3}{c}{Average} \\ 
        \midrule
		Method $\downarrow$ & CA&ASR(N) & CA&ASR(T) & CA&ASR(N) & CA&ASR(N) & CA&ASR(N) & CA&ASR(N) & CA&ASR(T)&ASR(N) \\
        \midrule         
        Individual  & 83.56&75.18 &	64.05&79.31 & 71.53&56.37 & 97.83&30.61 & 94.27&49.46 & 88.11&42.60 & 83.22&79.31&55.58\\
		\midrule
		TA         &  68.03&69.12 & 58.11&98.42 & 65.03&78.26 & 62.16&50.40 & 43.05&91.57 & 85.39&58.87 & 63.62&98.42&69.64\\
        \midrule \midrule
		IBVS    &  36.87&61.69 &  \underline{44.75}&64.18 & 56.44&63.37 & \underline{22.53}&64.05 & 16.66&68.90	 & 79.85&34.66 &  42.85&64.18&58.53 \\
		\hline
		SAU     &  \underline{53.92}&75.59 & 43.98&63.00 &  \underline{60.55}&63.14 & 15.40&77.16 & 27.64&78.15 & \underline{80.51}&25.64 & \underline{47.00}&63.00&63.93  \\
		\hline
		SAM     &  40.61&31.14 & 34.15&10.03 & 55.76&21.76 & 20.87&47.50 & 16.95&47.33 & 74.97&8.36  & 40.55&10.03&31.21 \\
		\hline
		PAM     &  51.91&\underline{20.20} & 43.42&\underline{4.42}	& 58.66&\underline{17.10} &21.57&\underline{27.77} & \underline{28.50}&\underline{38.17} & 76.31&\underline{6.73} & 46.72&\underline{4.42}&\underline{21.99} \\
		\hline
		LFPM    &  \textbf{68.00}&\textbf{10.38} & \textbf{55.15}&\textbf{0.49} & \textbf{64.88}&\textbf{7.65} & \textbf{45.66}&\textbf{17.51} & \textbf{33.28}&\textbf{20.32} & \textbf{85.06}&\textbf{2.75} & \textbf{58.67}&\textbf{0.49}&\textbf{9.85}  \\
        \bottomrule
	\end{tabular}}
\end{table*}  
\renewcommand{\arraystretch}{1.2}
\begin{table*}[!ht]
	\centering
	\caption{Backdoor Defense Performance under LoBAM (ACC$\uparrow$ / ASR $\downarrow$).}
	\label{table:backdoor_defense_against_LoBAM}
	\resizebox{0.98\textwidth}{!}{
        \begin{tabular}{c|cc|cc|cc|cc|cc|cc|ccc}
		\toprule
		Task $\to$ & \multicolumn{2}{c|}{\advtask{CIFAR100}} & \multicolumn{2}{c|}{\targettask{Cars196}} & \multicolumn{2}{c|}{SUN397} & \multicolumn{2}{c|}{EuroSAT} & \multicolumn{2}{c|}{GTSRB} & \multicolumn{2}{c|}{Pets} & \multicolumn{3}{c}{Average} \\ 
        \midrule
		Method $\downarrow$ & CA&ASR(N) & CA&ASR(T) & CA&ASR(N) & CA&ASR(N) & CA&ASR(N) & CA&ASR(N) & CA&ASR(T)&ASR(N)  \\
        \midrule         
        Individual  & 80.17&84.08 &	61.65&54.33	& 67.93&48.38 &	96.87&27.90	& 89.76&21.25 & 89.56&18.28 & 80.99&54.33&39.97\\
		\midrule
		TA          & 64.48&53.65 & 56.78&86.12 & 63.32&71.15 & 44.48&39.72 & 33.88&90.44 & 87.27&40.09 & 58.36&86.12&59.01\\
        \midrule \midrule
		IBVS    &  37.28&31.99 & 40.35&8.56 & 47.20&59.42 & 24.40&59.25 & 11.67&78.70 & 71.76&34.36 & 38.77&8.56&52.74\\
		\hline
        SAU     &  50.96&83.91 & 44.43&7.30 & 57.10&71.53 & \textbf{39.40}&61.03 & 22.18&92.46 & 81.35&24.69 & 49.23&7.30& 66.72  \\
		\hline
        SAM     &  \textbf{60.57}&\underline{19.06} & \underline{53.26}&\underline{3.18} & \underline{61.83}&\underline{21.08} & 34.98&38.07 & \textbf{25.49}&\underline{39.47} & \underline{85.88}&\underline{8.17} & \underline{53.66}&\underline{3.18}&\underline{25.17} \\
		\hline
        PAM     & 57.80&21.20 & 51.18&\textbf{3.07} & 60.68&21.33 & 33.83&\textbf{31.98} & \underline{25.48}&\textbf{39.23} & 84.21&\textbf{7.71} & 52.19&\textbf{3.07}&\textbf{24.29} \\
		\hline
		LFPM    & \underline{62.22}&\textbf{18.79} & \textbf{53.53}&3.84 & \textbf{61.91}&\textbf{18.36} & \underline{38.31}& \underline{37.61}  & 24.80&50.26  & \textbf{85.96}&8.39 & \textbf{54.45}&3.84&26.68 \\
        \bottomrule
	\end{tabular}}
	\vspace{-3mm}
\end{table*} 

\textbf{Datasets and Models.} Following~\cite{zhang2024badmerging}, we adopt CLIP-ViT-B/32 as the default backbone and fine-tune task-specific models on 11 datasets~\cite{kornblith2019better}:
CIFAR100, Cars196, SUN397, EuroSAT, GTSRB, Pets, CIFAR10, SVHN, STL-10, Food-101, RESISC45. In addition, we randomly sample 10,000 images from ImageNet-1K~\cite{deng2009imagenet} 
to form the shadow dataset for anti-backdoor task optimization. Six distinct tasks are randomly selected to form a task sequence, with the first task as the adversary 
task and the second as the target task. All task sequences and experimental settings are summarized in the Appendix~\ref{appendix:experiment_settings}. 
Task Sequence 1 is used by default, while results for the other sequences are reported in Appendix~\ref{appendix:robustness_under_different_task_combinations}.

\textbf{Baselines.} \textbf{(i)} \textbf{Attack Baselines:} To account for adversaries with varying resource budgets, we consider BadMerging~\cite{zhang2024badmerging}
under full fine-tuning and LoBAM~\cite{yin2024lobam} under parameter-efficient fine-tuning (PEFT) using LoRA~\cite{hu2022lora,zanella2024low}.
\textbf{(ii)} \textbf{Defense Baselines:} We evaluate representative backdoor defense methods from complementary perspectives. Specifically, these methods include IBVS~\cite{pawlak2025backdoor}, 
which mitigates backdoors via task arithmetic, and SAU~\cite{wei2023shared}, a state-of-the-art post-training defenses based on adversarial training. 
We also consider sharpness-aware defenses, namely SAM~\cite{foret2020sharpness} and PAM~\cite{min2024uncovering}, which improve robustness by minimizing loss sharpness 
through weight perturbations. \textbf{(iii)} \textbf{Model Merging (MM) algorithm:} We evaluate standard MM methods, including Task-Arithmetic~\cite{ilharco2022editing}, 
Simple Average~\cite{wortsman2022model},Ties-Merging~\cite{yadav2023ties}, Della-Merging~\cite{deep2024della}. Task-Arithmetic (TA) is used as the default merging 
strategy, with results for alternative algorithms reported in Appendix~\ref{appendix:generality_across_model_merging_algorithms}.
Details for attacks and defenses are provided in Appendices~\ref{appendix:attack_settings} and ~\ref{appendix:defense_settings}

\textbf{Evaluation Metric.} We evaluate merged models using clean accuracy (CA) and attack success rate (ASR), aiming for higher CA and lower ASR for the purified 
merged model. Specifically, we consider three complementary perspectives. \textbf{Target Task}: The effectiveness of attacks on the target task is measured by $\text{ASR}{\text{(T)}}$. 
\textbf{Non-Target Task}: cross-task backdoor effect is quantified by $\text{ASR}{\text{(N)}} = \mathbb{P}(f(x) \neq f(x_b))$, capturing how a trigger designed for 
the target task affects predictions on non-target tasks. \textbf{Persistence:} We analyze ASR(T) along the interpolation path between the backdoored merged model 
and the purified model. Specifically, \emph{Path-ASR} is defined as: $\text{Path-ASR} = \int_{0}^{1} \text{ASR}(\alpha) d\alpha$, which measures the cumulative 
backdoor vulnerability along the interpolation path. A lower Path-ASR indicates more consistent mitigation performance, even when the model is closer to the 
backdoored source (i.e., smaller $\alpha$). In addition, unlike prior works that evaluate robustness at a single interpolation coefficient $\alpha$, Path-ASR 
provides a more comprehensive assessment by accounting for the practical uncertainty of model merging, where the optimal coefficient may vary across tasks and 
deployment settings.~\footnote{For clarity, adversary tasks are marked in \advtask{\textbf{blue}} and target tasks in \targettask{\textbf{red}} in each task sequence. 
Bold and underlined values indicate the best and second-best results, respectively.}

\textbf{Additional Experiments.} We provide a comprehensive set of additional experiments in the appendix, including generality across model merging algorithms (Appendix~\ref{appendix:generality_across_model_merging_algorithms}), 
robustness under multiple compromised models (Appendix~\ref{appendix:robustness_under_multiple_compromised_models}), robustness under different task combinations (Appendix~\ref{appendix:robustness_under_different_task_combinations}), 
scalability to larger architectures (Appendix~\ref{appendix:scalability_to_larger_architectures}), empirical validation of CTL (Appendix~\ref{appendix:validation_of_CTL}), 
curvature analysis (Appendix~\ref{appendix:curvature_analysis}), computational cost comparison (Appendix~\ref{appendix:computational_cost_comparison}), efficiency analysis (Appendix~\ref{subsection:efficiency_analysis}), 
and ablation studies (Appendix~\ref{appendix:ablation_studies}).


\subsection{Defense Performance}
\label{section:defense_performance}
From Tables~\ref{table:backdoor_defense_against_BadMerging} and~\ref{table:backdoor_defense_against_LoBAM}, we observe that, compared with individual models,
backdoor triggers substantially alter the predictions of merged models on non-target tasks. Specifically, the average ASR(N) increases from 55.58\% to 69.64\% 
under BadMerging and from 39.97\% to 59.01\% under LoBAM. This effect is further amplified for alternative task sequences, as reported in Appendix~\ref{appendix:robustness_under_different_task_combinations}.
For example, under BadMerging for Task sequence 3, the average ASR(N) increases from 32.29\% to 88.09\%. These results indicate pronounced 
cross-task backdoor leakage, where target-task triggers propagate to non-target tasks through model merging.

IBVS struggles to balance backdoor mitigation and model performance, with a high average ASR(T) of 64.18\% and an average ASR(N) of 58.53\% under BadMerging, while 
incurring substantial drops in average CA ($\sim 20.77\%$). This can be attributed to the challenge of disentangling clean-task parameters from backdoor-related ones.
SAU performs well against LoBAM, but fails when full fine-tuning in BadMerging, resulting in the average ASR(T) of 63.00\% and the average ASR(N) of 63.93\%. 

Under the LoBAM, SAM and PAM achieve relatively low average ASR(T) and ASR(N), while LFPM achieves comparable average ASR and slightly higher average CA.
However, LFPM consistently outperforms them under BadMerging, achieving the lowest ASR(T) and ASR(N), as well as the highest CA across all tasks. In particular, 
LFPM outperforms SAM and PAM in average CA by 18.12\% and 11.95\%, respectively. Overall, these results demonstrate that LFPM effectively mitigates backdoor effects 
while preserving clean-task performance in both full fine-tuning and PEFT settings.

\renewcommand{\arraystretch}{1.2}
\begin{table*}[!ht]
	\centering
	\caption{Backdoor Defense Performance under BadMerging with Adaptive Attack (ACC$\uparrow$ / ASR $\downarrow$).}
	\label{table:backdoor_defense_against_adaptive_attack}
	\resizebox{0.98\textwidth}{!}{
        \begin{tabular}{c|cc|cc|cc|cc|cc|cc|ccc}
		\toprule
		Task $\to$ & \multicolumn{2}{c|}{\advtask{CIFAR100}} & \multicolumn{2}{c|}{\targettask{Cars196}} & \multicolumn{2}{c|}{SUN397} & \multicolumn{2}{c|}{EuroSAT} & \multicolumn{2}{c|}{GTSRB} & \multicolumn{2}{c|}{Pets} & \multicolumn{3}{c}{Average} \\ 
        \midrule
		Method $\downarrow$ & CA&ASR(N) & CA&ASR(T) & CA&ASR(N) & CA&ASR(N) & CA&ASR(N) & CA&ASR(N) & CA&ASR(T)&ASR(N) \\
        \midrule         
        Individual  & 84.21&98.85 &	64.05&79.31 & 71.53&56.37 & 97.83&30.61 & 94.27&49.46 & 88.11&42.60 & 83.22&79.31&55.58\\
		\midrule
		TA          & 79.69&90.08 & 53.04&99.96 & 64.43&94.23 & 56.33&89.66 & 38.73&95.10 & 83.97&66.55 & 62.69&99.96&87.12\\ 				
		\midrule \midrule
		IBVS        & 40.86&60.97 & 44.88&80.06 & 56.82&65.56 & 23.92&62.62 & 17.58&75.30 & 80.02&38.04 & 44.01&80.06&60.49 \\
		\hline
		SAU         & \underline{71.43}&95.20 & \underline{49.04}&99.20 & \textbf{61.03}&96.22 & \underline{29.46}&88.90 & \textbf{34.98}&99.25 & \textbf{83.61}&46.49 & 54.92&99.20&85.21 \\
		\hline
		SAM         & 51.09&29.35 & 28.65&32.30 & 55.56&\underline{20.70} & 17.16&\underline{40.05} & 18.89&\underline{57.61} & 75.36&\underline{7.98}  & 41.11&32.30&\underline{31.13} \\
		\hline
		PAM         & 62.34&\underline{25.62} & 46.57&\underline{18.14} & 58.63&24.55 & 18.44&54.40 & 31.07&59.78 & 77.51&10.27 & 49.09&\underline{18.14}&34.92 \\
		\hline
		LFPM        & \textbf{74.91}&\textbf{5.10} & \textbf{49.17}&\textbf{2.38} & \underline{60.60}&\textbf{5.69} & \textbf{42.50}&\textbf{11.01} & \underline{32.43}&\textbf{16.34} & \underline{82.96}&\textbf{3.51} & \textbf{57.09}&\textbf{2.38}&\textbf{8.33}  \\
		\bottomrule
	\end{tabular}}
	\vspace{-3mm}
\end{table*}  

\subsection{Robustness Evaluation along the Parameter Path}
Rather than evaluating the robustness of a merged model at a single point, the \emph{Path-ASR} provides a quantitative measure of the cumulative backdoor effect 
along the interpolation path. The results are presented in Fig.~\ref{fig:robustness_along_parameter_path}.

As shown in Fig.~\ref{fig:robustness_along_parameter_path}, SAM-like methods, including SAM, PAM, and our proposed LFPM, exhibit strong overall robustness, 
reflected by a rapid decline in ASR(T) along the interpolation path. Notably, LFPM exhibits the smallest Path-ASR (\ie the area under the ASR curve) under both 
BadMerging and LoBAM settings, indicating consistent robustness throughout the entire path.
\begin{figure}[H]
	\graphicspath{{figures/robustness_along_path/}}
	\centering
	\begin{subfigure}[b]{0.23\textwidth}
		\centering
		\includegraphics[width=\linewidth]{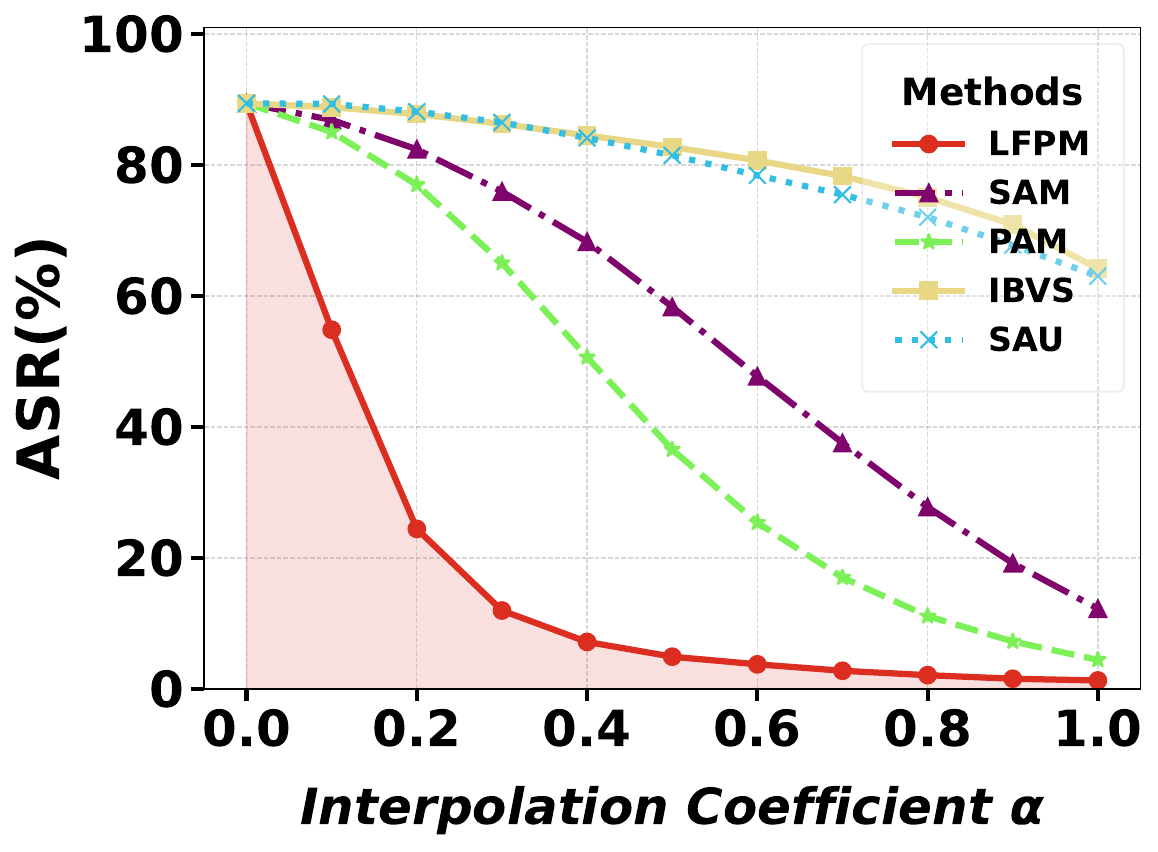}
		\caption{BadMerging}
	\end{subfigure}
	\begin{subfigure}[b]{0.23\textwidth}
		\centering
		\includegraphics[width=\linewidth]{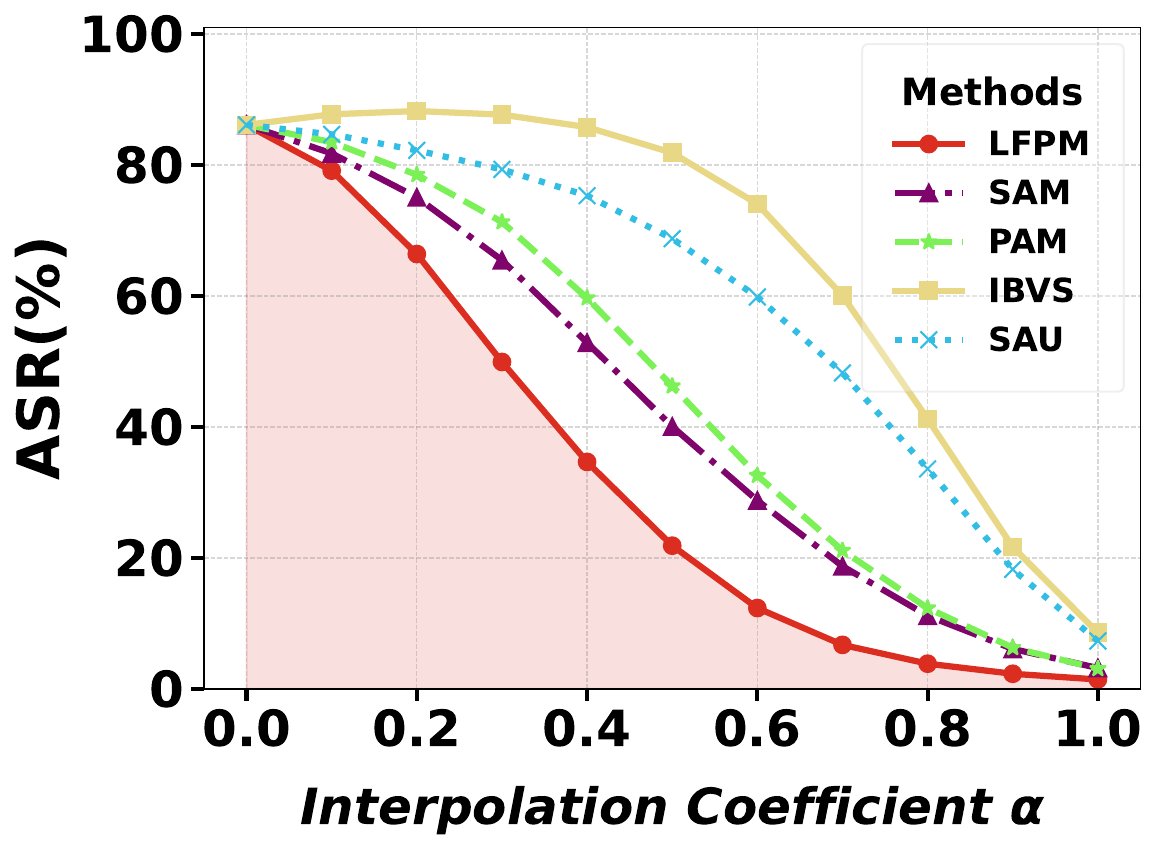}
		\caption{LoBAM}
	\end{subfigure}
	\caption{Robustness along the Interpolation Path  with Cars196 as the Target Dataset}
	\label{fig:robustness_along_parameter_path}
	\vspace{-3mm}
\end{figure}

\subsection{Robustness against Potential Adaptive Attack}
\label{section:defense_performance}

We further evaluate the robustness of LFPM under potential adaptive adversaries. Recall that LFPM is built upon two key principles: (i) \textbf{subspace partitioning}, 
which disentangles adversarial and clean feature components, and (ii) \textbf{anti-backdoor task vector optimization}, which performs robust alignment along the 
feature interpolation path. These design choices implicitly constrain how backdoor signals can be encoded and propagated during model merging.

\textbf{Adaptive attack.} An adaptive adversary aware of LFPM may attempt to bypass these mechanisms from two perspectives. First, a \emph{feature coupling attack} 
enforces strong entanglement between clean and backdoor representations, \eg by maximizing their cosine similarity on target-class samples, thereby weakening 
subspace disentanglement. Second, a \emph{feature-interpolation (FI) attack} encourages consistent target-class prediction along the interpolation trajectory 
between clean and backdoor features, directly counteracting LFPM's path-based optimization. We unify these two strategies into a single adaptive objective, where 
the loss $\mathcal{L}_{BD}(x_b,y_t)$ jointly models feature coupling and interpolation consistency:
\begin{equation}
\setlength{\abovedisplayskip}{5pt}
\setlength{\belowdisplayskip}{5pt}
\scalebox{0.90}{$
\mathcal{L}_{BD}(x_b,y_t) = \mathcal{L}_{CE}\big(h(F_{I}), y_t\big) + \cos\big(g(x_b;\theta_{adv}),\, \overline{F}\big). 
$}
\end{equation}
Here, $x_b$ and $x$ denote the backdoor and clean samples, respectively. The model is decoupled as $f = h \circ g$, where $g(\cdot)$ is the feature extractor and 
$h(\cdot)$ is the linear prediction head. Let $\overline{F} = \mathbb{E}_{x \in \mathcal{B}_t}(g(x;\theta_{pre}))$ be the mean feature of clean target-class samples 
within a mini-batch $\mathcal{B}_t$, serving as the target-class centroid. The FI term constructs interpolated features $F_I = p \cdot g(x_b;\theta_{adv}) + (1-p)\cdot g(x;\theta_{pre})$, 
where the interpolation coefficient is randomly sampled as $p \sim \mathcal{U}[0,1]$. Notably, unlike the formulation in BadMerging, we use clean sample features 
from the pretrained model as the interpolation reference, resulting in a stronger and more general attack formulation. Overall, this objective is not specific 
to LFPM, but serves as a unified and strengthened adaptive attack benchmark for evaluating robustness.

\textbf{Results and analysis.} Table~\ref{table:backdoor_defense_against_adaptive_attack} reports the performance under adaptive attacks. Compared with 
standard baselines in Table~\ref{table:backdoor_defense_against_BadMerging}, all methods exhibit varying degrees of degradation under adaptive adversaries.  
For example, the ASR increases from $63.00\%$ to $99.20\%$ for SAU, from $10.03\%$ to $32.30\%$ for SAM, and from $4.42\%$ to $18.14\%$ for PAM. In contrast, 
LFPM consistently achieves the lowest ASR, increasing only from $0.49\%$ to $2.38\%$, while maintaining competitive clean accuracy, demonstrating strong robustness 
against adaptive attack strategies. This robustness stems from the intrinsic design of LFPM. The path-wise optimization via gradient accumulation mechanism, 
which iteratively aggregates gradients from the backdoor-merged model to correct optimization directions, ensuring consistent alignment with backdoor-relevant 
signals and enabling persistent backdoor removal even under adaptive attacks.

\section{Conclusion}
In this paper, we propose LFPM, a novel backdoor defense for model merging. Unlike existing methods, LFPM formulates the backdoor robustness of the merged model 
from a unified feature-space perspective within the CTL framework, enabling effective backdoor mitigation while preserving clean-task performance. Experiments 
demonstrate that LFPM consistently achieves robustness under both full fine-tuning and PEFT settings. Future work will investigate extending and validating 
LFPM across diverse scenarios and tasks, aiming to further enhance its generalizability.

\clearpage
\section*{Acknowledgements}

This study is supported by the National Key R\&D Program of China (Grant No. 2022YFB3102100), National Natural Science Foundation of China (Grant No. U23B2058), 
Shenzhen Science and Technology Program (Grant No. ZDSYS20210623091809029), and Shenzhen Key Laboratory of Media Security (Grant No. SYSPG20241211174032004).

\section*{Impact Statement}
This paper aims to advance the field of machine learning security by studying and improving the backdoor robustness of merged models from a unified feature-space 
perspective. Our work provides a general framework for mitigating backdoor attacks while preserving clean-task performance across different fine-tuning settings. 
The potential societal benefit of this work lies in improving the robustness and reliability for model merging, which are increasingly adopted as a cost-effective 
solution for multi-task deployment. As with most defensive methods, future adaptive attacks may arise, and further research is needed to evaluate effectiveness 
at larger scales and in more diverse real-world scenarios.

\bibliography{Reference}
\bibliographystyle{icml2026}

\newpage
\appendix
\onecolumn
\section{Summary of Symbols}
\label{subsection:summary_of_symbols}
The comprehensive list of all notations is summarized in Table~\ref{table:all_notations}.
\begin{table}[ht!]
\centering
\caption{Glossary of Notations.}
\label{table:all_notations}
\setlength{\tabcolsep}{6pt}
\renewcommand{\arraystretch}{1.0}
\resizebox{0.92\linewidth}{!}{
\begin{tabular}{p{0.26\linewidth} p{0.68\linewidth}}
\toprule
\multicolumn{2}{l}{\textbf{Spaces and Datasets}} \\
\midrule
$\mathcal{X} \subseteq \mathbb{R}^d$ & Input space \\
$\mathcal{Y} \subseteq \mathbb{R}^c$ & Output (label) space \\
$\Theta \subseteq \mathbb{R}^m$ & Model parameter space \\
$\mathcal{D}_i$ & Dataset for downstream task $\tau_i$ \\
$\mathcal{D}_c$ & Clean dataset (or union of clean task datasets) \\
$\mathcal{D}_b$ & Backdoor dataset (unknown to the defender) \\
$\mathcal{D}_s$ & Shadow dataset used for defense \\
$(x,y)$ & Clean input-label pair \\
$(x_b,y_t)$ & Backdoor sample with target label $y_t$ \\
\midrule
\multicolumn{2}{l}{\textbf{Model Architecture}} \\
\midrule
$f:\mathcal{X}\times\Theta\to\mathcal{Y}$ & Neural network model \\
$f = h \circ g$ & Decomposition into feature extractor $g$ and linear prediction head $h(z) = W^\top z + b$ \\
$\mathcal{M} = \{\mathcal{V}, \mathcal{T}\}$ & CLIP-like model with visual and text encoders \\
$\mathcal{V}$ & Visual encoder (fine-tuned) \\
$\mathcal{T}$ & Text encoder (kept frozen) \\
\midrule
\multicolumn{2}{l}{\textbf{Tasks and Task Arithmetic}} \\
\midrule
$\{\tau_i\}_{i=1}^k$ & $k$ downstream tasks \\
$\vth_0$ & Pre-trained model parameters \\
$\vth_i$ & Parameters fine-tuned on task $\tau_i$ \\
$\Delta \vth_i = \vth_i - \vth_0$ & Task vector for $\tau_i$ \\
$\vth_m = \vth_0 + \sum_{i=1}^k \lambda_i \Delta \vth_i$ & Merged model parameters, with $\lambda_i$ denoting the task merging coefficients \\
\midrule
\multicolumn{2}{l}{\textbf{Cross-Task Linearity (CTL)}} \\
\midrule
$\vth(t) = (1-t)\vth_i + t \vth_j$ & The parameter interpolation path \\
$\delta = \vth_j - \vth_i$ & Parameter displacement vector \\
$\Vert\Delta(\alpha)\Vert$ & Feature deviation for CTL, with upper bound $\Vert\Delta(\alpha)\Vert \le \lambda_{max}$ \\
$H(\vth(t)) = \bigtriangledown^2 f(\vth(t))$ & The Hessian along the interpolation path \\
$\bigtriangledown_{\vth} \bigtriangledown^2 f(\vth(t))$ & Derivative of the Hessian, with upper bound $\Vert\bigtriangledown_{\vth} \bigtriangledown^2 f(\vth(t))\Vert \le J_{max}$ \\
$\Vert \xi(\alpha) \Vert$ & Approximation error of the gradient at the interpolated point \\
\midrule
\multicolumn{2}{l}{\textbf{Threat Model and Defense}} \\
\midrule
$\vth_m = \vth_0 + \Delta \vth_c + \Delta \vth_b$ & Backdoored merged model, with components $\Delta \vth_b$ (backdoor) and $\Delta \vth_c$ (clean) \\
$\vth_{k+1} = \vth_0 + \Delta \vth_{k+1}$ & Anti-backdoor task model, where $\Delta \vth_{k+1}$ is the anti-backdoor task vector \\
$\vth_m^{'} = (1-\alpha)\vth_m + \alpha \vth_{k+1}$ & Purified merged model, with $\alpha$ representing the intervention strength \\
\midrule
\multicolumn{2}{l}{\textbf{Feature-Space Decomposition}} \\
\midrule
$P$ & Learnable visual prompt \\
$x^{'} = P \oplus x$ & Adversarial input $x$ \\
$x^{''} = (P + \Delta P) \oplus x$ & Worst-case prompt-augmented input $x$ \\
$z = g(x)$ & Feature representation of input $x$ \\
$z' = g(x')$ & Feature representation of adversarial input $x'$ \\
$V = S \oplus T$ & Feature space, decomposed into adversarial and clean semantic subspaces $S$ and $T$ \\
$P_s = UU^T$ & Projection matrix onto the adversarial subspace, where $U$ is semi-orthogonal matrix\\
$P_{t} = I - P_s$ & Projection matrix onto the clean subspace \\
\midrule
\multicolumn{2}{l}{\textbf{Optimization and Loss}} \\
\midrule
$\eta$ & Learning rate \\
$\lambda$ & Scaling factor for gradient accumulation \\
$\left[\alpha_2, 1.0\right]$ & Interval used for Loss Path-Integral \\
\bottomrule
\end{tabular}}
\end{table}

\section{Algorithm}
\label{algorithm:algorithm}

\subsection{Linear Feature Path Minimization}
\label{algorithm:linear_feature_path_minimization}
\begin{algorithm*}[htbp]
\caption{Linear Feature Path Minimization (LFPM)}
\label{alg:lfpm}
\begin{algorithmic}[1]

\STATE {\bfseries Input:} 
Fine-tuning dataset $\mathcal{D}_s$;
Backdoored merged model $f_m (\vth_{m})= h_m \circ g_m$;
Pre-trained model $f_0 (\vth_{0})= h_0 \circ g_0$;
Merging coefficient $\alpha$; 
Training hyperparameters $\mathcal{H}_{\text{subspace}}$, $\mathcal{H}_{\text{anti}}$.

\STATE {\bfseries Output:} 
Purified merged model $f(\vth_m^{'})$

\vspace{0.5em}
\STATE \textcolor[rgb]{0.5,0,0.5}{\textit{// Stage I: Adversarial Feature Extraction via via Subspace Partitioning}}
\STATE Invoke Algorithm~\ref{alg:adversarial_feature_extraction} with
$(\mathcal{D}_s, f_m, f_0, \mathcal{H}_{\text{subspace}})$ to obtain:
\STATE \hspace{1em} Adversarial projection matrix $P_s$ and adversarial prompt $P$.

\vspace{0.5em}
\STATE \textcolor[rgb]{0.5,0,0.5}{\textit{// Stage II: Anti-backdoor Task Vector Optimization}}
\STATE Invoke Algorithm~\ref{alg:anti_backdoor_task_vector_optimization} with
$(\mathcal{D}_s, f_m, f_0, P, \mathcal{H}_{\text{anti}})$ to obtain:
\STATE \hspace{1em} Anti-backdoor task vector $\Delta \vth_{k+1}$.

\vspace{0.5em}
\STATE \textcolor[rgb]{0.5,0,0.5}{\textit{// Stage III: Model Purification via Task Arithmetic}}
\STATE Compute purified merged model:
\[
   \vth_m' = \vth_0 + (1-\alpha) \Delta \vth_m + \alpha \Delta \vth_{k+1}.
\]
\STATE \textbf{Return} purified model $f(\vth_m^{'} )$.

\end{algorithmic}
\end{algorithm*}









\subsection{Adversarial Feature Extraction via Subspace Partitioning}
\label{algorithm:adversarial_featur_extraction}
\begin{algorithm}[!h]
\caption{Adversarial Feature Extraction via Subspace Partitioning}
\label{alg:adversarial_feature_extraction}
\begin{algorithmic}[1]
    \STATE {\bfseries Input:} 
	Fine-tuning dataset $\mathcal{D}_s$;
    the backdoored merged model $f = h_m \circ g_m$;
	the pre-trained model $f = h_0 \circ g_0$;
    outer training epochs $E_{train}$;
	adversarial optimization steps $E_{adv}$.
    
    \STATE {\bfseries Output:} 
	Adversarial projection matrix $P_s$ and Optimized adversarial prompt $P$.
    
    \STATE \textbf{Initialize:} 
	Semi-orthogonal matrix $U \in \mathbb{R}^{n \times r}$,adversarial projection $P_s = U U^\top$, learnable visual prompt $P$;
    \vspace{0.5em}
	\STATE \textcolor[rgb]{0.5,0,0.5}{\textit{// Step I: Feature Subspace Partitioning}}
    \FOR{epoch = 1 \textbf{to} $E_{train}$}
        \FOR{each $(x,y) \in \mathcal{D}_s$}
            \STATE Extract feature from the backdoored merged-model $z = g_m(x)$.
            \STATE Project features onto adversarial and clean subspaces: $P_s z$ and $(I - P_s) z$.
            \STATE Update $U$ by minimizing the subspace partitioning objective:
            \[
            \mathcal{L}\big(h_m(P_s z), y\big) - \mathcal{L}\big(h_m((I - P_s) z), y\big) + \mathcal{R}(U),
            \quad
            \mathcal{R}(U) = \| U^\top U - I_r \|.
            \]
        \ENDFOR
	\ENDFOR
	\vspace{0.5em}
	\STATE \textcolor[rgb]{0.5,0,0.5}{\textit{// Step II: Prompt-Based Adversarial Feature Mining}}
	\FOR{step = 1 \textbf{to} $E_{adv}$}
		\FOR{each minibatch $\mathcal{B}$ sampled from $\mathcal{D}_s$}
			\STATE Construct prompt-augmented input $x^{'} = P \oplus x$.
			\STATE Extract features from clean reference model: $z_c = g_0(x)$ and $z_c^{'} = g_0(x^{'})$.
			\STATE Extract features from backdoored merged model (pre-trained model): $z = g_m(x)$ and $z^{'} = g_m(x^{'})$.
			\STATE Estimate batch-wise adversarial feature centroid: $\mu_s = \mathbb{E}_{x \sim \mathcal{B}}[ P_s z^{'}]$.
			\STATE Update prompt $P$ by minimizing:
			\[
				\mathrm{sim}\big((I - P_s) z_c^{'}, (I - P_s)z_c\big) - \mathrm{sim}(P_s z^{'}, P_s z) + \| P_s z^{'} - \mu_s \|.
			\]
		\ENDFOR
	\ENDFOR
\end{algorithmic}
\end{algorithm}

\subsection{Anti-backdoor Task Vector Optimization}
\label{algorithm:anti-backdoor_task_vector_optimization}
\captionsetup[algorithm]{skip=0pt} 
\begin{algorithm}[H]
\caption{Anti-backdoor Task Vector Optimization}
\label{alg:anti_backdoor_task_vector_optimization}
\centering
\scalebox{0.98}{%
\begin{minipage}{\linewidth}

\begin{algorithmic}[1]
\STATE {\bfseries Input:} 
Fine-tuning dataset $\mathcal{D}_s$;
Backdoored merged model $f(\vth_m) = h \circ g$;
Pre-trained model $f(\vth_0)$;
Adversarial prompt $P$;
Prompt radius $r_{\rho}$;
Learning rate $\eta$;
Gradient accumulation coefficient $\lambda$;
Initialization epochs $E_{init}$;
Anti-backdoor optimization epochs $E_{anti}$.
Boolean flag \texttt{use\_path\_integral};
Interpolation bound $\alpha_2$;
Number of interpolation points $N$;
A given model merging operator $\mathcal{M}(\cdot)$;

\STATE {\bfseries Output:} 
Optimized anti-backdoor task vector $\Delta \vth_{k+1}$.

\vspace{0.5em}
\STATE \textcolor[rgb]{0.5,0,0.5}{\textit{// Step 0: Initial Training of $\vth_{k+1}$}}
\STATE Initialize $\vth_{k+1} = \vth_0$.
\FOR{epoch = 1 \textbf{to} $E_{init}$}
    \FOR{each minibatch $\mathcal{B} \subset \mathcal{D}_s$}
        \STATE $x^{'} \leftarrow P \oplus x$.
        \STATE Update $\vth_{k+1}$ by minimizing:
        \[
            \mathcal{L}(f(x;\vth_{k+1}),y) + \mathcal{L}(f(x^{'};\vth_{k+1}),y).
        \]
    \ENDFOR
\ENDFOR

\STATE $\vth_{k+1} \leftarrow \mathcal{M}(\vth_m, \vth_{k+1})$

\vspace{1em}
\FOR{epoch = 1 \textbf{to} $E_{anti}$}
    \FOR{each minibatch $\mathcal{B} \subset \mathcal{D}_s$}
        \STATE \textcolor[rgb]{0.5,0,0.5}{\textit{// Step I: Worst-case Prompt Perturbation}}
        \STATE $x^{''} \leftarrow (P+\Delta P) \oplus x$.
        \IF{\texttt{use\_path\_integral}}
            \STATE Compute Path-Integral Loss:
			\vspace{-3mm}
            \[
            \mathcal{L}(x^{''};\vth_{m},\vth_{k+1}) = \frac{1}{N}\sum_{i=i_2}^{N}\mathcal{L}\big(h\big((1-t_i)g(x^{'};\vth_m) + t_i g(x^{'};\vth_{k+1})\big);y\big)
            \vspace{-3mm}
			\]
        \ELSE
            \STATE Compute standard feature-space interpolation loss:
			\[
            \mathcal{L}(x^{''};\vth_{m},\vth_{k+1}) = \mathcal{L} \big(h\big((1-\alpha) g(x^{''};\vth_m) + \alpha g(x^{''};\vth_{k+1})\big),y\big).
            \vspace{-3mm}
			\]
        \ENDIF
        \STATE Update $\Delta P$ by maximizing $\mathcal{L}(x^{''};\vth_{m},\vth_{k+1})$, subject to $|\Delta P| \le r_{\rho}$.

        \vspace{0.5em}
        \STATE \textcolor[rgb]{0.5,0,0.5}{\textit{// Step II: Anti-backdoor Task Optimization via Gradient Accumulation}}
        \STATE Recompute adversarial inputs $x^{''} = (P+\Delta P) \oplus x$ using worst-case $\Delta P$.
        \IF{\texttt{use\_path\_integral}}
            \STATE Recompute Path-Integral Loss:
			\vspace{-3mm}
            \[
            \mathcal{L}(x^{''};\vth_{m},\vth_{k+1}) = \frac{1}{N}\sum_{i=i_2}^{N}\mathcal{L}\big(h\big((1-t_i)g(x^{'};\vth_m) + t_i g(x^{'};\vth_{k+1})\big);y\big)
			\vspace{-3mm}
            \]
        \ELSE
            \STATE Recompute standard feature-space interpolation loss:
            \vspace{-3mm}
            \[
            \mathcal{L}(x^{''};\vth_{m},\vth_{k+1}) = \mathcal{L} \big(h\big((1-\alpha) g(x^{''};\vth_m) + \alpha g(x^{''};\vth_{k+1})\big),y\big).
			\vspace{-3mm}
            \]
        \ENDIF
        \STATE Compute total loss:
        \[
          \mathcal{L} \leftarrow \mathcal{L}(f(x;\vth_{k+1}),y) + \mathcal{L}(x^{''};\vth_{m},\vth_{k+1}).
        \]
        \STATE Update anti-backdoor model:
        \[
            \vth_{k+1} \leftarrow \vth_{k+1} - \eta \big(\nabla_{\vth_{k+1}}\mathcal{L} + \lambda \nabla_{\vth_m}\mathcal{L}\big).
        \]
    \ENDFOR
\ENDFOR

\STATE \textbf{Return} Anti-backdoor task vector: $\Delta \vth_{k+1} = \vth_{k+1} - \vth_0$.

\end{algorithmic}
\end{minipage}} 
\end{algorithm}

\subsection{Curvature Evaluation via Hessian--Vector Products (HVPs)}
\label{algorithm:curvature_evaluation_via_HVPs}
\begin{algorithm*}[!h]
\caption{Curvature Evaluation via Hessian--Vector Products (HVPs)}
\label{alg:curvature_evaluation}
\begin{algorithmic}[1]
\STATE {\bfseries Input:} 
Backdoored merged model $\theta_m$; 
Anti-backdoor model $\theta_{k+1}$; 
Number of mini-batches $N$; 
Adversarial dataset $\mathcal{D}_{\mathrm{adv}}$; 
Finite-difference step $\epsilon$.

\STATE {\bfseries Output:} 
Average quadratic form $\overline{\mathcal{C}}(\theta; \delta)$ and average effective curvature $\overline{\lambda}_{\mathrm{eff}}$.

\vspace{0.5em}
\STATE \textbf{Initialize:}
$\mathcal{C}_{\mathrm{sum}} \gets 0$, \quad $\lambda_{\mathrm{sum}} \gets 0$.

\vspace{0.5em}
\STATE \textcolor[rgb]{0.5,0,0.5}{\textit{// Step I: Compute Parameter Difference}}
\STATE $\delta \gets \theta_{k+1} - \theta_m$.
\IF{normalize direction}
    \STATE $\delta \gets \delta / \|\delta\|$.
\ENDIF

\vspace{0.5em}
\FOR{$b = 1$ \textbf{to} $N$}
    \STATE Sample mini-batch $\mathcal{B}_b = \{(x_i', y_i)\}$ from $\mathcal{D}_{\mathrm{adv}}$.

    \vspace{0.3em}
    \STATE \textcolor[rgb]{0.5,0,0.5}{\textit{// Step II: Gradient at Base Parameters}}
    \STATE Compute loss $\mathcal{L}(\theta_{k+1}, \mathcal{B}_b)$.
    \STATE Compute gradient $g_0 \gets \nabla_\theta \mathcal{L}(\theta_{k+1})$.

    \vspace{0.3em}
    \STATE \textcolor[rgb]{0.5,0,0.5}{\textit{// Step III: Gradient at Perturbed Parameters}}
    \STATE $\theta_\epsilon \gets \theta_{k+1} + \epsilon \delta$.
    \STATE Compute loss $\mathcal{L}(\theta_\epsilon, \mathcal{B}_b)$.
    \STATE Compute gradient $g_\epsilon \gets \nabla_\theta \mathcal{L}(\theta_\epsilon)$.

    \vspace{0.3em}
    \STATE \textcolor[rgb]{0.5,0,0.5}{\textit{// Step IV: Estimate HVP}}
    \STATE $H_\theta \delta \gets (g_\epsilon - g_0)/\epsilon$.

    \vspace{0.3em}
    \STATE \textcolor[rgb]{0.5,0,0.5}{\textit{// Step V: Compute Curvature Metrics}}
    \STATE $\mathcal{C}_b \gets \langle \delta, H_\theta \delta \rangle$.
    \STATE $\lambda_b \gets \mathcal{C}_b / \|\delta\|^2$.

    \STATE $\mathcal{C}_{\mathrm{sum}} \gets \mathcal{C}_{\mathrm{sum}} + \mathcal{C}_b$.
    \STATE $\lambda_{\mathrm{sum}} \gets \lambda_{\mathrm{sum}} + \lambda_b$.
\ENDFOR

\vspace{0.5em}
\STATE \textbf{Step 6: Average over Mini-batches}
\STATE $\overline{\mathcal{C}}(\theta; \delta) \gets \mathcal{C}_{\mathrm{sum}} / N$.
\STATE $\overline{\lambda}_{\mathrm{eff}} \gets \lambda_{\mathrm{sum}} / N$.

\vspace{0.5em}
\STATE \textbf{Return} $\overline{\mathcal{C}}(\theta; \delta)$, $\overline{\lambda}_{\mathrm{eff}}$.
\end{algorithmic}
\end{algorithm*}

\section{Proofs}

\subsection{Proof of \Cref{theorem:CTL_deviation}}
\label{proof:deviation_of_CTL}
For convenience, we restate Theorem~\ref{theorem:CTL_deviation}.

\begin{theorem}[\textbf{Deviation Bound for CTL}]
Let $f:\mathbb{R}^{d} \to \mathbb{R}^{c}$ be a twice continuously differentiable function defined on an open convex set $\Theta \subseteq\mathbb{R}^m$. 
Consider two fine-tuned models $\vth_i, \vth_j \in \Theta$ originating from the same pretrained checkpoint, and define the linear interpolation path
\begin{equation}
\vth(t) = (1-t)\vth_i + t \vth_j = \vth_i + t\delta,
\end{equation}
where $\delta := \vth_j - \vth_i$ and $t \in \left[ 0,1 \right]$.

For $\forall \alpha\in(0,1)$, define the feature deviation for CTL as $\Delta(\alpha) := f(\vth(\alpha)) - (1-\alpha)f(\vth_{i}) - \alpha f(\vth_{j})$.
Then, the norm of  $\Delta(\alpha)$ admits the path-integral form
\begin{equation}
\left\| \Delta(\alpha) \right\| = \int_{0}^{1} k_{\alpha}(t) \delta^{T} \bigtriangledown^{2}f(\vth(t))\delta dt.
\end{equation}
where $k_{\alpha}(t) = \left\{\begin{matrix}(1-\alpha) t , 0 \le t\le \alpha, \\\alpha(1-t) , \alpha\le t\le 1. \end{matrix}\right.$

Moreover, the deviation is bounded by
\begin{equation}
\left \| \Delta (\alpha) \right \|  \le \left \| \delta \right \|^2 \left[ (1-\alpha) \int_{0}^{\alpha} t H(t)dt + \alpha \int_{\alpha}^{1} (1-t) H(t) dt \right]
\end{equation}
where $H(\theta(t)) = \bigtriangledown^{2}f(\theta(t))$ denotes the Hessian along the interpolation path.
\end{theorem}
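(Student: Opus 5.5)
We reduce everything to a one-dimensional problem along the segment. Set $\phi(t) := f(\vth(t)) = f(\vth_i + t\delta)$ for $t\in[0,1]$; this is well defined because $\Theta$ is convex. Since $f$ is $C^2$, $\phi$ is $C^2$ on $[0,1]$ with
\begin{equation*}
\phi''(t) = \delta^{T}\nabla^{2} f(\vth(t))\,\delta .
\end{equation*}
Following the Remark, we treat $f$ as scalar-valued and apply the argument componentwise. In this notation the deviation is exactly the error of linear interpolation of $\phi$ at the nodes $0$ and $1$:
\begin{equation*}
\Delta(\alpha) = \phi(\alpha) - (1-\alpha)\phi(0) - \alpha\phi(1).
\end{equation*}

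The key step is the Green's-function (Peano kernel) identity for this interpolation error:
\begin{equation*}
\phi(\alpha) - (1-\alpha)\phi(0) - \alpha\phi(1) = -\int_0^1 k_\alpha(t)\,\phi''(t)\,dt ,
\end{equation*}
where $k_\alpha$ is the tent kernel in the statement. I would prove it directly by splitting the integral at $t=\alpha$ and integrating by parts twice on each piece.

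On $[0,\alpha]$, using $\int_0^\alpha t\,\phi''(t)\,dt = \alpha\phi'(\alpha) - \phi(\alpha) + \phi(0)$, we get
\begin{equation*}
(1-\alpha)\int_0^\alpha t\,\phi''(t)\,dt = (1-\alpha)\bigl[\alpha\phi'(\alpha) - \phi(\alpha) + \phi(0)\bigr].
\end{equation*}
On $[\alpha,1]$, using $\int_\alpha^1 (1-t)\,\phi''(t)\,dt = -(1-\alpha)\phi'(\alpha) + \phi(1) - \phi(\alpha)$, we get
\begin{equation*}
\alpha\int_\alpha^1 (1-t)\,\phi''(t)\,dt = \alpha\bigl[-(1-\alpha)\phi'(\alpha) + \phi(1) - \phi(\alpha)\bigr].
\end{equation*}
Adding the two, the $\phi'(\alpha)$ terms cancel, and the sum equals $(1-\alpha)\phi(0) + \alpha\phi(1) - \phi(\alpha) = -\Delta(\alpha)$. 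Taking norms removes the sign and gives the claimed path-integral form of $\|\Delta(\alpha)\|$. The main obstacle is just this bookkeeping of boundary terms and the sign. The restated version in the appendix omits the outer norm and the minus sign; I would prove the signed identity and then state the norm version as in the main-text theorem.

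For the bound, move the norm inside the integral and use $k_\alpha\ge 0$ together with
\begin{equation*}
|\delta^{T} H(t)\,\delta| \le \|H(t)\|\,\|\delta\|^2 ,
\end{equation*}
interpreting $H(t)$ in the bound as the operator norm $\|\nabla^2 f(\vth(t))\|$. Splitting the kernel at $t=\alpha$ then yields
\begin{equation*}
\|\Delta(\alpha)\| \le \|\delta\|^2\Bigl[(1-\alpha)\int_0^\alpha t\,H(t)\,dt + \alpha\int_\alpha^1 (1-t)\,H(t)\,dt\Bigr].
\end{equation*}
As a check, setting $H\le\lambda_{\max}$ and using $\int_0^1 k_\alpha(t)\,dt = \alpha(1-\alpha)/2$ recovers the simplified bound in Eq.~\ref{eq:deviation_bound_for_CTL}.
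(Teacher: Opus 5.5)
Your proof is correct and is essentially the paper's argument: reduce to $\phi(t)=f(\vth(t))$, obtain the tent-kernel identity from the integral form of Taylor's remainder, then pull the norm inside using $k_\alpha\ge 0$. The only difference is bookkeeping: you integrate by parts on $[0,\alpha]$ and $[\alpha,1]$ so the $\phi'(\alpha)$ terms cancel, whereas the paper expands $\phi(\alpha)$ and $\phi(1)$ about $t=0$ so the $\nabla f(\vth_i)^{T}\delta$ terms cancel. Your signed identity $\Delta(\alpha)=-\int_0^1 k_\alpha(t)\,\phi''(t)\,dt$ is the correct one (the paper's final combination line has a sign slip between the two pieces), and you are right to prove the normed main-text version, since the appendix form without a norm on the right-hand side fails whenever $\int_0^1 k_\alpha\phi''\,dt<0$, e.g.\ for $\phi(t)=-t^2$.
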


\begin{proof}

\noindent\textit{Remark.} Although the output of $f$ is vector-valued, for simplicity we treat it as scalar-valued in the analysis; for vector-valued outputs, all results hold element-wise. This convention applies throughout the proof.

Let $ f:\mathbb{R}^{d} \to \mathbb{R}^{c}$ be a twice continuously differentiable function defined on an open convex set $\Theta \subseteq\mathbb{R}^m$. 
Consider two fine-tuned models $\vth_i, \vth_j \in \Theta$ originating from the same pretrained checkpoint, and define the linear interpolation path
\begin{equation}
    \vth(t) = (1-t)\vth_i + t \vth_j = \vth_i + t\delta,
\end{equation}
where $\delta := \vth_j - \vth_i$ and $t \in \left[ 0,1 \right]$.

\noindent \textbf{Step 1: Taylor expansion along the interpolation path.} \medskip

By the second-order Taylor expansion with integral remainder, for $\forall t_0,t_1 \in \left[ 0,1 \right]$, we have
\begin{equation}
    f(\vth(t_1)) = f(\vth(t_0)) + \bigtriangledown f(\vth(t_0))^T(\vth(t_1)-\vth(t_0)) + \int_{t_0}^{t_1} (t_1-t)\delta^{T} \bigtriangledown^{2}f(\theta(t))\delta dt.
\end{equation}
In particular, taking $t_1= \alpha$, $t_0 = 0$, we obtain:
\begin{equation}
\begin{aligned}
\label{eq:taylor_expansion_for_theta_alpha}
f(\vth(\alpha)) &= f(\vth_i) + \bigtriangledown f(\vth_i)^T(\vth(\alpha)-\vth_i) + \int_{0}^{\alpha} (\alpha-t)\delta^{T} \bigtriangledown^{2}f(\vth(t))\delta dt \\
                &= f(\vth_i) + \alpha \bigtriangledown f(\vth_i)^T \delta + \int_{0}^{\alpha} (\alpha-t)\delta^{T} \bigtriangledown^{2}f(\vth(t))\delta dt,
\end{aligned}
\end{equation}
and taking $t_1= 1$, $t_0 = 0$ gives
\begin{equation}
\label{eq:taylor_expansion_for_theta_j}
f(\vth_j) = f(\vth_i) + \bigtriangledown f(\vth_i)^T \delta + \int_{0}^{1} (1-t)\delta^{T} \bigtriangledown^{2}f(\vth_t)\delta dt.
\end{equation}

\noindent \textbf{Step 2: The deviation for CTL.}\medskip

Combining Eqs.~\ref{eq:taylor_expansion_for_theta_alpha} and~\ref{eq:taylor_expansion_for_theta_j}, the deviation $\Delta(\alpha)$ from linear interpolation in function space, 
can be expressed as
\begin{equation}
\begin{aligned}
\label{eq:the_feature_deviation_for_CTL}
\Vert \Delta(\alpha) \Vert&= \Vert f(\vth(\alpha)) - \left[(1-\alpha)f(\vth_i) + \alpha f(\vth_j)\right] \Vert\\
    &= \Vert f(\vth_i) + \alpha \bigtriangledown f(\vth_i)^T \delta + \int_{0}^{\alpha} (\alpha-t)\delta^{T} \bigtriangledown^{2}f(\vth(t))\delta dt \\
    & \quad - \left[(1-\alpha)f(\vth_i) -\alpha \big( f(\vth_i) + \bigtriangledown f(\vth_i)^T\delta + \int_{0}^{1} (1-t)\delta^{T} \bigtriangledown^{2}f(\vth(t))\delta dt\big)\right] \Vert \\
    &= \Vert \int_{0}^{\alpha} (\alpha-t)\delta^{T} \bigtriangledown^{2}f(\vth(t))\delta dt - \alpha \int_{0}^{1} (1-t)\delta^{T} \bigtriangledown^{2}f(\vth(t))\delta dt \Vert \\
    &= \Vert \int_{0}^{\alpha} (\alpha-t)\delta^{T} \bigtriangledown^{2}f(\theta(t))\delta dt  - \alpha \int_{0}^{\alpha} (1-t)\delta^{T} \bigtriangledown^{2}f(\theta(t))\delta dt - \alpha \int_{\alpha}^{1} (1-t)\delta^{T} \bigtriangledown^{2}f(\theta(t))\delta dt  \Vert \\
    &=  \Vert (1-\alpha)\int_{0}^{\alpha}  t \delta^{T} \bigtriangledown^{2}f(\theta(t))\delta dt  - \alpha \big(\int_{\alpha}^{1} (1-t)\delta^{T} \bigtriangledown^{2}f(\theta(t))\delta dt \big) \Vert.
\end{aligned}
\end{equation}

Defining the weighting kernel $ k_{\alpha}(t)$ as
\begin{equation}
    k_{\alpha}(t) = \left\{\begin{matrix}(1-\alpha) t , 0 \le t\le \alpha, \\
        \alpha(1-t) , \alpha\le t\le 1. 
\end{matrix}\right.
\end{equation}

Then $\Delta(\alpha)$ admits the path-integral form
\begin{equation}
\left\| \Delta(\alpha) \right\|= \left\| \int_{0}^{1} k_{\alpha}(t) \delta^{T} \bigtriangledown^{2}f(\vth(t))\delta dt  \right\|.
\end{equation}

\textbf{Step 3: Deviation bound for CTL under a bounded hessian} \medskip

Assuming the Hessian $H(t)$ is uniformly bounded along the path, \ie there exist constants $ 0 \le \lambda_{\max}$ such that
\begin{equation}
\left\| H(t) \right\| \le \lambda_{max} 
\end{equation}
we immediately obtain
\begin{equation}
\int_{0}^{1} k_{\alpha}(t) \delta^{T} H(t) \delta dt \le \lambda_{max} \left \| \delta \right \|^2 \int_{0}^{1}  k_{\alpha}(t) dt.
\end{equation}
Finally, computing the integral of the weighting kernel:
\begin{equation}
\int_{0}^{1} k_{\alpha}(t) dt = (1-\alpha)\int_{0}^{\alpha}  t  dt  +\alpha \int_{\alpha}^{1} (1-t) dt = (1-\alpha)\cdot \frac{\alpha^2}{2} + \alpha \cdot \frac{(1-\alpha)^2}{2} = \frac{\alpha (1-\alpha) }{2}.
\end{equation}
We arrive at the path-integral bound for the CTL deviation:
\begin{equation}
\int_{0}^{1} k_{\alpha}(t) \delta^{T} H(t) \delta dt \le \frac{\alpha (1-\alpha) \lambda_{max} \left\| \delta \right\|^2}{2}.
\end{equation}

\end{proof}


\subsection{Proof of Lemma~\ref{lemma:adversarial_perturbation_orthogonality_principle}}
\label{proof:adversarial_perturbation_orthogonality_principle}
For convenience, we restate Lemma~\ref{lemma:adversarial_perturbation_orthogonality_principle}.
\begin{lemma}[\textbf{Adversarial Perturbation Orthogonality Principle}]
Let $f = h \circ g$, where $g$ is a feature extractor and $h(z)=W^\top z + b$ is a linear prediction head for a downstream task.
Suppose an adversarial input $\vx'=\vx+\Delta\vx$ preserves the clean-task semantics, \ie
\begin{equation}
f(x^{'};\vth_{m}^{'}) = f(x;\vth_{m}).
\end{equation}
Then, under a first-order approximation of $g$, the induced feature perturbation $\Delta z = g(x{'})- g(x)$ must satisfy
\begin{equation}
\Delta z \perp \mathrm{span}(W),
\end{equation}
where $\mathrm{span}(W)$ denotes the feature subspace spanned by the clean-task feature vectors.
\end{lemma}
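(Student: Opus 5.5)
The plan is to exploit the linearity of the head $h$. Because $h(z)=W^\top z+b$ is affine, the output difference depends on the features only through $\Delta z$. Write $g=g(\cdot;\vth_m^{'})$ for the feature extractor of the model at which semantics is evaluated. We read the hypothesis as $f(x^{'};\vth_m^{'})=f(x;\vth_m^{'})$, the version in the main-text statement, since the same parameter must appear on both sides for the feature difference $\Delta z$ to be well defined. Then
\begin{equation}
f(x^{'};\vth_m^{'})-f(x;\vth_m^{'}) = W^\top g(x^{'}) + b - W^\top g(x) - b = W^\top \Delta z .
\end{equation}
So the semantic-preservation hypothesis is exactly $W^\top \Delta z = 0$. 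Here the bias cancels and no approximation is needed at the head.

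Next we invoke the first-order approximation of $g$: $\Delta z \approx J_g(x)\Delta x$, where $J_g(x)$ is the Jacobian of $g$ at $x$. This identifies the feature perturbation with a linear image of the input perturbation. The role of this step is only to make $\Delta z$ a well-defined small vector attached to $x$, linear in $\Delta x$. The conclusion $W^\top J_g(x)\Delta x=0$ then holds to first order, with remainder $O(\|\Delta x\|^2)$.

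Finally we translate $W^\top \Delta z = 0$ into orthogonality. Writing $W=[w_1,\dots,w_c]$, the condition says $\langle w_k,\Delta z\rangle = 0$ for every class vector $w_k$. By linearity of the inner product this gives $\langle v,\Delta z\rangle=0$ for all $v\in\mathrm{span}(W)$, i.e.\ $\Delta z\in\mathrm{span}(W)^\perp=\ker W^\top$. Conversely, any component of $\Delta z$ in $\mathrm{span}(W)$ would produce a nonzero change in some logit. This justifies the word ``must'' and connects to the decomposition $V=S\oplus T$ used later, with $T\supseteq\mathrm{span}(W)$.

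The main obstacle is interpretive rather than technical. If ``preserves semantics'' is meant only as equal predicted label, the equality $W^\top\Delta z=0$ is too strong. In that case I would state the result under exact logit equality, as in the lemma, and remark that under label-only preservation one gets the weaker conclusion that $\Delta z$ is orthogonal to the logit-difference directions $w_k-w_{y}$. A second subtlety is the mismatch between $\vth_m$ and $\vth_m^{'}$ in the restated hypothesis $f(x^{'};\vth_m^{'})=f(x;\vth_m)$. I would handle it by fixing a single parameter $\vth_m^{'}$ on both sides, as in the main-text version, since otherwise $\Delta z$ is not a feature difference of a single extractor.
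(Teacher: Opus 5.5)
Your argument is correct under your reading, and it is more direct than the paper's.

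\textbf{The paper's route.} It keeps the mismatched hypothesis and relaxes it to $\min_{\Delta x}\|f(x';\vth_m')-f(x;\vth_m)\|$. It then Taylor-expands $f(x+\Delta x;\vth_m+\alpha\delta)$ jointly in input and parameters around $(x,\vth_m)$ and keeps only the $\Delta x$-dependent terms. These are pushed to feature space via $\Delta z\approx J_g(x)\Delta x$, and linearity of $h$ ($\nabla_z h=W$, $H_{zz}=0$, $H_{z\theta}=0$) reduces everything to $\min_{\Delta z}\|\Delta z^\top W\|$, whose nonzero minimizers lie in $\ker W^\top$.

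\textbf{Your route.} You use the exact affine identity $f(x';\vth_m')-f(x;\vth_m')=W^\top\Delta z$, so hypothesis and conclusion are equivalent and no expansion is needed. The first-order approximation of $g$ becomes purely cosmetic. This isolates linearity of the head as the only ingredient. It also avoids a variational statement that presupposes $\ker W^\top\neq\{0\}$.

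\textbf{The parameter mismatch.} The paper's expansion is meant to absorb the parameter shift $\alpha\delta$. It does so by dropping the $\Delta x$-independent terms, which is exactly its Step~1 assumption that the backdoored and purified models agree on clean inputs. Setting the cross term to zero also ignores that $J_g$ itself depends on the parameters. Measuring $\Delta z$ with the purified extractor, as you do, absorbs that term to first order. You can therefore cover the literal restated hypothesis in one line. Take $\Delta z$ computed at $\vth_m'$ and the head $(W,b)$ shared, since it comes from the frozen text encoder. Then
\[
f(x';\vth_m')-f(x;\vth_m)=W^\top\Delta z+\bigl(f(x;\vth_m')-f(x;\vth_m)\bigr),
\]
and the bracket vanishes under that same clean-consistency assumption.

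\textbf{Correction to your side remark.} Preserving only the predicted label yields strict inequalities $\langle w_y-w_k,\,z+\Delta z\rangle+b_y-b_k>0$, not orthogonality. Orthogonality to the directions $w_k-w_y$ instead corresponds to preserving the softmax output, i.e., the logits up to a common shift.
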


\begin{proof}
We recall that $f = h \circ g$, where $g$ is a feature extractor and $h$ is a linear prediction head for a downstream task, with weight matrix $W$.
For an input $x$, we denote its feature representation by $z=g(x)$, so that
\begin{equation}
  f(x) = h(z) = W^{T}z+b.
\end{equation}

\noindent\textbf{Step 1: Semantics-preserving perturbation constraint.}
\medskip

Following Section~\ref{section:problem_formulation}, the defender mitigates the backdoor effects of the merged model $\vth_{m}$ by constructing an anti-backdoor 
task vector $\vth_{k+1}$, such that:
\begin{equation} 
\vth_{m}^{'} =\vth_{0}+ (1-\alpha) \Delta \vth_{m} + \alpha \Delta \vth_{k+1} =\vth_{m}+ \alpha \delta \\
\end{equation} 
where $\delta = \Delta \vth_{k+1} - \Delta \vth_{m}$, and $\alpha$ controls the strength of the anti-backdoor intervention.

Revisiting the surrogate optimization objective in Section~\ref{section:problem_formulation}, adversarial inputs $\vx' = \vx + \Delta \vx$ are required 
to satisfy the two conditions. Specifically, (i) $x^{'}$ should activate backdoor-related representations in the backdoored merged model, and (ii) it should 
preserve clean semantic representations in the purified model. These conditions can be formalized as
\begin{equation}
\left\{\begin{matrix}
f(x^{'};\vth_{m}) \approx f(x_b;\vth_{m}) \\
f(x^{'};\vth_{m}^{'}) \approx f(x;\vth_{m}^{'}).
\end{matrix}\right.
\end{equation}

Following the standard backdoor definition, where the backdoored merged model $\vth_m$ preserves predictions on clean inputs $x$ and thus behaves consistently 
with the purified model $\vth_m^{'}$. Based on this property, we relax the clean-feature constraint into a differentiable objective by minimizing the output 
discrepancy.
\begin{equation}
    \label{eq:perturbation_constraint_appendix}
    \min_{\Delta{x}} \left\|f(x^{'};\vth_{m}^{'}) - f(x;\vth_{m}) \right\|
\end{equation}

\noindent\textbf{Step 2: Second-order input-parameter expansion and First-order feature-space approximation.}
\medskip

We futher expand $f(x';\vth_{m}^{'})$ around $(x,\vth_{m})$ using a second-order Taylor expansion with respect to both the input $x$ and the parameters $\vth_{m}$,
we have
\begin{equation}
\begin{aligned}
f(x^{'};\vth_{m}^{'}) &= f(x + \Delta x; \vth_{m} + \alpha \delta) \\
& = f(x;\vth_{m}) + \Delta x^{T}\bigtriangledown_{x} f(x;\vth_{m}) + \alpha * \delta^{T} \bigtriangledown_{\vth} f(x;\vth_{m}) \\
& + \frac{1}{2} \Delta x^T \bigtriangledown_{x}^{2} f(x;\vth_{m})\Delta x + \frac{\alpha^2}{2} \delta^T \bigtriangledown_{\theta}^{2} f(x;\vth_{m})\delta 
+ \alpha \Delta x^T \bigtriangledown_{x\theta}^{2} f(x;\vth_{m})\delta + R_{3}(\delta). 
\end{aligned}
\end{equation}
where $R_{3}(\delta)$ collects higher-order remainder terms.

Focusing only on the terms involving $\Delta x$, Eq.~\eqref{eq:perturbation_constraint_appendix} reduces to 
\begin{equation}
  \label{eq:min_delta_x}
  \min_{\Delta{x}}\left \| T(\Delta x, \delta)  \right \| 
  = \left \| \Delta x^{T}\bigtriangledown_{x} f(x;\vth_{m}) + \frac{1}{2}\Delta x^T \bigtriangledown_{x}^{2} f(x;\vth_{m}) \Delta x + \alpha \Delta x^T \bigtriangledown_{x\theta}^{2} f(x;\vth_{m})\delta \right \| 
\end{equation}

Since $f = h \circ g$, the feature representation is given by $z = g(x)$ and the model output can be written as $f(x) = h(g(x))$. Under a first-order approximation 
of the feature extractor $g(x)$ around input $x$, the induced adversarial feature $\Delta z$ by $x^{'}$ satisfies 
\begin{equation}
  \label{eq:first-order_approximation_for_feature_extractor_g}
  \Delta z = g(x + \Delta x) - g(x) = J_{g}(x) \Delta x + R_2(\left \| \Delta x \right \|^{2})
\end{equation}
where $J_{g}(x)$ denotes the the Jacobian of $g(\vx)$ at $\vx$.

Consequently, under the first-order approximation $\Delta z \approx J_g(x)\Delta x$ and neglecting higher-order remainder terms, the input perturbation $\Delta x$ 
can be reparameterized in the feature space. Applying the chain rule, the objective in Eq.~\eqref{eq:min_delta_x} can be expressed as 
\begin{equation}
T(\Delta z,\delta)  = \Delta z^T \bigtriangledown_{z}h(z) +  \Delta z^T H_{zz}\Delta z + \alpha \Delta z^T H_{z\theta} \delta
\end{equation}
where $H_{zz} = \bigtriangledown_{z}^{2}h(z) $ and $H_{z\theta} = \bigtriangledown_{z\theta}^{2}h(z) $.

\noindent\textbf{Step 3: Output variation under linear prediction head.}
\medskip

For a linear prediction head $h(z) = W^\top z + b$, we have
\begin{equation}
\bigtriangledown_{z}h(z) = W,  \qquad  H_{zz} = 0,  \qquad  H_{z\theta} = 0
\end{equation}
Therefore, the objective reduces to
\begin{equation}
T(\Delta z,\delta) = \Delta z^T W
\end{equation}
and the minimization problem becomes
\begin{equation}
\min_{\Delta z} \left \|\Delta z^T W \right \|
\end{equation}

\noindent\textbf{Step 4: Orthogonality conclusion.}
\medskip

Since $W$ spans the feature subspace associated with the clean-task predictions, for non-trivial perturbations $\Delta z \neq 0$, the minimum is achieved
when $\Delta z$ lies in the null space of $W^\top$, \ie
\begin{equation}
 \Delta z \perp \mathrm{span}(W)
\end{equation}
This indicates that, to preserve clean-task semantics under adversarial perturbations, the induced feature perturbation must be orthogonal to the clean-task feature subspace.
\end{proof}

\subsection{Proof of Mapping Feature Perturbation to Prompt Perturbation}
\label{proof:proof_sam_in_prompt_space}

In Section~\ref{section:anti-backdoor_task_vector_optimization}, we show that under the Cross-Task Linearity (CTL) condition (Eq.~\ref{eq:cross_task_linearity}), 
the original SAM-like optimization along the parameter interpolation path can be equivalently reformulated as robust optimization in feature space:
\begin{equation}
\setlength{\abovedisplayskip}{6pt}
\setlength{\belowdisplayskip}{6pt}
\label{eq:SAM-like_optimization_in_feature_space}
\min_{\Delta \vth_{k+1}}\max_{\left | \Delta z  \right | \le \rho } \mathcal{L}\big(h\big(\alpha g(x^{'};\vth_{m}) + (1-\alpha) g(x^{'}; \vth_{k+1})\big) + \Delta z; y\big)
\end{equation}
where $\alpha g(x^{'};\vth_{m}) + (1-\alpha) g(x^{'}; \vth_{k+1})$ denotes the interpolation of the feature outputs from the feature extractor $g$, and 
$\Delta z \in \left\{ v:\left || v\right || \le \rho\right\}$, which represents the allowable range of feature-space perturbations.

\noindent\textbf{Step 1: Feature representation under a prompt.}

Since $x^{'} = P \oplus x$, we denote the corresponding feature representation $z$ as
\begin{equation}
\setlength{\abovedisplayskip}{6pt}
\setlength{\belowdisplayskip}{6pt}
z = g(P\oplus x;\vth)
\end{equation}
For an initial prompt $P_0$, the feature deviation by induced the prompt-space perturbation $\Delta P$ is 
\begin{equation}
\setlength{\abovedisplayskip}{6pt}
\setlength{\belowdisplayskip}{6pt}
\Delta z = g((P_0 + \Delta P) \oplus x; \vth) - g(P \oplus x;\vth).
\end{equation}

\noindent\textbf{Step 2: Bounding prompt-space perturbations.}

Applying a first-order Taylor expansion of $g$ with respect to the prompt around $P_0$, we obtain
\begin{equation}
    g((P_0 + \Delta P) \oplus x; \vth) \approx g(P_0 \oplus x;\vth) + J_{P} \Delta P, 
\end{equation}
where $J_P = \left. \frac{\partial z(P)}{\partial P} \right|_{P = P_0}$ denotes the Jacobian matrix of the feature representation $z$ with respect to the 
learnable prompt $P$.
This yields the linear relation:
\begin{equation}
\label{eq:feature_perturbation_linear_approx}
\Delta z \approx  J_{P} \Delta P
\end{equation}

From Eq.~\ref{eq:SAM-like_optimization_in_feature_space}, the feature deviations induced by CTL satisfy $\Vert\Delta z \Vert \le \rho$, Consequently, combining 
this with the linear approximation in Eq.~\ref{eq:feature_perturbation_linear_approx}, the corresponding prompt-space perturbation $\Delta P$ is bounded as
\begin{equation}
\setlength{\abovedisplayskip}{6pt}
\setlength{\belowdisplayskip}{6pt}
\Vert\Delta P \Vert \le r_\rho, \quad \text{with} \quad r_\rho \approx \frac{\rho}{|J_P|_2}.
\end{equation}

\noindent\textbf{Step 3: Reformulating the optimization in prompt space.}

Thus, Eq.~\ref{eq:SAM-like_optimization_in_feature_space} can be equivalently reformulated as
\begin{equation}
 \min_{\Delta \vth_{k+1}}\max_{\left | \Delta P \right | \le r_{\rho }} \mathcal{L}[h\big(\alpha g(x^{''};\vth_{m})+ (1-\alpha) g(x^{''}; \vth_{k+1})\big);y]
 \vspace{-3mm}
\end{equation}
where $x^{''} = (P+\Delta P)\oplus x$ and the prompt perturbation $\Delta P$ is bounded as $r_{\rho} \le \frac{\rho}{\left \| J_P \right \|_{2}} $, and 
$J_P = \left. \frac{\partial z(P)}{\partial P} \right|_{P = P_0}$ denotes the Jacobian matrix of $z(P)$ with respect to the learnable prompt $P$.

\subsection{Proof of Corollary~\ref{corollary:gradient_approximation_along_linear_interpolation}}
\label{proof:gradient_approximation_along_linear_interpolation}
To approximate the gradient of an interpolated model $\vth_{m}^{'}$ using only the endpoint $\vth_{m}$ and $\vth_{k+1}$, 
we state the following corollary.

\begin{corollary}[\textbf{Gradient Approximation Along Linear Interpolation}]
Under the setting of Theorem~\ref{theorem:CTL_deviation}, assume that along the
interpolation path $\vth(t)$,
\begin{equation}
\|\nabla^{2} f(\vth(t))\| \le \lambda_{\max}, \quad
\|\nabla_{\vth} \nabla^{2} f(\vth(t))\| \le J_{\max}.
\end{equation}
Let $\vth(\alpha) = (1-\alpha)\vth_i + \alpha \vth_j$ for $\alpha \in (0,1)$ and
$\delta = \vth_j - \vth_i$.

Then, for any $\alpha \in (0,1)$,
\begin{equation}
\nabla f(\vth(\alpha))
= (1-\alpha)\nabla f(\vth_i)
+ \alpha \nabla f(\vth_j)
+ \xi(\alpha),
\end{equation}
where the approximation error $\xi(\alpha)$ satisfies
\begin{equation}
\|\xi(\alpha)\|
\le 2\alpha(1-\alpha)\lambda_{\max}\|\delta\|
+ \frac{\alpha(1-\alpha)}{2} J_{\max}\|\delta\|^2.
\end{equation}
\vspace{-3mm}
\end{corollary}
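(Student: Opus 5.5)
The plan is to apply the argument of Theorem~\ref{theorem:CTL_deviation} one derivative higher. Instead of $f$, I would work with the vector field $G(\vth) := \nabla f(\vth)$, which is $C^1$ with Jacobian $\nabla^2 f$. Assuming $f$ is $C^3$ along the path, as the hypothesis on $\nabla_{\vth}\nabla^2 f$ implicitly requires, $G$ is $C^2$ with second derivative the third-order tensor $\nabla_{\vth}\nabla^2 f$. Following the Remark after Theorem~\ref{theorem:CTL_deviation}, I treat $f$ as scalar-valued, so $G$ is an $\mathbb{R}^m$-valued map along $\vth(t)=\vth_i+t\delta$. I interpret $\|\nabla_{\vth}\nabla^2 f\|$ as the operator norm of this trilinear form, so that $\|\nabla_{\vth}\nabla^2 f(\vth(t))[\delta,\delta]\|\le J_{\max}\|\delta\|^2$.

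\textbf{Step 1 (Taylor expansion of the gradient).} Using the second-order Taylor formula with integral remainder for $G$ along the segment, I write
\[
G(\vth(\alpha)) = G(\vth_i) + \alpha\,\nabla^2 f(\vth_i)\delta + \int_0^\alpha (\alpha-t)\,\nabla_{\vth}\nabla^2 f(\vth(t))[\delta,\delta]\,dt,
\]
and likewise for $G(\vth_j)$ with upper limit $1$. These are the exact analogues of Eqs.~\ref{eq:taylor_expansion_for_theta_alpha} and~\ref{eq:taylor_expansion_for_theta_j}.

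\textbf{Step 2 (error as a kernel integral).} I form $\xi(\alpha)=G(\vth(\alpha))-(1-\alpha)G(\vth_i)-\alpha G(\vth_j)$. The zeroth-order terms and the linear Hessian terms $\alpha\nabla^2 f(\vth_i)\delta$ cancel exactly, as in Step~2 of the proof of Theorem~\ref{theorem:CTL_deviation}. The same splitting of $\int_0^1$ at $t=\alpha$ then gives
\[
\xi(\alpha)=-\int_0^1 k_\alpha(t)\,\nabla_{\vth}\nabla^2 f(\vth(t))[\delta,\delta]\,dt,
\]
with the same kernel $k_\alpha$. The overall sign is irrelevant for the bound.

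\textbf{Step 3 (bounding).} By the triangle inequality for vector-valued integrals and the bound on the third derivative, $\|\xi(\alpha)\|\le J_{\max}\|\delta\|^2\int_0^1 k_\alpha(t)\,dt$. The kernel integral was already computed in the proof of Theorem~\ref{theorem:CTL_deviation} as $\alpha(1-\alpha)/2$, so $\|\xi(\alpha)\|\le \tfrac{\alpha(1-\alpha)}{2}J_{\max}\|\delta\|^2$. Since $2\alpha(1-\alpha)\lambda_{\max}\|\delta\|\ge 0$, the stated bound follows a fortiori. The $\lambda_{\max}$ term is slack under this route.

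If one wants to see where the $\lambda_{\max}$ term would naturally appear, there is an alternative route. Expand $G(\vth_i)$ and $G(\vth_j)$ to first order around $\vth(\alpha)$, using the mean-value form with Hessians at intermediate points. The combination $(1-\alpha)G(\vth_i)+\alpha G(\vth_j)$ then leaves a linear residual
\[
\alpha(1-\alpha)\bigl[\nabla^2 f(\tilde\vth_1)-\nabla^2 f(\tilde\vth_2)\bigr]\delta .
\]
Bounding each Hessian by $\lambda_{\max}$ gives a term of the form $2\alpha(1-\alpha)\lambda_{\max}\|\delta\|$, and the second-order remainders contribute the $J_{\max}$ term. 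I would present the integral-remainder route as the main proof and mention this one only as a remark.

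\textbf{Main obstacle.} The main point needing care is rigor in the vector-valued setting. The mean-value theorem fails for $\mathbb{R}^m$-valued maps, which is why I insist on the integral form of the remainder. I also need to fix the tensor norm convention so that $\|T[\delta,\delta]\|\le\|T\|\,\|\delta\|^2$ holds.
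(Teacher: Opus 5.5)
Your proof is correct, and it reaches the conclusion by a more direct route than the paper. The paper does not Taylor-expand $\nabla f$. Instead it takes the function-value identity of Theorem~\ref{theorem:CTL_deviation}, $f(\vth(\alpha)) = (1-\alpha)f(\vth_i)+\alpha f(\vth_j)+\Delta Z(\alpha)$ with $\Delta Z(\alpha)=\int_0^1 k_\alpha(t)\,\delta^\top\nabla^2 f(\vth(t))\,\delta\,dt$, and treats $\vth_i,\vth_j$ as free variables. It differentiates both sides with respect to each endpoint; the left side gives $(1-\alpha)\nabla f(\vth(\alpha))$ and $\alpha\nabla f(\vth(\alpha))$. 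Adding the two yields $\xi(\alpha)=\nabla_{\vth_i}\Delta Z(\alpha)+\nabla_{\vth_j}\Delta Z(\alpha)$. Differentiating $\delta^\top H(t)\delta$ in the endpoints produces $\mp 2H(t)\delta$ through $\delta$, plus $(1-t)$ resp.\ $t$ times the third-derivative term. The paper bounds the two endpoint gradients separately by the triangle inequality, and the resulting $4k_\alpha(t)\lambda_{\max}\|\delta\|$ is exactly where $2\alpha(1-\alpha)\lambda_{\max}\|\delta\|$ comes from.

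However, the $\pm 2H(t)\delta$ pieces cancel exactly in the sum, and $(1-t)+t=1$. So the paper's identity collapses to your $\xi(\alpha)=\pm\int_0^1 k_\alpha(t)\,\nabla_{\vth}\nabla^2 f(\vth(t))[\delta,\delta]\,dt$, and both routes compute the same object. Your route needs only third derivatives along the segment, with no differentiation under the integral sign in endpoint variables ranging over an open set. It also shows that the $\lambda_{\max}$ term is pure slack. The paper's route presents the result as a literal derivative of the Theorem's identity, which justifies calling it a corollary, but at the price of a looser bound. Your sign, $-\int_0^1 k_\alpha\cdots$, is the correct one; the paper carries $+\Delta Z$ from a sign slip in Step~2 of the Theorem's proof, which is harmless for norms.

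One small correction to your side remark. Suppose you expand $\nabla f(\vth_i)$ and $\nabla f(\vth_j)$ around $\vth(\alpha)$ with integral-averaged Hessians, the rigorous vector-valued substitute for the mean-value form. Then there is no second-order remainder, and you get $\|\xi(\alpha)\|\le 2\alpha(1-\alpha)\lambda_{\max}\|\delta\|$ by itself. The $J_{\max}$ term appears only if you instead freeze the Hessian at $\vth(\alpha)$, in which case the linear terms cancel. So each term is separately a valid bound, and the stated sum is weaker than either alone. Since this is only a remark, it does not affect your proof.
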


\begin{proof}
Let $f:\mathbb{R}^{d} \to \mathbb{R}^{c}$ be a twice continuously differentiable function defined on an open convex set $\Theta \subseteq\mathbb{R}^m$. 
Consider two fine-tuned models $\vth_i, \vth_j \in \Theta$, and define the linear interpolation path
\begin{equation}
\setlength{\abovedisplayskip}{6pt}
\setlength{\belowdisplayskip}{6pt}
\vth(t) = (1-t)\vth_i + t \vth_j = \vth_i + t\delta
\end{equation}
where $\delta := \vth_j - \vth_i$ and $t \in \left[ 0,1 \right]$.

For any $\alpha\in(0,1)$, consider the linearly interpolated model 
\begin{equation}
\setlength{\abovedisplayskip}{6pt}
\setlength{\belowdisplayskip}{6pt}
\vth(\alpha) = (1-\alpha) \vth_{i} + \alpha\vth_{j}
\end{equation}

\noindent\textbf{Step 1: Function value decomposition along the interpolation path.}  
\medskip

According to Theorem~\ref{theorem:CTL_deviation}, the interpolated model $f\big(\vth(\alpha)\big)$ can be expressed as
\begin{equation}
\label{eq:function_value_for_interpolated_model}
\begin{aligned}
\setlength{\abovedisplayskip}{6pt}
\setlength{\belowdisplayskip}{6pt}
f\big(\vth(\alpha)\big) &= (1-\alpha)  f(\vth_{i}) + \alpha f(\vth_{j}) + \Delta Z(\alpha) \\ 
    &= (1-\alpha)  f(\vth_{i}) + \alpha f(\vth_{j}) + \int_{0}^{1} k_{\alpha}(t) \delta^{T} \bigtriangledown^{2}f(\vth(t))\delta dt
\end{aligned}
\end{equation}
where $\Delta Z(\alpha) = \int_0^1 k_\alpha(t) \delta^T \nabla^2 f(\vth(t)) \delta dt$ with the weighting function
\begin{equation}
k_{\alpha}(t) = \left\{\begin{matrix}(1-\alpha) t , 0 \le t\le \alpha, \\\alpha(1-t) , \alpha\le t\le 1. \end{matrix}\right.
\end{equation}

\noindent\textbf{Step 2: Differentiation with respect to endpoints.}  
\medskip

Differentiating $f\big(\vth(\alpha)\big)$ with respect to $\vth_i$ and $\vth_j$ yields
\begin{equation}
\label{eq:differentiating_left_hand_side}
\left\{\begin{matrix}
\bigtriangledown_{\vth_{i}}f\big(\vth(\alpha)\big) = (1- \alpha )\bigtriangledown_{\vth(\alpha)}f\big(\vth(\alpha)\big)  \\
\\
\bigtriangledown_{\vth_{j}}f\big(\vth(\alpha)\big) = \alpha \bigtriangledown_{\vth(\alpha)}f\big(\vth(\alpha)\big),
\end{matrix}\right.
\end{equation}
where $\vth(\alpha)$ denotes the interpolated parameter vector.

Similarly, differentiating the right-hand side of Eq.~\ref{eq:function_value_for_interpolated_model} yields
\begin{equation}
\label{eq:differentiating_right_hand_side}
\left\{\begin{matrix}
\bigtriangledown_{\theta_{i}} f\big(\vth(\alpha)\big) = (1-\alpha) \bigtriangledown_{\theta_{i}} f\big(\vth_i\big) + \bigtriangledown_{\theta_{i}} \Delta Z(\alpha)  \\
\\
\bigtriangledown_{\theta_{j}} f\big(\vth(\alpha)\big) = \alpha \bigtriangledown_{\theta_{j}} f\big(\vth_j\big) + \bigtriangledown_{\theta_{j}} \Delta Z(\alpha).
\end{matrix}\right.
\end{equation}

\noindent\textbf{Step 3: Interpolated gradient decomposition.}  
\medskip

Combining Eqs.~\ref{eq:differentiating_left_hand_side} and \ref{eq:differentiating_right_hand_side}, we obtain 
\begin{equation}
\label{eq:interpolated_gradient_decomposition}
\begin{aligned}
\bigtriangledown_{\vth(\alpha)}f\big(\vth(\alpha)\big) &= \bigtriangledown_{\vth_{i}}f\big(\vth(\alpha)\big) + \bigtriangledown_{\vth_{j}}f\big(\vth(\alpha)\big) \\
    &= (1-\alpha) \bigtriangledown_{\vth_{i}}f\big(\vth_{i}\big) + \alpha \bigtriangledown_{\vth_{j}}f\big(\vth_{j}\big)  + \bigtriangledown_{\theta_{i}} \Delta Z(\alpha) + \bigtriangledown_{\theta_{j}} \Delta Z(\alpha) 
\end{aligned}
\end{equation}

\noindent\textbf{Step 4: Bounding the gradients of the deviation term.}  
\medskip

Recall that the deviation term along the interpolation path is defined as
\begin{equation}
\Delta Z(\alpha) = \int_0^1 k_\alpha(t) \delta^T H(t) \delta dt, \quad \delta := \vth_j - \vth_i, \quad  H(t) = \nabla^2 f(\vth(t)).
\end{equation}
where $\vth(t)=\vth_i + \delta$ is the linearly interpolated parameter and $k_\alpha(t)$ is the weighting function.

To compute the gradients of $\Delta z$ with respect to the endpoint $\vth_{i}$ and $\vth_{j}$, we apply the chain rule. Note that $\delta$ depends on the endpoints 
(\ie $\bigtriangledown_{\vth_{i}}\delta = -I$ and $\bigtriangledown_{\vth_{j}}\delta = I$) and $H(t)$ depends on the interpolated point $\vth(t)$. 

The gradient of the quadratic term $\delta^T H(t) \delta$ is given by:
\begin{equation}
\left\{\begin{matrix}
\bigtriangledown_{\vth_i}(\delta^T H(t) \delta) = -2H(t)\delta + (1-t) (\delta^{T}\bigtriangledown_{\vth_i}H(t)\delta) \\
\\
\bigtriangledown_{\vth_j}(\delta^T H(t) \delta) = 2H(t)\delta + t (\delta^{T}\bigtriangledown_{\vth_j}H(t)\delta)
\end{matrix}\right.
\end{equation}

Integrating along the interpolation path, the gradients of $\Delta z$ with respect to the endpoints can be expressed as
\begin{equation}
\label{eq:gradients_for_deviation}
\left\{\begin{matrix}
\bigtriangledown_{\vth_{i}} \Delta Z(\alpha) = \int_{0}^{1} k_{\alpha}(t) \Big[-2H(t)\delta + (1-t) (\delta^{T}\bigtriangledown_{\vth_i}H(t)\delta)\Big] dt \\
\\
\bigtriangledown_{\vth_{j}} \Delta Z(\alpha) = \int_{0}^{1} k_{\alpha}(t) \Big[2H(t)\delta + t (\delta^{T}\bigtriangledown_{\vth_j}H(t)\delta)\Big] dt
\end{matrix}\right.
\end{equation}
where $H(t) = \nabla^2 f(\vth(t))$ is the Hessian along the interpolation path, and $\nabla_{\vth} H(t)$ denotes its derivative 
with respect to the interpolated point $\vth(t)$. Consequently, we have
\begin{equation}
\label{eq:gradient_deviation_bounds}
\left\{\begin{matrix}
\left \|\bigtriangledown_{\vth_{i}} \Delta Z(\alpha) \right \| \le \int_{0}^{1} k_{\alpha}(t) \Big[2\left\| H(t)\right\| \delta + (1-t) (\delta^{T}\left\| \bigtriangledown_{\vth_i}H(t) \right\| \delta)\Big] dt \\
\\
\left \|\bigtriangledown_{\vth_{j}} \Delta Z(\alpha) \right \| \le \int_{0}^{1} k_{\alpha}(t) \Big[2\left\| H(t)\right\| \delta + t (\delta^{T}\left\| \bigtriangledown_{\vth_j}H(t) \right\| \delta)\Big] dt.
\end{matrix}\right.
\end{equation}

\noindent\textbf{Step 5:Approximation error of interpolated gradient}  
\medskip

Let $\left\|\xi(\alpha)\right\| = \left \| \bigtriangledown_{\vth(\alpha)}f\big(\vth(\alpha)\big) - (1-\alpha) \bigtriangledown_{\vth_{i}}f\big(\vth_{i}\big) -\alpha \bigtriangledown_{\vth_{j}}f\big(\vth_{j}\big) \right \| $, combining Eq.~\ref{eq:interpolated_gradient_decomposition} and applying the triangle inequality, we obtain
\begin{equation}
\begin{aligned}
\left\|\xi(\alpha)\right\|&\le \left \| \bigtriangledown_{\vth_{i}} \Delta Z(\alpha) \right \| + \left \| \bigtriangledown_{\vth_{j}} \Delta Z(\alpha) \right \| \\
& \le \int_{0}^{1} 4 k_{\alpha}(t)  \left \|H(t) \right \|\delta + (1-t) k_{\alpha}(t) \delta^{T} \left \|\bigtriangledown_{\vth_{i}}H(\vth(t))\right \| \delta + t k_{\alpha}(t) \delta^{T} \left \|\bigtriangledown_{\vth_{j}}H(\vth(t))\right \| \delta dt
\end{aligned} 
\end{equation}
Assume that the Hessian derivative is uniformly bounded along the interpolation path:
\begin{equation}
\left\| \nabla_{\vth} H(t) \right\| \le J_{\max}, \quad \forall t \in \left[ 0,1 \right].
\end{equation}
Combined with the constraint from Theorem~\ref{theorem:CTL_deviation}:
\begin{equation}
\left\| H(t) \right\| \le \lambda_{max} 
\end{equation}
Since $\int_0^1 k_\alpha(t) dt = \frac{\alpha (1-\alpha)}{2}$, we obtain the error bound of gradient :
\begin{equation}
\left\|\xi(\alpha)\right\| \le 2\alpha (1-\alpha) \lambda_{max} \left \| \delta \right \| + \frac{\alpha(1-\alpha)}{2} J_{\max} \left\|\delta \right\|^2.
\end{equation}

\end{proof}

\section{Additional Implementation Details}
\label{appendix:additional_implementation_details}

\subsection{Experiment Settings}
\label{appendix:experiment_settings}
For all experiments, we adopt CLIP-ViT-B/32 as the default pre-trained backbone and fine-tune task-specific models on 11 benchmark datasets~\cite{kornblith2019better}: 
CIFAR100, Cars196, SUN397, EuroSAT, GTSRB, Pets, CIFAR10, SVHN, STL-10, Food-101, and RESISC45. All experiments are conducted in multi-task learning scenarios.
To evaluate the robustness of the merged model, we consider three randomly shuffled task sequences, each containing six tasks, with the first designated 
as the adversary task and the second as the target task. A complete summary of all task sequences is provided in Table~\ref{table:task_orders}.
\renewcommand{\arraystretch}{1.2}
\begin{table}[htbp]
\centering
\caption{Task sequence used for model merging in multi-task learning. $\ast$ denotes the default task sequence. }
\label{table:task_orders}
\begin{tabular}{c|c}
\toprule
\textbf{Task Order} & \textbf{Task Sequence} \\
\midrule
Task-1$^{\ast}$ & CIFAR-100 $\rightarrow$ Cars196 $\rightarrow$ SUN397 $\rightarrow$ EuroSAT $\rightarrow$ GTSRB $\rightarrow$ Oxford-IIIT Pet \\
\hline
Task-2 & CIFAR-10 $\rightarrow$ Cars196 $\rightarrow$ SVHN $\rightarrow$ SUN397 $\rightarrow$ STL-10 $\rightarrow$ Oxford-IIIT Pet \\
\hline
Task-3 & SVHN $\rightarrow$ CIFAR-10 $\rightarrow$ Food-101 $\rightarrow$ Oxford-IIIT Pet $\rightarrow$ STL-10 $\rightarrow$ RESISC45 \\
\bottomrule
\end{tabular}
\end{table}
Task-specific models are obtained under both full fine-tuning and parameter-efficient fine-tuning (PEFT) settings. All task-specific models are trained with 
the AdamW optimizer for 5 epochs using a batch size of 128. The optimizer is configured with a learning rate of $1.0 * 10^{-5}$, momentum coefficients 
$\text{betas} = (0.9, 0.999)$, a numerical stability term $\text{eps} = 1.0 * 10^{-8}$, and a weight decay of 0.001. For LoRA-based PEFT,
a higher learning rate of $4\times10^{-5}$ is used.

\subsection{Attack Settings}
\label{appendix:attack_settings}
To account for adversaries with varying resource budgets, we consider two representative backdoor attacks for model merging: BadMerging~\cite{zhang2024badmerging}
under full fine-tuning and LoBAM~\cite{yin2024lobam} under parameter-efficient fine-tuning (PEFT) with LoRA. In all attack settings, the adversary injects 
a backdoor via adversary task vector, such that the triggered samples from the target task are misclassified into an attacker-specified target class.
By default, we set class $1$ as the target for all target tasks; for example, in the task sequences reported in Table~\ref{table:task_orders}, this corresponds 
to``Acura RL Sedan 2012'' for Cars196 and ``automobile'' for CIFAR10.

\textbf{BadMerging}~\cite{zhang2024badmerging} is the first study to investigate backdoor attacks specifically designed for model merging. It introduces a 
feature-interpolation-based loss to ensure attack effectiveness across a wide range of merging coefficients. Following the official implementation, we sweep
the interpolation coefficient $\lambda \in \left [  0.2,1.0 \right ]$.

\textbf{LoBAM}~\cite{yin2024lobam} targets low-resource adversarial settings by fine-tuning pre-trained models with LoRA. To compensate for the reduced attack 
effectiveness under PEFT, LoBAM amplifies the backdoor task vector, defined as the parameter difference between the backdoored and clean models, thereby 
enabling effective backdoor implantation for model merging. Following prior work, we set the LoRA rank to $r=8$ and use an amplification factor of $\lambda=4.0$.
Notably, in the LoBAM setting, all defense methods adopt LoRA-based fine-tuning and are evaluated under the same LoRA configuration to ensure experimental fairness.


\subsection{Defense Settings}
\label{appendix:defense_settings}
We evaluate representative backdoor defense strategies from complementary perspectives. Specifically, these methods include IBVS~\cite{pawlak2025backdoor}, 
which mitigates backdoors through task arithmetic in the parameter space, and SAU~\cite{wei2023shared}, a state-of-the-art backdoor defense based on adversarial unlearning. 
In addition, we consider two sharpness-aware defenses, namely SAM~\cite{foret2020sharpness} and PAM~\cite{min2024uncovering}, which enhance robustness by minimizing 
loss sharpness through weight perturbations during optimization.

\textbf{IBVS.} IBVS~\cite{pawlak2025backdoor} is motivated by the insight that backdoor triggers often induce shared and generalizable structure in neural networks, 
\ie different triggers tend to activate similar model structures. Leveraging this property, IBVS performs backdoor vector subtraction to mitigate backdoor effect
without requiring access to the adversary's dataset, labels, target class, or trigger. Specifically, IBVS applies a fixed trigger to an auxiliary dataset to derive 
a backdoor task vector and subtracts this vector from the backdoored merged model using task arithmetic. To compensate for potential degradation in clean-task performance, 
a corresponding clean task vector is incorporated. 

In our experiments, we follow the standard protocol and use a fixed white-square trigger on ImageNet-1K~\cite{deng2009imagenet} to obtain the backdoor task vector 
$V_{b}$ and clean task vector $V_{c}$. Given a merged model update $\Delta \theta_{m}$, the refined update is computed as
\begin{equation}
	\Delta \theta_{m}^{'} = \Delta \theta_{m}- \lambda (V_{b}-V_{c}).
\end{equation}
After hyperparameter tuning, we set $\lambda=2.0$, which provides a favorable balance between backdoor mitigation and clean performance preservation.

\textbf{SAU.} SAU~\cite{wei2023shared} is a state-of-the-art backdoor defense based on adversarial training, which mitigates backdoors by leveraging shared 
adversarial examples (SAEs) between the backdoored and purified models, while preserving clean-task performance. While both SAU and LFPM aim to suppress backdoor 
behaviors without degrading clean-task performance, their mechanisms are fundamentally different: SAU operates in an input-level min-max adversarial training framework,
whereas LFPM enhances anti-backdoor task vectors via feature-space interpolation combined with adversarial feature augmentation. In our experiments, SAU is implemented 
following the open-source BackdoorBench framework~\cite{wu2025backdoorbench}.


\textbf{SAM and PAM.} SAM~\cite{foret2020sharpness} and PAM~\cite{min2024uncovering} improve model generalization and robustness by explicitly minimizing loss sharpness 
through worst-case parameter perturbations in the high-dimensional weight space. Specifically, SAM minimizes both loss value and loss sharpness by computing gradients 
at worst-case points within a local neighborhood of the current parameters via weight perturbations. PAM follows a similar principle but performs gradient updates 
at linearly interpolated points between the initial and current model states. 
At each iteration, both methods require a full optimization cycle in the high-dimensional parameter space, including saving the current parameters, applying 
weight perturbations, performing forward and backward passes, and finally restoring the original parameters. This process is carried out in the high-dimensional 
parameter space. In contrast, LFPM transfers perturbations to a low-dimensional prompt space and enforces smoothness along feature-level task trajectories (\eg 
the last layer of the backbone), thereby avoiding the full-cycle parameter perturbation.

To ensure a fair comparison, SAM and PAM are applied using the prompt-augmented adversarial inputs $x' = P \oplus x$ generated during Stage I of LFPM.

\textbf{LFPM.} LFPM~\cite{foret2020sharpness} comprises two stages: (I) Adversarial Feature Extraction via Subspace Partitioning and (II) Anti-Backdoor Task Vector 
Optimization. 

In Stage I, LFPM learns a rank-$r$ projection matrix $P_s = UU^T$ to partition the feature subspace and extract adversarial features encoded by learnable visual 
prompts $P$~\cite{zhou2022learning,jia2022visual}. In our experiments, we set $U \in \mathbb{R}^{16 \times 512}$ as a semi-orthogonal matrix and use visual 
prompts of length 10. Prompt optimization is performed using the AdamW optimizer for 5 epochs with a batch size of 64. The learning rate is set to $1.0 * 10^{-4}$, 
with momentum coefficients $\text{betas} = (0.9, 0.999)$, a numerical stability term $\text{eps} = 1.0 * 10^{-8}$, and a weight decay of 0.001. 
To ensure stable training, gradient clipping is applied at each iteration with a maximum prompt norm of 1.0.

In Stage II, LFPM optimizes an anti-backdoor task vector to ensure robust mitigation along the parameter interpolation path. The optimization incorporates three 
key components:
\begin{itemize}
    \item \textbf{Prompt-space perturbations.} Prompt perturbations $\Delta P$ are bounded by $|\Delta P| \le r_{\rho}$ with $r_{\rho} = 0.1$, following
    Eq.~\ref{eq:prompte_constraint_optimization}. 
    \item \textbf{Gradient Accumulation.} the accumulation coefficient is set to $\lambda = 0.1$, as defined in Eq.~\ref{eq:gradient_accumulation}.
    \item  \textbf{Loss Path-integral Optimization.} The integral is evaluated over $\alpha \in [0.9,1.0]$ with step size of $\Delta t = 0.01$,
    as described in Eq.~\ref{eq:loss_path_integral}.
\end{itemize}
The optimization is performed on a shadow dataset consisting of 10,000 images randomly sampled from ImageNet-1K~\cite{deng2009imagenet}. The anti-backdoor task
vector is optimized for 5 epochs. In every optimization step, the above three mechanisms are executed once.

Finally, following Step 10 of Algorithm~\ref{algorithm:linear_feature_path_minimization}, LFPM merges the anti-backdoor task vector into the backdoored merged 
model to obtain the purified merged model, using a merging coefficient of $\alpha = 0.5$ for BadMerging and $\alpha = 0.8$ for LoBAM.

\section{Additional Experiments}

\subsection{Generality Across Model Merging Algorithms}
\label{appendix:generality_across_model_merging_algorithms}
To evaluate the generality of LFPM and verify that its effectiveness is not tied to a specific model merging strategy, 
we apply it across several representative merging algorithms:
\begin{enumerate}[label=(\roman*),topsep=0pt, partopsep=0pt, itemsep=5pt, parsep=0pt]
    \item \textbf{Simple Average (SA)}~\cite{wortsman2022model}, which computes a weighted average of model parameters and serves as a widely used baseline for 
	model merging;
    \item \textbf{Ties-Merging (TIES)}~\cite{yadav2023ties}, which mitigates parameter interference by sparsifying task vectors and enforcing sign consensus across models;
    \item \textbf{Della-Merging (DELLA)}~\cite{deep2024della}, which reduces negative interference through magnitude-aware adaptive sampling applied to task-vector parameters. 
\end{enumerate}
These methods represent a broad spectrum of merging approaches, ranging from simple parameter averaging to interference-aware and sparsity-based techniques. 
By applying LFPM to each of these algorithms, the results summarized in the table~\ref{table:LFPM_under_various_model_merging_algorithms}.
\vspace{-5mm}
\renewcommand{\arraystretch}{1.2}
\begin{table*}[htbp]
	\centering
	\caption{ Defense Performance against BadMerging under Various Model Merging Strategies (ACC$\uparrow$ / ASR $\downarrow$).}
	\label{table:LFPM_under_various_model_merging_algorithms}
	\resizebox{0.98\textwidth}{!}{
        \begin{tabular}{c|cc|cc|cc|cc|cc|cc|ccc}
		\toprule
		Task $\to$ & \multicolumn{2}{c|}{\advtask{CIFAR100}} & \multicolumn{2}{c|}{\targettask{Cars196}} & \multicolumn{2}{c|}{SUN397} & \multicolumn{2}{c|}{EuroSAT} & \multicolumn{2}{c|}{GTSRB} & \multicolumn{2}{c|}{Pets} & \multicolumn{3}{c}{Average} \\ 
        \midrule
		Method $\downarrow$ & CA&ASR(N) & CA&ASR(T) & CA&ASR(N) & CA&ASR(N) & CA&ASR(N) & CA&ASR(N) & CA&ASR(T)&ASR(N)   \\
        \midrule         
        Individual  & 83.56&75.18 &	64.05&79.31 & 71.53&56.37 & 97.83&30.61 & 94.27&49.46 & 88.11&42.60 & 83.22&79.31&55.58\\
		\midrule \midrule 
        SA        &  62.97&67.97 & 58.16&96.39 & 64.35&75.96 & 45.09&57.16 & 33.08&94.05 & 86.07&54.97 & 58.28&96.39&70.02\\
        \hline
		SA + LFPM  &  65.37&11.41 & 54.66&0.92  & 62.99&9.31 & 37.31&29.98 & 30.69&5.47 & 82.33&5.47 & 55.55&0.92&16.24\\
		\midrule \midrule 
		TIES        &  72.60&80.54 & 51.37&99.68 & 63.50&85.18 & 43.14&62.07 &46.88&92.23 & 79.14&74.27 & 59.43&99.68&78.87 \\
		\hline
		TIES + LFPM &  69.60&9.69 & 52.55&1.13 & 62.49&9.11 & 37.55&28.24 & 35.41&22.62 & 81.49&5.47  & 56.51&1.13&15.02 \\
		\midrule \midrule 
		DELLA          &  72.26&84.60 & 49.91&99.83 & 63.32&87.10 & 49.72&59.81 & 58.70&92.84 & 78.03&77.70 & 61.99&99.83&80.41 \\
		\hline
		DELLA + LFPM   &  70.63&9.56 & 52.60&1.16 & 62.39&9.52 & 39.66&25.59 & 38.01&20.87 & 81.33&5.23 & 57.43&1.16&14.15 \\
        \bottomrule
	\end{tabular}}
	\vspace{-4mm}
\end{table*} 

From Table~\ref{table:LFPM_under_various_model_merging_algorithms}, we observe that LFPM consistently mitigates backdoor effects in the merged models, 
achieving an average ASR(T) of around 1\% and an average ASR(N) of approximately 15\%. At the same time, clean-task performance is largely preserved, 
with only minor degradation (~3-4\%) compared to the original merged models. These results demonstrate that LFPM operates independently of the underlying merging 
strategy and provides robust and consistent backdoor mitigation across diverse model merging settings.

\subsection{Robustness under Multiple Compromised Models}
\label{appendix:robustness_under_multiple_compromised_models}
Beyond the setting considered in Section~\ref{section:defense_performance}, we further examine a more challenging and realistic scenario with multiple compromised 
models within the task sequence. This setting reflects realistic cases where the merged-model developer collects multiple task-specific models from third-party 
sources, some of which may be malicious and contain backdoors originating from multiple independent adversaries.

Specifically, we consider a six-task sequence that contains two distinct ``(adversarial task, target task)'' pairs, (CIFAR-100, Cars196) and (SVHN, CIFAR-10), 
along two clean tasks, SUN397 and Pets. This setup enables a systematic evaluation of our method under compounded backdoor threats.
The experimental results are summarized in Table~\ref{table:backdoor_defense_under_multiple_compromised_models}.
\renewcommand{\arraystretch}{1.2}
\begin{table*}[htbp]
	\centering
	\caption{Backdoor defense performance under multiple compromised models (ACC$\uparrow$ / ASR $\downarrow$).}
	\label{table:backdoor_defense_under_multiple_compromised_models}
	\resizebox{0.98\textwidth}{!}{
        \begin{tabular}{c|cc|cc|cc|cc|cc|cc|ccc}
		\toprule
		Task $\to$ & \multicolumn{2}{c|}{\advtask{CIFAR100}} & \multicolumn{2}{c|}{\targettask{Cars196}} & \multicolumn{2}{c|}{\advtask{SVHN}} & \multicolumn{2}{c|}{\targettask{CIFAR10}} & \multicolumn{2}{c|}{SUN397} & \multicolumn{2}{c|}{Pets} & \multicolumn{3}{c}{Average} \\ 
        \midrule
		Method $\downarrow$ & CA&ASR(N) & CA&ASR(T) & CA&ASR(N) & CA&ASR(N) & CA&ASR(N) & CA&ASR(N) & CA&ASR(T)&ASR(N) \\
        \midrule         
        Individual  & 83.56&75.18 &	64.05&79.31 & 96.35&87.03 & 96.51&10.56 & 71.53&56.37 & 88.11&42.60 &   83.35&44.93&65.29 \\
		\midrule
		TA          & 79.97&71.48 & 54.69&89.46 & 32.56&97.91 & 95.37&94.34 & 65.36&91.91 & 86.01&58.51 &  68.99&91.90&79.95 \\
        \midrule \midrule
		IBVS    &  39.52&59.68 & \underline{44.38}&66.72 & 10.47&98.33 & 69.71&\underline{19.16} & 56.52&64.38 & 80.10&35.21 &   50.11&42.94&47.97 \\
		\hline
		SAU     &  55.91&79.23 & 45.01&58.45 & 14.97&99.30 & \underline{84.54}&96.06 & 58.61&78.21 & 77.89&37.93 &  56.15&77.25&73.66\\
		\hline
		SAM     &  47.88&41.81 & 35.12&13.77 & \underline{19.93}&98.55 & 79.38&24.24 & \underline{55.66}&\underline{23.52} & \underline{71.98}&\underline{10.38}  & 51.65&19.00&43.56 \\
		\hline
		PAM     &  \underline{56.92}&\underline{28.45} & 44.06&\underline{10.03} & 19.10&\underline{96.91} & 82.91&19.18 & 58.41&32.36 & 78.76&11.22 & \underline{56.69}&\underline{14.60}&\underline{42.23} \\
		\hline
		LFPM    &  \textbf{74.05}&\textbf{6.72} & \textbf{52.70}&\textbf{0.68} & \textbf{39.01}&\textbf{9.85} & \textbf{93.59}&\textbf{9.84 } & \textbf{63.45}&\textbf{8.41} & \textbf{83.45}&\textbf{3.59} & \textbf{67.70}&\textbf{5.26}&\textbf{7.14 } \\
        \bottomrule
	\end{tabular}}
	\vspace{-3mm}
\end{table*} 

From Table~\ref{table:backdoor_defense_under_multiple_compromised_models}, we observe that LFPM consistently achieves the best overall defense performance under 
multiple compromised models. Specifically, LFPM achieves the lowest ASR(T) on both target tasks (0.68\% on Cars196 and 9.84\% on CIFAR-10) while preserving 
the highest average CA (67.70\%). Notably, LFPM reduces the average ASR(N) of the merged model from 79.95\% to 7.14\%, indicating its effectiveness in suppressing 
cross-task backdoor leakage induced by model merging. Consequently, LFPM effectively prevents triggers designed for target tasks from altering predictions on 
non-target tasks. Overall, these results demonstrate that LFPM robustly mitigates compounded backdoor threats from multiple compromised models while preserving 
strong clean-task performance.

\subsection{Robustness under Different Task Combinations}
\label{appendix:robustness_under_different_task_combinations}
We further investigate the effectiveness of backdoor defense methods under different task sequences. Specifically, we consider additional task orders, referred to 
as Task Order 2 and Task Order 3, as summarized in Table~\ref{table:task_orders}. Following the default setting, the first task in each sequence is treated 
as the adversary task (marked in \advtask{\textbf{blue}}), while the second task serves as the target task (marked in \targettask{\textbf{red}} ). The corresponding 
results are reported in Tables~\ref{table:backdoor_defense_against_BadMerging_for_task_order_2} and~\ref{table:backdoor_defense_against_LoBAM_for_task_order_2} 
for Task Order 2, and in Tables~\ref{table:backdoor_defense_against_BadMerging_for_task_order_3} and~\ref{table:backdoor_defense_against_LoBAM_for_task_order_3} 
for Task Order 3.
\renewcommand{\arraystretch}{1.2}
\begin{table*}[htbp]
	\centering
	\caption{Backdoor Defense Performance under BadMerging for Task Order 2 (ACC$\uparrow$ / ASR $\downarrow$).}
	\label{table:backdoor_defense_against_BadMerging_for_task_order_2}
	\resizebox{0.98\textwidth}{!}{
        \begin{tabular}{c|cc|cc|cc|cc|cc|cc|ccc}
		\toprule
		Task $\to$ & \multicolumn{2}{c|}{\advtask{CIFAR10}} & \multicolumn{2}{c|}{\targettask{Cars196}} & \multicolumn{2}{c|}{SVHN} & \multicolumn{2}{c|}{SUN397} & \multicolumn{2}{c|}{STL10} & \multicolumn{2}{c|}{Pets} & \multicolumn{3}{c}{Average} \\ 
        \midrule
		Method $\downarrow$ & CA&ASR(N) & CA&ASR(T) & CA&ASR(N) & CA&ASR(N) & CA&ASR(N) & CA&ASR(N) & CA&ASR(T)&ASR(N) \\
        \midrule         
        Individual  & 95.57&67.55 &	64.05&79.31 & 95.16&38.11 & 71.53&56.37 & 97.96&2.57 & 88.11&42.60 & 85.39&79.31&41.44 \\
		\midrule
		TA          & 95.96&38.47 & 46.07&94.85 & 49.14&94.14 & 63.65&84.73 & 95.40&25.38 & 81.43&56.58 & 71.94&94.85&59.86 \\
        \midrule \midrule
		IBVS    &  73.37&31.28 &  \underline{44.90}&69.53 & 11.74&97.82 & 56.62&65.01 & 92.93&7.25	& \underline{80.21}&35.21 &  59.96&69.53&47.31 \\
		\hline
		SAU     &  \underline{91.70}&66.82 & 43.78&85.49 & 25.91&98.84 & 59.04&74.30 & 93.93&\textbf{0.06} & 78.79&45.46 & 65.52&85.49&57.09  \\
		\hline
		SAM     &  86.34&10.42 & 35.90&\underline{20.86} & 24.59&95.74 & 55.46&\underline{23.67} & \underline{93.95}&2.22 & 72.96&\underline{10.65}  & 61.53&\underline{20.86}&\underline{28.54} \\
		\hline
		PAM     &  89.33&\underline{9.45} & 42.18&30.05	& \underline{34.60}&\underline{86.99} & \underline{59.14}&31.25 & 93.62&3.36 & 77.73&14.52 & \underline{66.10}&30.05&29.11  \\
		\hline
		LFPM    &  \textbf{95.36}&\textbf{1.27} & \textbf{50.21}&\textbf{2.44} & \textbf{50.32}&\textbf{41.12} & \textbf{63.90}&\textbf{13.80} & \textbf{96.28}&\underline{0.75} & \textbf{83.59}&6.70 & \textbf{73.27}&\textbf{2.44}&\textbf{12.72} \\
        \bottomrule
	\end{tabular}}
	\vspace{-3mm}
\end{table*} 
\renewcommand{\arraystretch}{1.2}
\begin{table*}[htbp]
	\centering
	\caption{Backdoor Defense Performance under LoBAM for Task Order 2 (ACC$\uparrow$ / ASR $\downarrow$).}
	\label{table:backdoor_defense_against_LoBAM_for_task_order_2}
	\resizebox{0.98\textwidth}{!}{
        \begin{tabular}{c|cc|cc|cc|cc|cc|cc|ccc}
		\toprule
		Task $\to$ & \multicolumn{2}{c|}{\advtask{CIFAR10}} & \multicolumn{2}{c|}{\targettask{Cars196}} & \multicolumn{2}{c|}{SVHN} & \multicolumn{2}{c|}{SUN397} & \multicolumn{2}{c|}{STL10} & \multicolumn{2}{c|}{Pets} & \multicolumn{3}{c}{Average} \\ 
        \midrule
		Method $\downarrow$ & CA&ASR(N) & CA&ASR(T) & CA&ASR(N) & CA&ASR(N) & CA&ASR(N) & CA&ASR(N) & CA&ASR(T)&ASR(N) \\
        \midrule   
        Individual  & 95.61&67.88 &	61.65&54.33 & 93.29&17.67 & 67.93&48.38 & 97.32&2.51 & 89.56&18.28 & 84.22&54.33&30.94 \\
		\midrule				
		TA       & 84.42&37.05 & 47.45&92.66  & 26.90&97.21 & 57.20&90.32 & 95.07&13.16 & 81.57&36.22 & 65.43&92.66&54.79 \\
        \midrule \midrule
		IBVS    &  81.37&25.38 &  46.73&19.87 & 21.17&\underline{53.86} & 53.52&61.73 & 94.28&8.70	& 76.99&34.50 & 62.34&19.87&36.83 \\
		\hline
		SAU     &  81.62&37.89 & 40.25&22.98 & \textbf{30.03}&80.45 & 57.20&67.68 & 94.45&20.73 & 83.21&29.19 & 64.46&22.98&47.18  \\
		\hline
		SAM     &  \textbf{86.84}&\underline{12.24} & \textbf{50.00}&\underline{7.22} & 23.67&70.48 & \textbf{62.21}&\underline{30.50} & \underline{95.96}&1.78 & \textbf{86.12}&\underline{15.72} & \textbf{67.46}&7.22&\underline{26.14} \\
		\hline
		PAM     &  85.59&13.21 & \underline{48.24}&8.36 & \underline{26.35}&64.33 & 61.36&33.42 & 95.46&2.17 & \underline{85.55}&18.20	& \underline{67.09}&\underline{8.36}&26.26 \\
		\hline
		LFPM    &  \underline{86.79}&\textbf{5.47} & 47.99&\textbf{2.94} & 20.51&\textbf{26.88} & \underline{61.88}&\textbf{23.22}  & \textbf{96.45}&1.32 & 84.98&\textbf{9.34} & 66.43&\textbf{2.94}&\textbf{13.24} \\
        \bottomrule
	\end{tabular}}
	\vspace{-3mm}
\end{table*} 
\renewcommand{\arraystretch}{1.2}
\begin{table*}[htbp]
	\centering
	\caption{Backdoor Defense Performance under BadMerging for Task Order 3 (ACC$\uparrow$ / ASR $\downarrow$).}
	\label{table:backdoor_defense_against_BadMerging_for_task_order_3}
	\resizebox{0.98\textwidth}{!}{
        \begin{tabular}{c|cc|cc|cc|cc|cc|cc|ccc}
		\toprule
		Task $\to$ & \multicolumn{2}{c|}{\advtask{SVHN}} & \multicolumn{2}{c|}{\targettask{CIFAR10}} & \multicolumn{2}{c|}{Food101} & \multicolumn{2}{c|}{Pets} & \multicolumn{2}{c|}{STL10} & \multicolumn{2}{c|}{RESISC45} & \multicolumn{3}{c}{Average} \\ 
        \midrule
		Method $\downarrow$ & CA&ASR(N) & CA&ASR(T) & CA&ASR(N) & CA&ASR(N) & CA&ASR(N) & CA&ASR(N) & CA&ASR(T)&ASR(N) \\
        \midrule         
        Individual  & 96.35&49.90 &	96.51&10.56 & 84.69&29.90 & 88.11&30.79 & 97.96&18.86 & 90.95&32.03 & 92.42&10.56&32.29 \\
		\midrule
		TA         &  79.19&92.07 & 91.26&100.00 & 79.40&84.91 & 84.38&76.53 & 95.33&89.43 & 63.06&97.55 & 82.10&100.00&88.09 \\
        \midrule \midrule
		IBVS    &  12.65&98.36 & 71.53&32.92 & 65.62&47.25 & \underline{80.40}&36.19 & 92.93&7.55  & \underline{47.93}&67.77 &  61.84&32.92&51.42 \\
		\hline
		SAU     &  41.82&97.90 & \underline{83.24}&100.00 & 71.40&73.04 & 78.74&57.56 & 93.55&85.53 & 51.33&92.46 & \underline{70.01}&100.00&81.29  \\
		\hline
		SAM     &  25.66&\underline{78.54} & 79.11&\underline{17.84} & 60.08&\underline{15.13} & 72.28&\underline{8.17} & 93.47&\underline{1.86} & 38.93&\underline{17.82}  & 61.58&\underline{17.84}&\underline{24.30} \\
		\hline
		PAM     &  \underline{44.47}&82.80 & 82.50&42.82 & \underline{71.80}&17.49 & 77.56&12.21 & \underline{94.36}&4.21 & 43.50&35.31 & 69.03&42.82&30.40 \\
		\hline
		LFPM    &  \textbf{71.06}&\textbf{5.04} & \textbf{92.22}&\textbf{10.40} & \textbf{80.75}&\textbf{4.38} & \textbf{83.12}&\textbf{4.06} & \textbf{96.05}&\textbf{0.73} & \textbf{61.23}&\textbf{10.01} & \textbf{80.73}&\textbf{10.40}&\textbf{4.84}  \\
        \bottomrule
	\end{tabular}}
\end{table*} 
\renewcommand{\arraystretch}{1.2}
\begin{table*}[htbp]
	\centering
	\caption{Backdoor Defense Performance under LOBAM  for Task Order 3 (ACC$\uparrow$ / ASR $\downarrow$).}
	\label{table:backdoor_defense_against_LoBAM_for_task_order_3}
	\resizebox{0.98\textwidth}{!}{
        \begin{tabular}{c|cc|cc|cc|cc|cc|cc|ccc}
		\toprule
		Task $\to$ & \multicolumn{2}{c|}{\advtask{SVHN}} & \multicolumn{2}{c|}{\targettask{CIFAR10}} & \multicolumn{2}{c|}{Food101} & \multicolumn{2}{c|}{Pets} & \multicolumn{2}{c|}{STL10} & \multicolumn{2}{c|}{RESISC45} & \multicolumn{3}{c}{Average} \\ 
        \midrule
		Method $\downarrow$ & CA&ASR(N) & CA&ASR(T) & CA&ASR(N) & CA&ASR(N) & CA&ASR(N) & CA&ASR(N) & CA&ASR(T)&ASR(N) \\
        \midrule   
								
        Individual  & 92.39&79.92 &	95.61&10.14 & 83.00&23.98 & 89.56&18.28 & 97.32&2.51 & 87.61&31.23 & 90.91&10.14&31.18 \\
		\midrule
		TA         &  24.13&94.48 & 84.55&99.81 & 75.52&66.79 & 82.55&63.75 & 95.28&83.91 & 52.73&73.26& 69.12&99.81&76.43 \\
        \midrule \midrule
		IBVS    &  21.25&57.62 & 81.53&9.11 & 66.61&39.96 & 77.37&33.38 & 94.33&8.32  & 37.06&72.17 & 63.02&\textbf{9.11}&42.29\\
		\hline
			
		SAU     &  21.53&58.20 & 80.86&96.71 & 	71.80&68.34 & 83.34&43.77 & 94.97&81.97 & 50.04&70.77 & 67.09&96.71&64.61 \\
		\hline
				
		SAM     &  \underline{22.06}&40.78 & \textbf{87.26}&\underline{26.98} & \underline{77.59}&16.81 & \textbf{86.23}&9.18 & \underline{95.90}&\underline{3.07} & 54.63&24.22  & 70.61&26.98&18.81 \\
		\hline	
		PAM     &  \textbf{25.87}&\textbf{31.97} & 86.27&29.47 & 76.13&\underline{16.08} & 84.68&\underline{8.80} & 95.66&\underline{3.75} & \underline{54.69}&\underline{24.20} & 70.55&29.47&16.96 \\
		\hline
		LFPM    &  19.30&\underline{32.38} & \underline{86.83}&\textbf{10.60} & \textbf{77.80}&\textbf{13.60} & \underline{86.04}&\textbf{6.59} & \textbf{95.96}&\textbf{1.65} & \textbf{58.46}&\textbf{17.68}  & \textbf{70.73}&\underline{10.60}&\textbf{14.38} \\
        \bottomrule
	\end{tabular}}
	\vspace{-3mm}
\end{table*} 

Across all defense methods, LFPM consistently achieves the strongest defense performance while incurring negligible degradation in clean accuracy.
In particular, for both Task Order 2 and Task Order 3, LFPM attains the lowest ASR(T) and ASR(N) for nearly all tasks. The only exception occurs under LoBAM 
for Task Order 3, where ASR(T) reaches 10.60\%, slightly above 9.11\% of IBVS. Meanwhile, the average CA remains among the highest in most cases.
Compared with the original merged model (TA row), LFPM reduces the average CA by only 1.37\% under BadMerging for Task Order 3. In other cases, LFPM preserves 
or even slightly improves CA. For instance, for Task Order 2, CA increases by 1.33\% under BadMerging and by 1.00\% under LoBAM.

Together with the analysis in Section~\ref{section:defense_performance}, these findings demonstrate that LFPM consistently strikes a favorable balance between backdoor 
robustness and model performance, regardless of the specific task sequence.

Furthermore, we analyze the evolution of ASR(T) along the parameter interpolation path under  Task Orders 2 and 3, with the results shown in Fig.~\ref{fig:robustness_along_parameter_path_for_task_order_2}
and~\ref{fig:robustness_along_parameter_path_for_task_order_3}. Consistent with the observations in Fig.~\ref{fig:robustness_along_parameter_path},
LFPM exhibits the smallest Path-ASR, \ie the smallest area under the ASR curve. These results demonstrate that LFPM provides persistent robustness across the entire 
interpolation path, regardless of the specific task sequence.
\begin{figure}[htbp]
	\graphicspath{{figures/robustness_along_path/}}
	\centering
	\begin{subfigure}[b]{0.40\textwidth}
		\centering
		\includegraphics[width=\linewidth]{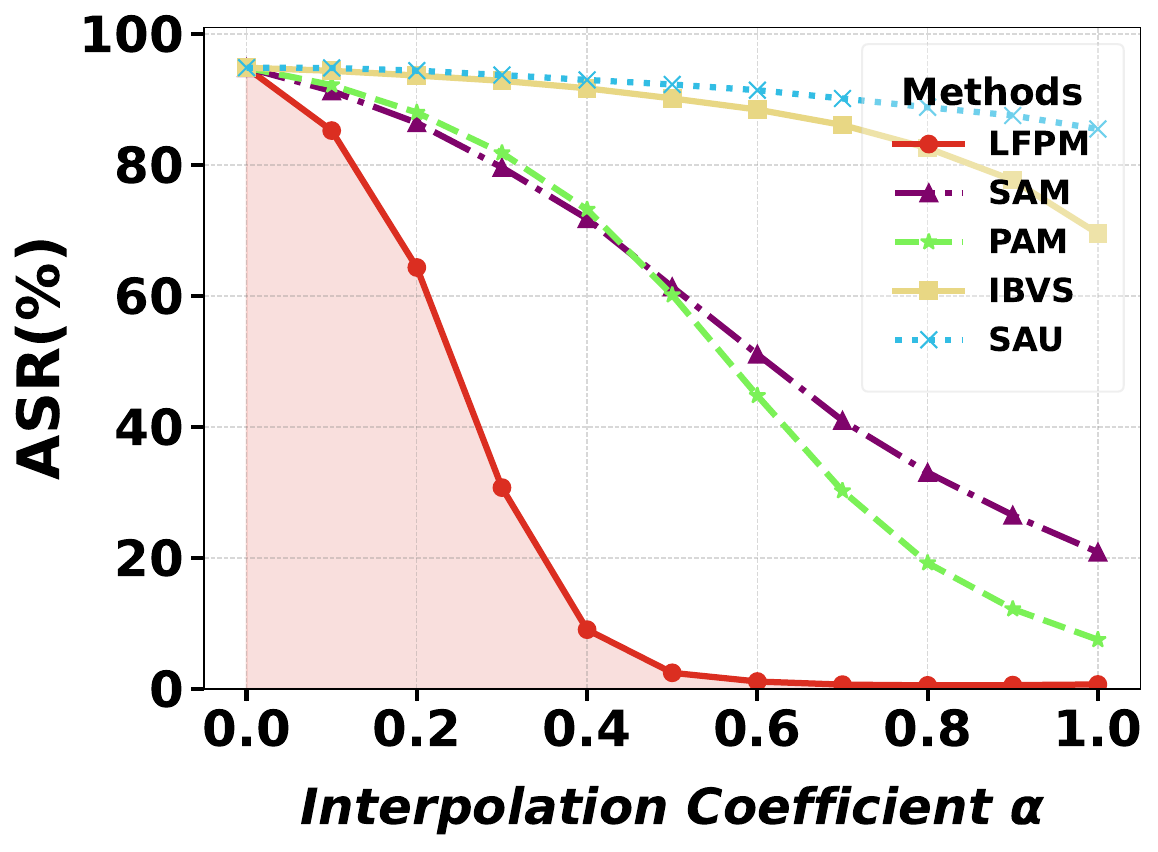}
		\caption{BadMerging}
	\end{subfigure}
	\begin{subfigure}[b]{0.40\textwidth}
		\centering
		\includegraphics[width=\linewidth]{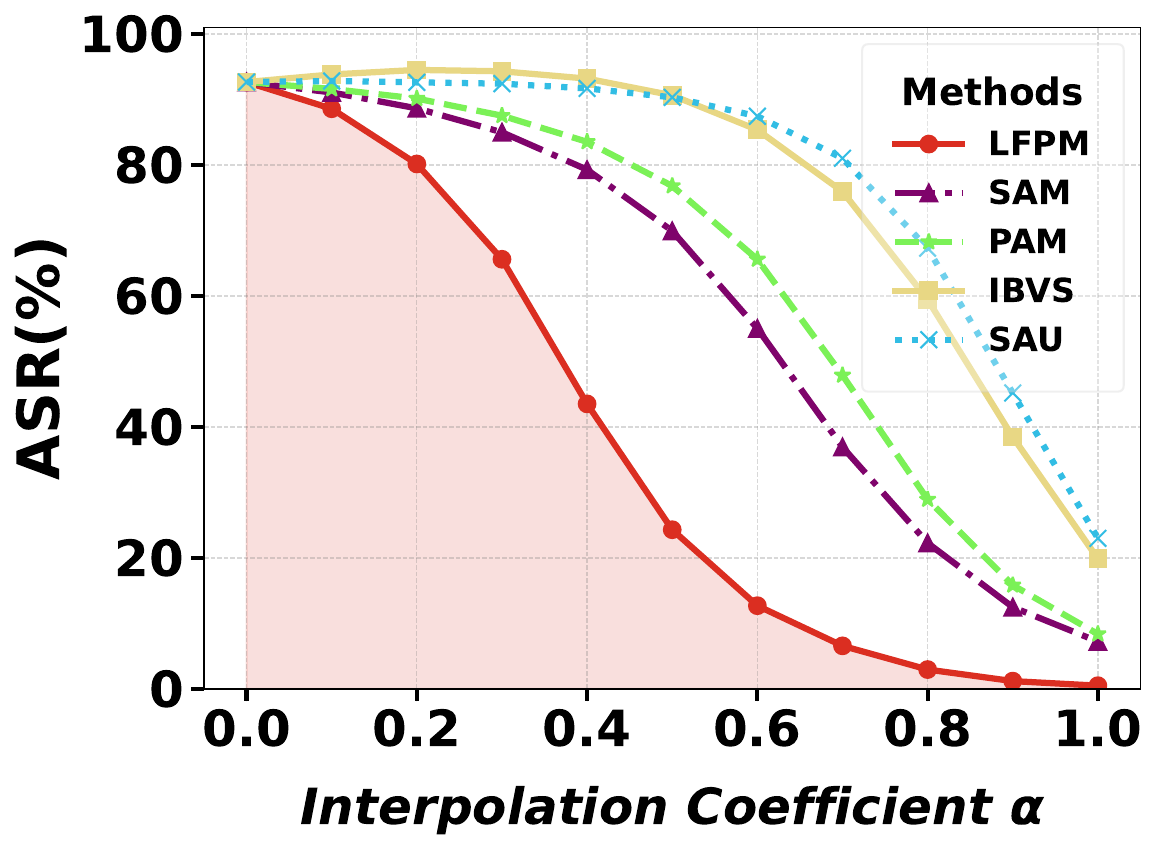}
		\caption{LoBAM}
	\end{subfigure}
	\caption{Robustness of the Merged Model along the Interpolation Path (Task Order 2)}
	\label{fig:robustness_along_parameter_path_for_task_order_2}
	\vspace{-6mm}
\end{figure}
\begin{figure}[H]
	\graphicspath{{figures/robustness_along_path/}}
	\centering
	\begin{subfigure}[b]{0.40\textwidth}
		\centering
		\includegraphics[width=\linewidth]{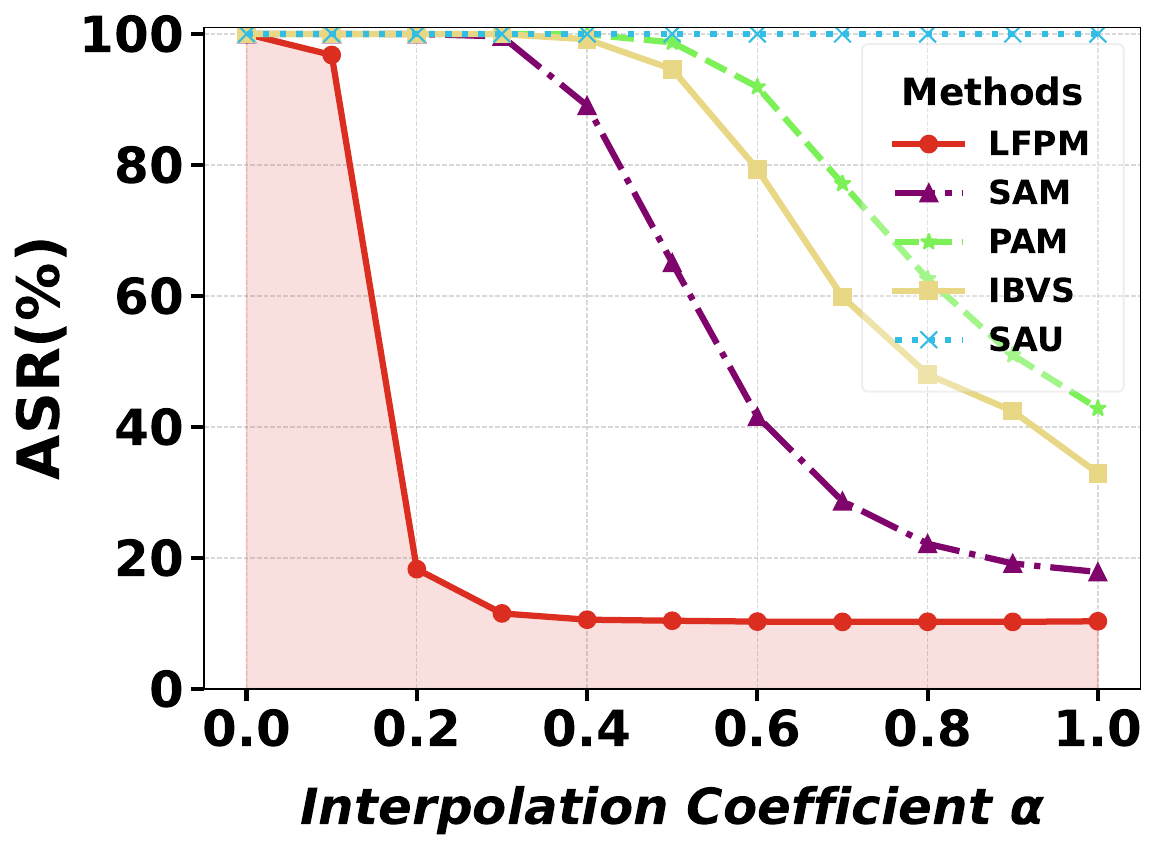}
		\caption{BadMerging}
	\end{subfigure}
	\begin{subfigure}[b]{0.40\textwidth}
		\centering
		\includegraphics[width=\linewidth]{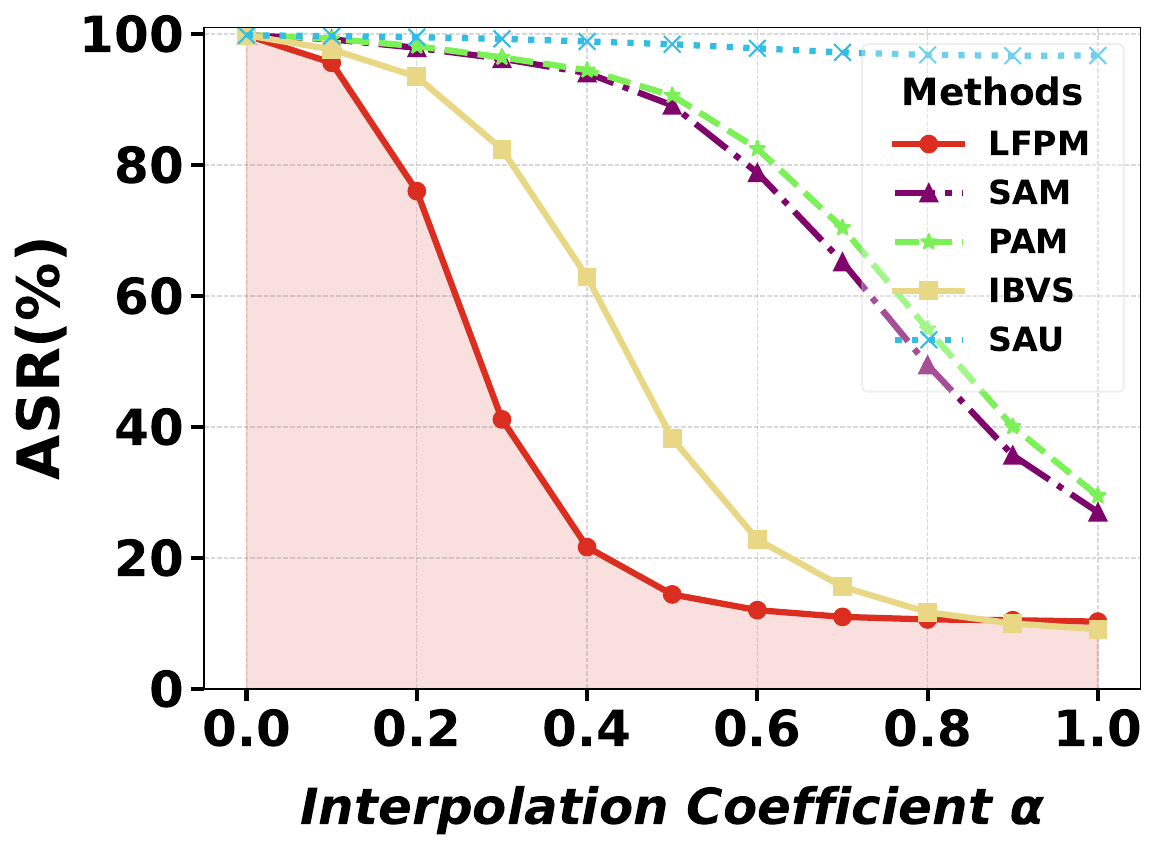}
		\caption{LoBAM}
	\end{subfigure}
	\caption{Robustness of the Merged Model along the Interpolation Path (Task Order 3)}
	\label{fig:robustness_along_parameter_path_for_task_order_3}
	\vspace{-3mm}
\end{figure}


\subsection{Scalability to Larger Architectures}
\label{appendix:scalability_to_larger_architectures}
To evaluate the scalability of LFPM, we further conduct experiments on a larger backbone, ViT-L/14. The results are reported in Table~\ref{table:backdoor_defense_on_ViT-L_14}. 
\renewcommand{\arraystretch}{1.2}
\begin{table*}[!ht]
	\centering
	\caption{Backdoor Defense Performance under BadMerging on ViT-L/14 (ACC$\uparrow$ / ASR $\downarrow$).}
	\label{table:backdoor_defense_on_ViT-L_14}
	\resizebox{0.98\textwidth}{!}{
        \begin{tabular}{c|cc|cc|cc|cc|cc|cc|ccc}
		\toprule
		Task $\to$ & \multicolumn{2}{c|}{\advtask{CIFAR100}} & \multicolumn{2}{c|}{\targettask{Cars196}} & \multicolumn{2}{c|}{SUN397} & \multicolumn{2}{c|}{EuroSAT} & \multicolumn{2}{c|}{GTSRB} & \multicolumn{2}{c|}{Pets} & \multicolumn{3}{c}{Average} \\ 
        \midrule
		Method $\downarrow$ & CA&ASR(N) & CA&ASR(T) & CA&ASR(N) & CA&ASR(N) & CA&ASR(N) & CA&ASR(N) & CA&ASR(T)&ASR(N) \\
        \midrule         
        Individual  & 89.21&31.57 &	83.82&0.38 & 78.61&8.36 & 98.59&3.61 & 98.15&2.16 & 95.20&2.45 & 90.59&0.38&9.63 \\					
		\midrule
		TA          & 91.32&17.18 & 61.13&85.01 & 63.60&63.50 & 32.88&76.01 & 50.34&83.78 & 90.62&38.32 & 64.98&85.01&55.75\\ 									
		\midrule \midrule
		IBVS        & 83.78&21.01 & \underline{55.22}&90.26 & 58.41&55.18 & 20.66&70.44 & \textbf{47.45}&68.58 & \underline{86.56}&28.12 & \underline{58.68}&90.26&48.66 \\				
		\hline
		SAU         & 75.74&14.66 & 40.51&3.25 & 55.21&17.63 & 21.25&36.18 & 37.15&42.97 & 80.07&9.51 & 51.65&3.25&24.19 \\
		\hline
		SAM         & 79.89&\underline{6.29} & 30.66&\underline{1.72} & 59.59&\underline{7.02} & 21.05&\underline{30.68} & 36.23&30.88 & 82.58&\underline{3.65}  & 51.66&\underline{1.72}&\underline{15.70} \\
		\hline
		PAM         & \underline{86.01}&14.40 & 44.93&23.57 & \underline{62.77}&34.83 & \underline{22.05}&62.87	& 42.95&53.27 & 80.92&15.37 & 56.60&23.57&36.14 \\			
		\hline
		LFPM        & \textbf{86.95}&\textbf{2.08} & \textbf{55.27}&\textbf{0.89} & \textbf{65.44}&\textbf{4.78} & \textbf{31.05}&\textbf{14.46} & \underline{43.04}&\textbf{10.56} & \textbf{89.88}&\textbf{0.98} & \textbf{61.93}&\textbf{0.89}&\textbf{6.57}  \\					
		\bottomrule
	\end{tabular}}
\end{table*}  

Overall, LFPM consistently maintains strong defense performance at larger scale, achieving effective backdoor mitigation with an ASR(T) of 0.89\% while preserving 
competitive clean accuracy (average CA of 61.93\%). Compared with smaller backbones, LFPM does not exhibit noticeable degradation in either robustness or utility, 
demonstrating stable performance across model capacities. These results indicate that LFPM is robust to backbone scaling and generalizes well across different 
model architectures.

\subsection{Empirical Validation of CTL}
\label{appendix:validation_of_CTL}
\textbf{CTL Coefficient.} 
Cross-Task Linearity (CTL) forms the theoretical foundation of our method. To empirically validate this condition, we compute the CTL coefficient along the interpolation path 
from the backdoored merged model $\vth_{m}$ to the anti-backdoor task model $\vth_{k+1}$. Following~\cite{zhou2024emergence}, we adopt the CTL coefficient as 
\begin{equation}	
	\text{ceof}(x; \theta_{\alpha}) = \frac{\left \| f(x;\theta_{\alpha})\right \| \text{cos}\left[ f(x;\theta_{\alpha}), (1-\alpha) f(x;\theta_{i} ) 
	+ \alpha f(x;\theta_{j} )\right]}{\left \| (1-\alpha) f(x;\theta_{i} ) + \alpha f(x;\theta_{j} ) \right \|}
\end{equation}
where $\theta_{\alpha} = \theta_{0}+(1-\alpha)\Delta\theta_{i}+ \alpha\Delta\theta_{j}$ denotes the interpolated model. This coefficient measures the alignment 
in both direction and magnitude between $f(x;\theta_{\alpha})$ and $(1-\alpha) f(x;\theta_{i}) + \alpha f(x;\theta_{j})$, with values closer to 1.0 indicating 
stronger adherence to the CTL condition (\ie $f(x;\theta_{\alpha}) \approx (1-\alpha) f(x;\theta_{i} ) + \alpha f(x;\theta_{j} )$ ).
\begin{figure}[h]
	\graphicspath{{figures/empirical_validation_of_CTL/}}
	\centering
	\begin{subfigure}[b]{0.40\textwidth}
		\centering
		\includegraphics[width=\linewidth]{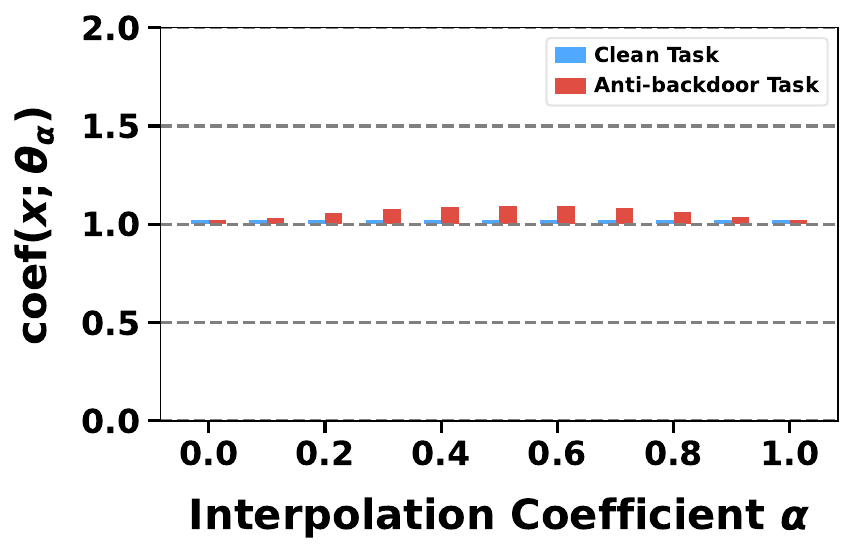}
		\caption{CTL before LFPM training}
	\end{subfigure}
	\begin{subfigure}[b]{0.40\textwidth}
		\centering
		\includegraphics[width=\linewidth]{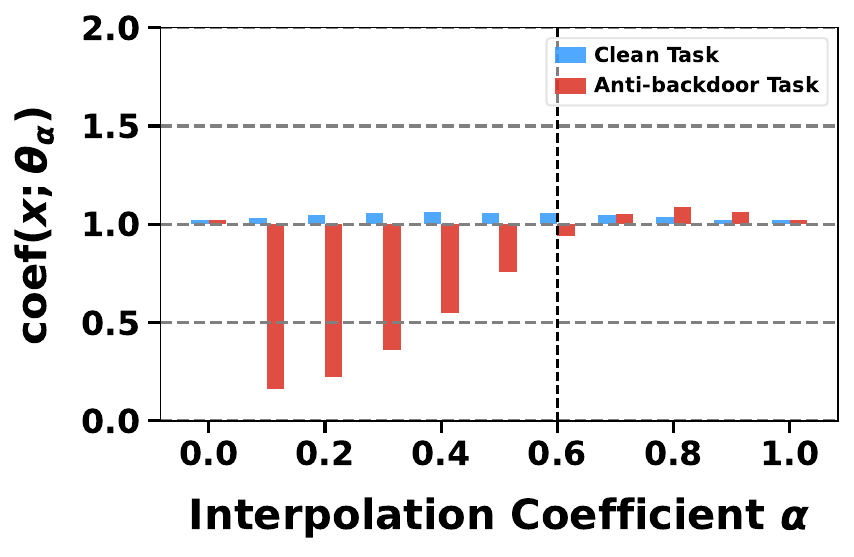}
		\caption{CTL after LFPM training}
	\end{subfigure}
	\caption{Validation of the Cross-Task Linearity Condition}
	\label{fig:ctl_validation}
\end{figure}

We evaluate the CTL condition for both the initial anti-backdoor model $\vth_{k+1}$ (obtained at Step 11 of Algorithm~\ref{alg:anti_backdoor_task_vector_optimization}) 
and the anti-backdoor model after LFPM optimization. As shown in Fig.~\ref{fig:ctl_validation}, the CTL condition holds consistently along the entire parameter path 
for the initial anti-backdoor model. In contrast, after LFPM optimization, CTL is preserved only within a restricted range (\eg $\alpha \in [0.6, 1.0]$). 

Based on the analysis in Theorem~\ref{section:stability_factors_for_ctl}, in order to mitigate the expansion of the deviation for Cross-Task Linearity, we adopt 
a \emph{Biased merging} strategy. By selecting a value of $\alpha$ closer to an endpoint, we reduce the weighting function $k_\alpha(t)$ and tighten the deviation 
bound. Accordingly, for the loss path integral in Eq.~\ref{eq:loss_path_integral}, we set $\alpha_2 = 0.9$. This value is chosen to ensure that the interpolation 
remains within the valid range of CTL, thus maintaining the theoretical consistency of LFPM.

\subsection{Curvature Analysis}
\label{appendix:curvature_analysis}
\textbf{Motivation.} To better understand the underlying principles of LFPM, we conduct a curvature analysis throughout its dynamic optimization.
This analysis aims to confirm that LFPM suppresses rapid growth of CTL deviations during training through its landscape smoothing effect.

As discussed in Section~\ref{section:stability_factors_for_ctl}, the deviation bound for the CTL condition can be quantified as
\begin{equation}
\mathcal{C}(\theta; \delta) = \left \| \delta^{\top} H(\theta(t)) \delta \right \|
\label{eq:curvature_def_appendix}
\end{equation}
where $H(\theta) = \nabla^2 f(\theta(t))$ denotes the Hessian of the neural network evaluated at the interpolated parameters state $ (1-\alpha) \theta_i + \alpha \theta_j$,
and $\delta = \theta_j - \theta_i$ represents the difference between the endpoint models.

This quantity measures the upper bound of the feature deviation $\Delta(\alpha)$, \ie $f(\vth(\alpha)) - (1-\alpha)f(\vth_{i}) - \alpha f(\vth_{j})$.
A small and stable $\mathcal{C}(\theta; \delta)$ indicates strong adherence to CTL.

\textbf{HVPs without Explicit Hessian Computation.} Direct computation of $\mathcal{C}(\theta; \delta)$ is intractable in practice, as it requires evaluating 
the Hessian $H(\theta) = \nabla^2 f(\theta(t))$, which is computationally infeasible for deep neural networks with a large number of parameters.
To overcome this challenge, we estimate it efficiently using Hessian-vector products (HVPs)~\cite{dagreou2024compute} without explicitly constructing the Hessian.
Fellowing Pearlmutter~\cite{pearlmutter1994fast}, a Hessian-vector product $\langle\nabla^{2}f(\theta), v\rangle$ can be computed as the directional derivative 
of the gradient along $v$:
\begin{equation}
\nabla^{2} f(\theta)\, v
=
\lim_{\epsilon \to 0}
\frac{1}{\epsilon}
\Big[
\nabla f(\theta + \epsilon v) - \nabla f(\theta)
\Big]
=
\nabla_{\theta}
\big(
\langle \nabla f(\theta),\, v \rangle
\big).
\end{equation}
This enables efficient computation via automatic differentiation at a small overhead relative to the standard gradient computation.
Using the HVP $H_\theta \delta$, the deviation is then obtained as
\begin{equation}
\mathcal{C}(\theta; \delta) = \langle \delta, H_\theta \delta \rangle.
\end{equation}

\textbf{Curvature Analysis Algorithm.} For each iteration of the anti-backdoor optimization, we analyze curvature along the interpolation path from 
the backdoored merged model $\theta_m$ to the anti-backdoor model $\theta_{k+1}$ using the adversarial dataset $\mathcal{D}_{\mathrm{adv}}$.
The procedure is summarized in Algorithm~\ref{algorithm:curvature_evaluation_via_HVPs}.

For each epoch, we compute the effective curvature $\lambda_{\mathrm{eff}} = \frac{\delta^{T}H\delta}{\left\| \delta\right\|^{2}}$, with the results reported in
Table~\ref{tab:effective_curvature}.
\renewcommand{\arraystretch}{1.2}
\begin{table}[H]
\centering
\caption{Evolution of Curvature $\lambda_{\text{eff}}$ throughout Training}
\label{tab:effective_curvature}
\begin{tabular}{c|c|c|c|c|c|c}
    \toprule
    \makecell{Epoch} & 0 & 1 & 2 & 3 & 4 & 5 \\ 
    \midrule
    $\delta^{T} H \delta$ & 1.81 & 1.40 & 2.32 & 2.21 & 1.75 & 1.53 \\
    \hline
    $\|\delta\|^2$ & 0.33 & 1.38 & 2.89 & 3.72 & 5.18 & 6.85 \\
    \hline
    $\lambda_{\text{eff}}$ & 5.47 & 1.01 & 0.80 & 0.59 & 0.34 & 0.22 \\
    \bottomrule
\end{tabular}
\end{table}

From Table~\ref{tab:effective_curvature}, we observe that although the parameter distance $\left\| \delta\right\|^{2}$ enlarges nearly 20-fold (6.85/0.33), 
the CTL deviation remains tightly bounded. This stability is attributed to the smoothing effect of LFPM, which progressively reduces the effective curvature $\lambda_{\mathrm{eff}}$ 
from $5.47$ to $0.22$. These results empirically validate our theoretical analysis and provide insight into the role of curvature smoothing in the robustness of LFPM.

\subsection{Computational Cost Comparison}
\label{appendix:computational_cost_comparison}
We evaluate the computational efficiency of LFPM by comparing its runtime and GPU memory against representative backdoor defense methods. 
\renewcommand{\arraystretch}{1.2}
\begin{table}[H]
\begin{flushright}
\centering
\caption{Computational Cost Comparison of Backdoor Defense Methods}
\label{tab:computational_cost_comparison}
\begin{tabular}{c|c|c|c|c|c}
    \toprule
    \makecell{Method} & IBVS & SAU & SAM & PAM & LFPM \\ 
    \midrule
    Total Time (s) & 13356.6 & 1319.6 & 1085.7 & 845.70 & 1152.5 \\
    \hline
    GPU Peak (MB)  & 5079.1 & 8588.6 & 4457.2 & 5125.0 & 10858.8\\
    \bottomrule
\end{tabular}
\end{flushright}
\vspace{-3mm}
\end{table}

As shown in Table~\ref{tab:computational_cost_comparison}, LFPM achieves a favorable trade-off between efficiency and effectiveness. It is slightly less efficient 
than lightweight SAM-based methods (\eg SAM and PAM), but significantly more efficient than SAU and substantially faster than task-vector-based approaches such as IBVS,
which incur high computational cost due to the requirement of both clean and backdoored models to construct task vectors. 
Overall, LFPM operates on a single merged model and thus the number of task-specific models does not introduce additional runtime overhead, making its computational 
cost moderate and scalable compared with existing defense paradigms.

In terms of GPU memory consumption, LFPM requires higher memory than SAM and PAM due to the use of a full merged model during optimization. However, this 
characteristic is also shared by methods such as SAU. Importantly, such overhead can be effectively mitigated using standard techniques, including mixed-precision 
training and parameter-efficient fine-tuning (\eg LoRA), which reduce trainable parameters while keeping most of the model frozen. These techniques are orthogonal 
to our method and do not affect its applicability or effectiveness. We leave further memory optimization as future work.

\subsection{Efficiency Analysis}
\label{subsection:efficiency_analysis}
Based on the principles of Sharpness-Aware Minimization (SAM), the proposed LFPM algorithm introduces several novel mechanisms: \textbf{C1:} Worst-case Perturbations 
in Prompt Space; \textbf{C2:} Gradient Accumulation Approximation; \textbf{C3:} Loss Path-Integral. To evaluate the additional time and memory overhead 
introduced by these mechanisms, we compare LFPM with the standard SAM algorithm~\cite{foret2020sharpness}. Specifically, both methods are tested under BadMerging 
with full fine-tuning setting, conducted on four NVIDIA RTX 3090 GPUs for 5 epochs, as detailed in the experimental configurations in Appendix~\ref{appendix:experiment_settings}. 
The resulting experimental results are summarized in Table~\ref{tab:efficiency_analysis}.
\renewcommand{\arraystretch}{1.2}
\begin{table}[H]
\begin{flushright}
\centering
\caption{The Results of Efficiency Analysis for LFPM}
\label{tab:efficiency_analysis}
\begin{tabular}{c|c|c|c}
    \toprule
    \makecell{Method} & Time per Epoch (s) & Total Time (s) & GPU Peak (MB)  \\ 
    \midrule
    SAM & 121.5 & 610.5 & 4457.2  \\
    \hline
    C1  & 127.9 & 642.0 & 10848.2 \\
    \hline
    C1 + C2 & 128.3 & 645.6 & 10863.1  \\
    \hline
    C1 + C2 + C3 & 136.3 & 677.3 & 10858.8 \\
    \bottomrule
\end{tabular}
\vspace{-3mm}
\end{flushright}
\end{table}

For each method, we report the time cost per training epoch (using randomly selected single-epoch metrics, rather than average values), the total time cost, and 
the peak memory usage across all GPUs, which serves as an indicator of memory bottlenecks. From the Table~\ref{tab:efficiency_analysis}, we observe that for 
condition C1 (using only Worst-case Perturbations in Prompt Space), the time cost is comparable to that of the standard SAM (slightly higher than SAM). 
However, LFPM (C1 + C2 + C3) incurs a slight increase in time cost, specifically an additional time cost of approximately 15 seconds per epoch, resulting in 
a total increase of 66.8 seconds. This increase is primarily due to the introduction of the Gradient Accumulation Approximation and Loss Path-Integral mechanisms. 
In terms of memory usage, LFPM requires approximately twice the memory of standard SAM, primarily due to the inclusion of the backdoored merged model as a 
reference model.

LFPM is designed to maintain consistent robustness along the interpolation path, rather than optimizing for algorithmic efficiency. Given the practical challenges 
posed by backdoor attacks in model merging, we consider these additional time and memory costs to be justified. In future work, we will focus on improving 
the efficiency of the LFPM algorithm, such as by employing caching mechanisms to avoid redundant computations and storage of intermediate results.

\subsection{Ablation Studies}
\label{appendix:ablation_studies}
In this section, we conduct ablation studies to examine the contributions of individual components within the LFPM framework. As described in Section~\ref{section:methodology}, 
LFPM consists of two sequential stages: \textit{Adversarial Feature Extraction} and \textit{Anti-backdoor Task Vector Optimization}. The first stage uses 
\textbf{Feature Subspace Partitioning} mechanism to extract adversarial features by disentangling adversarial features from clean semantic features, thereby 
preserving the model's performance on clean tasks. The second stage incorporates several complementary mechanisms, including \textbf{Worst-case Prompt-space Perturbations} 
to enhance robustness of the merged model, as well as \textbf{Gradient Accumulation Approximation} and \textbf{Loss Path-Integral} optimization to ensure effective 
backdoor mitigation along the entire interpolation path.

For clarity, we denote these components as:\textbf{(A)} Feature Subspace Partitioning; \textbf{(B)} Worst-case perturbations in prompt space; \textbf{(C)} 
Gradient Accumulation Approximation and Loss Path-Integral optimization. Under this notation, the complete LFPM framework corresponds to the combination \textbf{A + B + C}. 
To evaluate the impact of each component, we examine four configurations: \textbf{B + C}, \textbf{A + C}, \textbf{A + B}, and the full method \textbf{A + B + C}. 
The experimental results are summarized in Table~\ref{table:ablation_studies}.
\renewcommand{\arraystretch}{1.2}
\begin{table*}[htbp]
	\centering
	\caption{Results of the Ablation Study on LFPM (ACC$\uparrow$ / ASR $\downarrow$).}
	\label{table:ablation_studies}
	\resizebox{0.98\textwidth}{!}{
        \begin{tabular}{c|cc|cc|cc|cc|cc|cc|ccc}
		\toprule
		Task $\to$ & \multicolumn{2}{c|}{\advtask{CIFAR100}} & \multicolumn{2}{c|}{\targettask{Cars196}} & \multicolumn{2}{c|}{SUN397} & \multicolumn{2}{c|}{EuroSAT} & \multicolumn{2}{c|}{GTSRB} & \multicolumn{2}{c|}{Pets} & \multicolumn{3}{c}{Average} \\ 
        \midrule
		Method $\downarrow$ & CA&ASR(N) & CA&ASR(T) & CA&ASR(N) & CA&ASR(N) & CA&ASR(N) & CA&ASR(N) & CA&ASR(T)&ASR(N) \\
        \midrule         
        Individual  & 83.56&75.18 &	64.05&79.31 & 71.53&56.37 & 97.83&30.61 & 94.27&49.46 & 88.11&42.60 &  83.22&79.31&55.58 \\
		\midrule
		B + C   &  65.00&14.11 & 54.03&3.86 & 61.07&16.37 & 38.20&30.37 & 30.78&34.18 & 82.16&6.92 &  55.20&3.86&20.39 \\
		\hline
		A + C     &  70.18&13.31 & 54.28&1.44 & 63.72&16.62 & 36.40&36.14  & 34.27&30.19 & 84.65&9.37 &  57.25&1.44&20.12\\
		\hline
		A + B     &  70.06&28.58 & 55.59&39.93 & 65.34&39.09 &  54.48&42.11 & 39.90&57.03  & 84.38&18.83 & 61.62&39.93&37.12 \\
		\hline
		A + B + C  &  68.96&10.13 & 55.76&1.13 & 63.32&8.28 & 45.03&28.24 & 34.82&23.33 & 84.35&4.19 & 58.70&1.13&14.83 \\
        \bottomrule
	\end{tabular}}
\end{table*} 

From Table~\ref{table:ablation_studies}, we observe that the configuration ``B + C'' achieves a backdoor mitigation performance comparable to that of ``A + B + C''.
However, it incurs a clear degradation in clean accuracy, with the average CA dropping by 3.5\% compared to ``A + B + C'' (55.20\% vs. 58.70\%).
This result highlights the critical role of the Feature Subspace Partitioning mechanism in preserving clean-task performance.

Furthermore, both ``A + C'' and ``A + B'' demonstrate a certain degree of backdoor mitigation. Compared with ``A + B'', the ``A + C'' configuration achieves substantially 
lower ASR , reducing ASR(T) from 39.93\% to 1.44\% and the average ASR(N) from 37.12\% to 20.12\%, at the cost of a moderate reduction in the average CA ( 57.25\% vs 61.62\%).
These results suggest that both the Worst-case Prompt-space Perturbations mechanism and the optimization strategy based on Gradient Accumulation Approximation
and Loss Path-Integral contribute to mitigating backdoor behaviors. Nevertheless, the latter provides stronger backdoor suppression at the expense of clean-task performance.

Overall, only the ``A + B + C'' configuration, corresponding to the complete LFPM framework, achieves a favorable balance between effective backdoor mitigation 
and clean-task performance preservation.

\end{document}